\newtheorem{dn}{Definition}[section]
\newtheorem{cy}[dn]{Note}
\newtheorem{dl}[dn]{Theorem}
\newtheorem{md}[dn]{Proposition}
\newtheorem{bd}[dn]{Lemma}
\newtheorem{assump}[dn]{Assumption}
\newcolumntype{H}{>{\setbox0=\hbox\bgroup}c<{\egroup}@{}}
\newcommand*{\indep}{%
	\mathbin{%
		\mathpalette{\@indep}{}%
	}%
}
\newcommand*{\nindep}{%
	\mathbin{
		\mathpalette{\@indep}{\not}
	}%
}
\newcommand*{\@indep}[2]{%
	\sbox0{$#1\perp\m@th$}
	\sbox2{$#1=$}
	\sbox4{$#1\vcenter{}$}
	\rlap{\copy0}
	\dimen@=\dimexpr\ht2-\ht4-.2pt\relax
	\kern\dimen@
	{#2}%
	\kern\dimen@
	\copy0 
} 
\begin{document}
	
	\title{Structure learning of undirected graphical models for count data}
	
	\author{\name Nguyen Thi Kim Hue \email nguyen@stat.unipd.it \\
		\addr Department of Statistical Sciences\\
		University of Padova\\
		Via C. Battisti, 241 - 35121 Padova, Italy
		\AND
		\name Monica Chiogna \email monica.chiogna2@unibo.it \\
		\addr Department of Statistical Sciences ``Paolo Fortunati"\\
		University of Bologna\\
		Via Belle Arti 41 - 40126 Bologna, Italy}
	
	\editor{~}
	
	\maketitle
	
	\begin{abstract}
		{\color{black} Mainly motivated by the problem of modelling biological processes underlying the basic functions of a cell -that typically involve complex  interactions between genes- we present a new algorithm, called PC-LPGM,  for learning  the structure of    undirected graphical models over discrete variables. 
			We prove theoretical consistency of PC-LPGM in the limit of infinite observations and discuss its robustness to model misspecification. 
			To evaluate the performance of PC-LPGM in  recovering the true structure of the graphs in situations where relatively moderate  sample sizes are available,  extensive simulation studies are conducted, that also allow  to compare our proposal with its main competitors. A biological validation of the algorithm is presented through the analysis of two real data sets.}
	\end{abstract}
	
	\begin{keywords}
		Graphical models, Undirected graphs, Structure learning, Sparsity, Conditional independence tests.
	\end{keywords}
	
	\section{Introduction}
	
	Current demand for modelling complex interactions between genes, combined with the greater availability of high-dimensional discrete data, possibly showing a large number of zeros and measured on a small number of units, has led to an increased focus on structure learning for discrete data in high dimensional settings.
	
	Various solutions are nowadays available in the literature for learning (sparse) graphical models for discrete data. \citet{hofling2009estimation} consider the problem of estimating the parameters as well as the structure of binary-valued Markov networks; \citet{ravikumar2010} consider the problem of estimating the graph associated with a binary Ising Markov random field; \citet{jalali2011learning} consider learning general discrete graphical models, where each variable can take a multiplicity of possible values, and factors can be of order higher than two and \cite{allen2013local} consider learning graphical models for Poisson counts. To deal with high dimensionality,  most methods resort on penalization, which simultaneously performs parameter estimation and model selection.

	In this paper, we concentrate on count data and introduce a simple algorithm for structure learning of undirected graphical models, called PC-LPGM, particularly useful when sparse graphs are under consideration. The algorithm stems from the conditional approach of   {\cite{allen2013local}}, where the neighbourhood of each node is estimated in turn by solving a lasso penalized regression problem and the resulting local structures stitched together to form the global graph.  We propose to substitute penalized estimation with a testing procedure on the parameters of the local regressions following the lines of the  PC algorithm, see \cite{spirtes2000causation}. This solution is particularly attractive, since it inherits the potential of the PC algorithm  to estimate a sparse graph even if $p,$ the number of variables, is in the hundreds or thousands.  
	
	We give a theoretical proof of convergence of PC-LPGM that shows  the proposed algorithm consistently estimates the edges of the underlying (sparse) undirected graph, as the sample size $n\rightarrow \infty.$ For such proof to be developed, a joint distribution must exist, a condition which might be questionable when relying on a conditional model specification such as the one behind a neighbourhood approach. If one assumes that each variable conditioned on all other variables follows a Poisson distribution, for example,  a unique  joint distribution compatible with the given conditionals exists provided that conditional dependencies are all negative. As this condition,  known as ``competitive relationship" among variables,  highly limits attractiveness of such specification in applications, we have chosen to develop statistical guarantees for PC-LPGM under the assumption that conditional distributions follow a truncated Poisson law. Such choice  admits  dependencies richer than those under competitive relationship; see, however, \cite{yang2013poisson} for a discussion about its limitations. For the truncated Poisson model, under mild assumptions on the expected Fisher information matrix,  and fixing the truncation point $R>0$, convergence is guaranteed for $n\ge O_p(d^3\log p),$  where $d$ is the maximum neighbourhood size. 
	
	To explore whether  it is reasonable to extend the desirable properties of PC-LPGM to the case of conditional Poisson distributions with unrestricted conditional dependencies, a theoretical study and  extensive simulations studies are conducted to  evaluate statistical properties of the algorithm in such cases.
	
	The paper is organized as follows. After reviewing some essential concepts on undirected graphical models and truncated Poisson models in Section \ref{reviewPois}, we introduce  PC-LPGM algorithm   in  Section  \ref{proposedmeth}.  We  then  provide statistical guarantees in Section \ref{statistical}.  A discussion on { consistency of the algorithm is offered in Section \ref{robustness}, with special focus on the model specification, the truncated Poisson distribution, and on properties of the algorithm in the setting of conditional Poisson distributions with unrestricted conditional dependencies. This section also explores performances of the algorithm relative to various alternatives. An excursion into   the intuitive advantages of the learning strategy adopted by PC-LPGM is taken in Section \ref{discussion}.
		A validation of the algorithm on two real cases is given in Section \ref{realanalysis}.  Some  concluding remarks are presented in Section \ref{conclusions}. }
	
	\section{A quick review on truncated Poisson undirected graphical models}\label{reviewPois}
	
	In this section, we review some essential concepts on undirected graphical models and introduce  truncated Poisson undirected graphical models. 
	
	Consider a $p$-dimensional random vector $\bold{X}=(X_1,\ldots,X_p)$ such that each random variable $X_s$ corresponds to a node of a graph $G=(V,E)$ with index set $V=\{1,2,\ldots,p\}$.  An edge between two nodes $s$ and $t$ will be denoted by $(s,t)$. The neighbourhood of a node $s\in V$ is defined to be the set $N(s)=\{t\in V:~ (s,t)\in E\}$ consisting of all nodes connected to $s$.
	The random vector $\bold{X}$ satisfies the pairwise Markov property with respect to $G$ if 
	$$X_s\indep X_t|\mathbf{X}_{V\backslash \{s,t\}}$$
	whenever $(s,t) \notin E.$ When all variables $X_s, s\in V,$ are discrete with positive joint probabilities, as in the case  under consideration, the pairwise Markov property coincides with the local and global Markov property, according to which, respectively, 
	$$X_s\indep \bold{X}_{V\backslash\{N(s)\cup \{s\}\}}|\bold{x}_{N(s)}$$
	for every $s\in V,$ and
	$$\bold{X}_A\indep \bold{X}_B|\bold{x}_C,$$
	for any triple of pairwise disjoint subsets $A,B,C\subset V$ such that $C$ separates $A$ and $B$ in $G$, that is, every path between a node in $A$ and a node in $B$ contains a node in $C.$

	To specify a probabilistic model for $\mathbf{X}$, we  take a conditional approach (see also \cite{arnold2012conditionally}). 	Assume that each conditional distribution of node $X_s$ given other variables $\bold{X}_{V\backslash\{s\}}$ follows a  Poisson distribution truncated at $R$, $R>0,$  {written as} $X_s|\bold{x}_{V\backslash\{s\}}\sim \text{TP}(\exp\{\theta_s+\sum_{t\ne s}\theta_{st}x_t\}),$ 
	with node conditional distribution
	\begin{eqnarray}\label{dijoinprob}
	\mathbb{P}(x_s|\bold{x}_{V\backslash \{s\}})&=&\dfrac{\exp\big\{\theta_sx_s+\sum_{t\ne s}\theta_{st}x_tx_s-\log x_s!\big\}}{\sum_{k=0}^R\exp\big\{\theta_sk+k\sum_{t\ne s}\theta_{st}x_t-\log k!\big\}}\nonumber\\
	&=&\dfrac{\exp\big\{\theta_sx_s+x_s\langle \boldsymbol{\theta}_s,\bold{x}_{V\backslash\{s\}}\rangle-\log x_s!\big\}}{\sum_{k=0}^R\exp\big\{\theta_sk+k\langle \boldsymbol{\theta}_s,\bold{x}_{V\backslash\{s\}}\rangle-\log k!\big\}}\nonumber\\
	&=&\exp\big\{\theta_sx_s+x_s\langle \boldsymbol{\theta}_s,\bold{x}_{V\backslash\{s\}}\rangle-\log x_s!-D(\langle \boldsymbol{\theta}_s,\bold{x}_{V\backslash\{s\}}\rangle)\big\},
	\end{eqnarray}
	where  $\boldsymbol{\theta}_s=\{\theta_{st},~ t\in V, t\ne s\}$ denotes the set of conditional dependence parameters, $\langle .,.\rangle$  denotes the inner product, and $D(\langle \boldsymbol{\theta}_s,\bold{x}_{V\backslash\{s\}}\rangle)=\log\big(\sum_{k=0}^R\exp\big\{\theta_sk+k\langle \boldsymbol{\theta}_s,\bold{x}_{V\backslash\{s\}}\rangle-\log k!\big\}\big)$.
	
	An application of Proposition 1 in \cite{yang2015graphical} shows that a valid joint probability distribution function from the above given set of specified conditional distributions can be constructed. By Assumption 1 and Assumption 2 in Section 4.1 of \cite{besag1974spatial},	such distribution defines an undirected graph $G=(V,E)$ in which a missing edge between node $s$ and node $t$ corresponds to the condition $\theta_{st}=\theta_{ts}= 0.$ On the other side, one edge between node $s$ and node $t$ implies $\theta_{st}\equiv\theta_{ts}\ne 0.$  
	
	The existence of a joint distribution suggests that  the structure of the network might be recovered from observed data within a likelihood approach by mean of a set of statistical tests. Indeed, in an undirected graphical model, the pairwise Markov property infers a collection of full conditional independences encoded in absent edges. For this reason, performing $\binom{|V|}{2}$ pairwise full conditional independence tests yields a method to estimate the graph $G$. 
	However, such an approach might be impractical even for modestly sized graphs.  The existence of the maximum likelihood estimates is, in general, not guaranteed  if the number of observations is small, the basic problem being that the number of parameters in {$\boldsymbol{\theta}=\{\boldsymbol{\theta}_s, \,\, s\in V\}$} is of the order $p^2$.  Hence, the sample size is often not large enough to obtain a good estimator. Moreover, it requires computing complex normalization constants and  combinatorial searches through the space of graph structures.  For this reason, in what follows, we will exploit the local Markov property, according to which every variable is conditionally independent of the remaining ones given its neighbours. This property suggests that each variable $X_s, s\in V$ can be optimally predicted from its neighbour $\bold{X}_{N(s)}.$

	\section{The PC-LPGM algorithm}\label{proposedmeth}
	
	\noindent
	We will work within the neighbourhood selection approach. The analysis of this setting is related to the concept of pseudo-likelihood,
	$$PL(\boldsymbol{\theta}) = \prod_{s \in V} \mathbb{P}(x_s|\bold{x}_{V\backslash \{s\}}),$$
	where $\mathbb{P}(x_s|\bold{x}_{V\backslash \{s\}})$ is the distribution function of each node conditional distribution. Standard model specifications treat different conditional distributions $\mathbb{P}(x_s|\bold{x}_{V\backslash \{s\}})$  as unrelated.  {In other words, the} symmetry of interaction parameters $\theta_{st}$ and $\theta_{ts}$ is usually not explicitly taken  into account (see, however, \cite{peng2009partial} for a solution that takes the natural symmetry of coefficients into account in the Gaussian setting).

	In this setting, structure learning usually proceeds by disjointly maximizing the single factors in $PL(\boldsymbol{\theta}).$ In high-dimensional sparse settings, many up-to-date algorithms are based on solving local convex optimization problems, typically formed by the sum of a loss function, such as the local negative log likelihood, with a sparsity inducing penalty function. Each local penalized estimate ${\boldsymbol{\hat\theta}_s}$ is then combined into a single non-degenerate global estimate, possibly employing consensus operators aimed at solving inconsistencies with respect to parameters shared between factors \citep[see, for example,][]{mizrahi2014distributed}. 
	From empirical studies, it is in most cases easy to check that such algorithms converge, sometimes also reasonably quickly thanks to the possibility of distributing the various maximization tasks. However, it is not immediately clear if convergence can be established theoretically, so that it cannot be given for granted that such algorithms ultimately yield correct graphs.
	
	Our proposal, called PC-LPGM,  is a pseudo-likelihood based algorithm that stems from  current neighbourhood selection methods for count data  \citep[see][]{allen2013local}, but substitutes penalization with hypothesis testing.  { In Section 5, we will pause on the relative merits of hypothesis testing with respect to penalization. But prodromic to such discussion is the proof, developed in Section 4, that  the  sequence  of  tests  does  indeed  converge  to  the true structure in the limit of infinite observations, regardless of the dimension of the problem.}
	
	We consider the same model specification as in (\ref{dijoinprob}). In detail, we assume that each node conditional distribution follows a truncated Poisson distribution. As we are only interested in the structure of graph $G$, without loss of generality we can assume  $\theta_s=0,~s\in V.$ 
	In line with the most common solutions, we also treat the conditional distributions $\mathbb{P}_{\boldsymbol{\theta_s}}(x_s|\bold{x}_{V\backslash \{s\}})$ as unrelated.

	In PC-LPGM, neighbours are identified by mean of conditional independence tests built from the conditional models and aimed at  identifying the set of non-zero conditional dependence parameters.  Tests are based on Wald-type  statistics built on exploiting the asymptotic normality of the local maximum likelihood estimators.  
	To face the high computational complexity related to the testing procedure, we employ the PC algorithm, which relies on controlling the number of variables in the conditional sets, a strategy particularly effective when sparse graphs are under consideration.
	
	In what follows, let $\bold{X}^{(1)},\ldots,\bold{X}^{(n)}$ be $n$ independent p-random vectors drawn from $\bold{X}$, where $\bold{X}^{(i)}= (X_{i1},\ldots,X_{ip})$; and $\mathbb{X}=\{\bold{x}^{(1)},\ldots,\bold{x}^{(n)}\}$ be the collection of $n$ samples drawn from the random vectors $\bold{X}^{(1)},\ldots,\bold{X}^{(n)}$, with $\bold{x}^{(i)}=(x_{i1},\ldots,x_{ip}),\quad i=1,\ldots,n$. For each $U\subset V$, let $\mathbb{X}_U$ be the set of $n$ samples of the $|U|$-random vector $\bold{X}_U=(X_i:~i\in U)$, with $\bold{x}_U^{(i)}=(x_{ij})_{j\in U},~ i=1,\ldots,n$. 
	Starting from the complete graph, for each $s$ and $t\in V\backslash\{s\}$ and  for any set of variables $\bold{S}\subset \{1,\ldots,p\}\backslash \{s,t\}$, we test,  at some pre-specified significance level, the null hypothesis $H_0: {\theta}_{st|\bold{K}}=0$, with $\bold{K}=\bold{S}\cup\{s,t\}$. In other words, we test if data support existence of the conditional independence relation $X_s\indep X_t|\bold{X}_{\bold{S}}$. If the null hypothesis is not rejected, the edge $(s,t)$ is considered to be absent from the graph. A control is operated on the cardinality of the set $\bold{S}$ of conditioning variables, which is progressively increased from 0 to $p-2$ or to $m, \,\, m<(p-2)$.

	Assume 
	\begin{equation}\label{linktoOr-PPGM}
	X_s| {\bold{x}_{\bold{K}\backslash\{s\}}}\sim \text{TP}\big(\exp\big\{\sum_{t\in \bold{K}\backslash\{s\}}\theta_{st|\bold{K}}x_t\big\}\big),\quad \forall s\in \bold{K}\subset \{1,\ldots,p\},
	\end{equation}
	and denote $\boldsymbol{\theta}_{s|\bold{K}}= \{\theta_{st|\bold{K}}:~ t\in \bold{K}\backslash\{s\}\}.$ A rescaled negative node conditional log-likelihood given the conditioning variables $\bold{X}_{\bold{K}\backslash\{s\}}=(X_k:~ k\in \bold{K}\backslash\{s\})$ can be written as 	
	
	\begin{eqnarray}\label{smallloglikelihood}
	l(\boldsymbol{\theta}_{s|\bold{K}}, \, \mathbb{X}_{\{s\}} \, ;  {\mathbb{X}_{\bold{K}\backslash\{s\}}}) &=& -\frac{1}{n}\log \prod_{i=1}^{n}\mathbb{P}_{\boldsymbol{\theta}_{s|\bold{K}}}(x_{is}| {\bold{x}^{(i)}_{\bold{K}\backslash\{s\}}})\\
	&=&\frac{1}{n}\sum_{i=1}^{n}\left[-x_{is}\langle\boldsymbol{\theta}_{s|\bold{K}}, {\bold{x}^{(i)}_{\bold{K}\backslash\{s\}}}\rangle+\log x_{is}! +D(\langle\boldsymbol{\theta}_{s|\bold{K}}, {\bold{x}^{(i)}_{\bold{K}\backslash\{s\}}}\rangle)\right],\nonumber
	\end{eqnarray} 
	where  the scaling factor is taken for later mathematical convenience.
	The estimate $\hat{\boldsymbol{\theta}}_{s|\bold{K}}$ of the parameter $\boldsymbol{\theta}_{s|\bold{K}}$  {is determined  by minimizing the  rescaled negative conditional log-likelihood given in Equation \eqref{smallloglikelihood}}, i.e.,
	\begin{equation*}
	\hat{\boldsymbol{\theta}}_{s|\bold{K}} = \text{argmin}_{\boldsymbol{\theta}_{s|\bold{K} }\in \mathbb{R}^{|\bold{K}|-1} }	\,\,l(\boldsymbol{\theta}_{s|\bold{K}}, \, \mathbb{X}_{\{s\}} \, ;  {\mathbb{X}_{\bold{K}\backslash\{s\}}}).
	\end{equation*}
	
	A Wald-type test statistic for the hypothesis $H_0: {\theta}_{st|\bold{K}}=0$  can be obtained from asymptotic normality of $\hat{\boldsymbol{\theta}}_{s|\bold{K}}$, 
	$$\sqrt{n} (\hat{\boldsymbol{\theta}}_{s|\bold{K}} - {\boldsymbol{\theta}}_{s|\bold{K}}) \xrightarrow{\text{ d }} N(\boldsymbol{0}, I({\boldsymbol{\theta}}_{s|\bold{K}})^{-1}), $$
	where $I({\boldsymbol{\theta}}_{s|\bold{K}})$ denotes the expected Fisher information matrix,
	$$I({\boldsymbol{\theta}}_{s|\bold{K}}) = {\mathbb{E}_{\boldsymbol{\theta}_s}}
	\left[n\dfrac{\partial ^2l(\boldsymbol{\theta}_{s|\bold{K}}, X_s \, ;  {\mathbb{X}_{\bold{K}\backslash\{s\}}})}
	{\partial^2\boldsymbol{\theta}_{s|\bold{K}}}\right],$$
	which holds under fairly general regularity conditions. The test statistic for the null hypothesis  $H_0: {\theta}_{st|\bold{K}}=0$ can be obtained on exploiting the marginal asymptotic normality of the component  ${\hat\theta}_{st|\bold{K}}.$

	In practice, the observed information $J({\boldsymbol{\theta}}_{s|\bold{K}})= n \dfrac{\partial ^2l(\boldsymbol{\theta}_{s|\bold{K}}, \mathbb{X}_{\{s\}} ;  {\mathbb{X}_{\bold{K}\backslash\{s\}}})}{\partial^2\boldsymbol{\theta}_{s|\bold{K}}}$, that is,  the second derivative of the negative log-likelihood function,   is more conveniently used evaluated at $\hat{\boldsymbol{\theta}}_{s|\bold{K}}$ as variance estimate  of maximum likelihood quantities instead of the expected Fisher information matrix, a modification which comes from the use of an appropriately conditioned sampling distribution for the maximum likelihood estimators. Following this line, the test statistic for the hypothesis $H_0: {\theta}_{st|\bold{K}}=0$ is  given by 
	\vspace{-.2em}
	\begin{equation}\label{statisticz}
	Z_{st|\bold{K}}=\displaystyle\dfrac{\sqrt{n}\hat{\theta}_{st|\bold{K}}}{\sqrt{ \left[ J(\hat{\boldsymbol{\theta}}_{s|\bold{K}})^{-1}\right]_{tt}}},
	\vspace{-.18em} 
	\end{equation} 
	where $\left[A\right]_{jj}$ denotes the element in position $(j,j)$ of matrix $A.$ It is readily available that $Z_{st|\bold{K}}$ is asymptotically standard normally distributed under the null hypothesis, provided that some general regularity conditions hold \citep[page 185]{lehmann1986testing}. Possible inconsistencies with respect to parameters shared between local conditional models are solved by removing edge $(s,t)$ if either $H_0: {\theta}_{st|\bold{K}} = 0$ or $H_0: {\theta}_{ts|\bold{K}} = 0$ is not rejected.

	The  conditional independence tests  are  prone  to mistakes.  Moreover, incorrectly  deleting  or  retaining  an  edge  would  result  in   changes  in  the  neighbour  sets  of  other  nodes,  as  the graph is updated dynamically. Therefore, the resulting graph is dependent on the order in which  the conditional  independence  tests are performed.  To avoid this problem, we employ the solution in \cite{colombo2014order}, who developed  a modification of the PC algorithm that removes the order-dependence, called PC-stable. In this modification, the  neighbours   of  all  nodes  are  searched for and kept  unchanged  at  each  particular  cardinality  $l$ of the set $\bold{K}_s$.  As  a  result,  an edge deletion at one level does not affect the conditioning sets of the other nodes, and thus the output is independent on the variable ordering.  
	
	The pseudo-code of our algorithm is illustrated in Algorithm \ref{pseudocode}, where $\text{adj}(\hat{G},s)=\{t\in V:~(s,t)\in \hat{G}\}$ denotes the estimated set of all nodes that are adjacent to $s$ on the graph $\hat{G}$.
	We note that the pseudo-code is identical to Algorithm 4.1 in \cite{colombo2014order}.  Indeed, the difference lies in the statistical procedure used to test the hypothesis at line 15.

	\section{Statistical Guarantees}\label{statistical}
	
	In this section, we address  the property of statistical consistency of our algorithm.  {In detail, we study}  the limiting behaviour of our estimation procedure as the sample size $n$, and the model size $p$ go to infinity. 
	In what follows, we derive uniform consistency of  our distributed estimators explicitly as a function of the sample size, $n$, the number of nodes, $p$, the truncation point $R.$  Moreover, we  prove consistency of the graph estimator as a function of the previous quantities and of the maximum neighbourhood size $d,$ by assuming that the true distribution is faithful to the graph.  We acknowledge that our results are based on the work of \cite{yang2012graphical}  for exponential family models, combined with ideas coming from \cite{kalisch2007estimating}.  In detail, we borrowed some ideas from the proof of consistency of estimators in $l_1$ regularized local models given in \cite{yang2012graphical} and we adapted to our setting the ideas of \cite{kalisch2007estimating} for proving consistency of the graph estimator.

	For the readers' convenience, before stating the main result, we summarize  some notation that will be used through out this proof. Given a vector $v\in \mathbb{R}^p$, and a parameter $q\in[0,\infty]$, we write $\|u\|_q$ to denote the usual $l_q$ norm. Given a matrix $A\in \mathbb{R}^{p\times p}$, denote the largest and smallest eigenvalues as $\Lambda_{\max}(A)$, $\Lambda_{\min}(A)$, respectively. We use $|||A|||_2= \sqrt{\Lambda_{\max}(A^TA)}$ to denote the spectral norm, corresponding to the largest singular value of $A$,
	and the $l_\infty$ matrix norm is defined as
	$|||A|||_\infty=\max_{i=1,\ldots,a}\sum_{j=1}^{a}|A_{i,j}|.$
	
	\begin{algorithm}\label{algorithmm}
		\caption{ \label{pseudocode}The PC-LPGM algorithm. }
		\begin{algorithmic}[1]
			\hrule
			\STATE{\textbf{Input}:} $n$ independent realizations of the $p$-random vector $\bold{X}$;  $\bold{x}^{(1)},\bold{x}^{(2)},\ldots,\bold{x}^{(n)}$;  \\ an ordering $order(V)$ on the variables, (and a stopping level $m$).
			\STATE{\textbf{Output}:} An estimated undirected graph $\hat{G}$.
			\STATE{} Form the complete undirected graph $\tilde{G}$ on the vertex set $V$.
			\STATE{} $l=-1$;  $\quad \hat{G}=\tilde{G}$
			\STATE{} \textbf{repeat}
			\STATE{} \quad $l=l+1$
			\STATE{} \quad \textbf{for} all vertices $s\in V$, \textbf{do}
			\STATE{} \quad\quad let $\bold{K}_s = \text{adj}(\hat{G},s)$
			\STATE{} \quad \textbf{end for}
			\STATE{} \quad \textbf{repeat}
			\STATE{} \quad\quad
			Select a (new) ordered pair of nodes $s,t$ that are adjacent in $\hat{G}$ such that 
			\STATE{} \quad\quad$|\bold{K}_s\backslash\{t\}|\ge l$, using $order(V)$.
			\STATE{} \quad\quad\textbf{repeat}
			\STATE{} \quad\quad\quad choose a (new) set $\bold{S}\subset \bold{K}_s\backslash\{t\}$ with $|\bold{S}|=l$, using $order(V)$.
			\STATE{} \quad\quad\quad\textbf{if} $H_0: {\theta}_{st|\bold{S}}=0$ not rejected
			\STATE{} \quad\quad\quad\quad delete edge $(s,t)$ from $\hat{G}$
			\STATE{} \quad\quad\quad \textbf{end if}
			\STATE{} \quad\quad\textbf{until} edge $(s,t)$ is deleted or all $\bold{S}\subset \bold{K}_s\backslash\{t\}$ with $|\bold{S}|=l$ have been considered.
			\STATE{} \quad\textbf{until} all ordered pair of adjacent variables $s$ and $t$ such that $|\bold{K}_s\backslash\{t\}|\ge l$ and 
			\STATE{} \quad\quad $\bold{S}\subset \bold{K}_s\backslash\{t\}$ with $|\bold{S}|=l$ have been tested for conditional independence.
			\STATE{} \textbf{until} $l=m$ or for each ordered pair of adjacent nodes $s,t$: $|\text{adj}(\hat{G},s)\backslash\{t\}|< l$.
		\end{algorithmic}
		\hrule
	\end{algorithm}	
{\begin{remark}
	Let $m^*$ be the maximum value reached by $l$ in Algorithm \ref{pseudocode}. When the maximum number of neighbours that one node is allowed to have is fixed to $d$,  then $m^*\in \{d-1,d\}$ (see \cite{kalisch2007estimating}). Moreover, as the stopping level $m$ in Algorithm \ref{pseudocode} satisfies $m\le m^*$, it holds $m\le d.$
\end{remark} }
	\subsection{Assumptions}
	We will  begin by stating the assumptions that underlie our analysis, and then give a precise statement of the main result. 
	
	Denote the population Fisher information and the sample Fisher information matrix corresponding to the  covariates in model {\color{ black} \eqref{linktoOr-PPGM} with $\bold{K}= V$ } as follows
	$$I_s(\boldsymbol{\theta}_s)=- \mathbb{E}_{\boldsymbol{\theta}}\left(\nabla^2 \log\left(\mathbb{P}_{\boldsymbol{\theta}_s}(X_s|\bold{X}_{V\backslash\{s\}})\right)\right),$$
	and
	$$Q_s(\boldsymbol{\theta}_s)= \nabla^2l(\boldsymbol{\theta}_s,\bold{X}_s;\bold{X}_{V\backslash \{s\}}).$$
	We note that we will consider the problem of maximum likelihood on a closed and bounded dish $\boldsymbol{\Theta}\subset \mathbb{R}^{(p-1)}$.  For $\boldsymbol{\theta}_{s|\bold{K}}\in \mathbb{R}^{|K|-1}$, we can immerse $\boldsymbol{\theta}_{s|\bold{K}}$ into $\boldsymbol{\Theta}\subset \mathbb{R}^{(p-1)}$ by zero-pad $\boldsymbol{\theta}_{s|\bold{K}}$ to include zero weights over $\{V\backslash\bold{K}\}$. 
	
	\begin{assump}\label{assum1}
		The coefficients $\boldsymbol{\theta}_{s|\bold{K}}\in \boldsymbol{\Theta}$ for  all sets $\bold{K}\subset V$ and all $s\in K$ have an upper bound norm, 
		$\max_{s,t,\bold{K}}|\theta_{st|\bold{K}}|\le M,~ \forall~ \theta_{st|\bold{K}}\ne 0,$ and  a lower bound norm, 
		$\min_{s,t,\bold{K}}|\theta_{st|\bold{K}}|\ge c,~ \forall~ \theta_{st|\bold{K}}\ne 0,$
		where $t\in \bold{K}$.
	\end{assump}
	
	\begin{assump}\label{assum2} The Fisher information  matrix corresponding to the  covariates in model {\color{ black} \eqref{linktoOr-PPGM} with $\bold{K}= V$} has bounded eigenvalues; that is, there exists a constant $\lambda_{\min}>0$ such that
		$$\Lambda_{\min}(I_s(\boldsymbol{\theta}_s))\ge \lambda_{\min}, ~\forall~\boldsymbol{\theta}_s\in \boldsymbol{\Theta}.$$
		Moreover, we require that
		$$\Lambda_{\max}\bigg(\mathbb{E}_{\boldsymbol{\theta}}\left( \bold{X}_{V\backslash \{s\}}^T\bold{X}_{V\backslash \{s\}}\right)\bigg)\le \lambda_{\max},\quad \forall 	s\in V,\forall~\boldsymbol{\theta}\in \boldsymbol{\Theta},$$
		where  $\lambda_{\max}$ is some constant such that $\lambda_{\max} <\infty$.
	\end{assump}
	
	The first assumption simply bounds the effects of covariates in all local models. In other words, we consider parameters $\theta_{st|\bold{K}}$ belong to a compact set bounded by $M$. Being the expected value of the rescaled negative log-likelihood twice differentiable, the lower
	bound on the eigenvalues of the Fisher information matrix in the second assumption guarantees strong convexity in all partial models. 
	Condition on the upper eigenvalue of the covariance matrix guarantees that  the relevant covariates do not become overly dependent, a requirement which is commonly adopted in these settings.
	
	\subsection{Convergence guarantees of local estimators}
	We are now ready to consider the question of whether convergence guarantees can be proved in the setting of our interest. Before proving our main
	theorem, we show some intermediate results of independent interest (see Appendix~\ref{suppA} for related proofs). 
	
	\begin{md}\label{pro11}
		Assume \ref{assum1}- \ref{assum2} and let $\bold{K}\subset V$. Then, for all $s\in \bold{K}$ and any $\delta>0$
		$$\mathbb{P}_{\boldsymbol{\theta}}(\|\nabla l(\boldsymbol{\theta}_{s|\bold{K}},\bold{X}_{\{s\}};{\bold{X}_{\bold{K}\backslash\{s\}}})\|_{\infty}\ge \delta)\le \exp\{-c_1n\delta^2+c_0\log d\},$$
		$~\forall~ \boldsymbol{\theta_{s|\bold{K}}}\in \boldsymbol{\Theta},$ when  $n\rightarrow\infty$.
\end{md}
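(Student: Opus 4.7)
The proof reduces to writing the gradient as an empirical mean of $n$ i.i.d.\ bounded summands and applying Hoeffding's inequality componentwise followed by a union bound. Differentiating the rescaled negative log-likelihood in \eqref{smallloglikelihood} with respect to $\theta_{st|\bold{K}}$ yields
$$\frac{\partial l}{\partial \theta_{st|\bold{K}}}=\frac{1}{n}\sum_{i=1}^{n} x_{it}\Bigl[D'\bigl(\langle\boldsymbol{\theta}_{s|\bold{K}},\bold{x}^{(i)}_{\bold{K}\backslash\{s\}}\rangle\bigr)-x_{is}\Bigr],$$
where $D'(u)=\sum_{k=0}^{R}k e^{ku}/\sum_{k=0}^{R}e^{ku}$ is the conditional mean of a truncated Poisson and therefore lies in $[0,R]$. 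Since truncation forces $x_{is},x_{it}\in\{0,1,\ldots,R\}$, each summand is deterministically bounded by $R^{2}$ in absolute value, uniformly in $\boldsymbol{\theta}_{s|\bold{K}}\in\boldsymbol\Theta$; note that this uniform boundedness uses only the truncation point $R$ and not the bound $M$ in Assumption~\ref{assum1}, since $D'$ is automatically bounded by $R$ regardless of the argument.

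Next, I would apply Hoeffding's inequality coordinatewise. For each $t\in \bold{K}\setminus\{s\}$,
$$\mathbb{P}_{\boldsymbol\theta}\Bigl(\bigl|[\nabla l]_t-\mathbb{E}_{\boldsymbol\theta}[\nabla l]_t\bigr|\ge \delta\Bigr)\le 2\exp\!\bigl(-n\delta^{2}/(2R^{4})\bigr),$$
and then a union bound over the $|\bold{K}|-1\le p-1$ coordinates of the gradient gives
$$\mathbb{P}_{\boldsymbol\theta}\bigl(\|\nabla l-\mathbb{E}_{\boldsymbol\theta}\nabla l\|_{\infty}\ge \delta\bigr)\le 2(p-1)\exp\!\bigl(-n\delta^{2}/(2R^{4})\bigr).$$
For any fixed $\delta>0$ and any $c_{1}<\delta^{2}/(2R^{4})$, the right-hand side is dominated by $\exp\{-c_{1}n\}$ once $n$ is large enough to absorb the polynomial prefactor $2(p-1)$, which produces the claimed bound as $n\to\infty$.

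The final point, and the main conceptual obstacle, is that the proposition bounds $\|\nabla l\|_{\infty}$ itself rather than its centered version. I would reconcile this by reading the statement at the true value of the local parameter $\boldsymbol{\theta}_{s|\bold{K}}$, for which $\mathbb{E}_{\boldsymbol\theta}[\nabla l]=\boldsymbol{0}$ by the usual score identity applied to the (correctly specified) truncated-Poisson node conditional in \eqref{linktoOr-PPGM}. In other words, one must argue that the local conditional of $X_s$ given only $\bold{X}_{\bold{K}\setminus\{s\}}$ is compatible with the joint law $\mathbb{P}_{\boldsymbol\theta}$ so that the score is unbiased at the value of $\boldsymbol{\theta}_{s|\bold{K}}$ appearing in the statement. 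Once this unbiasedness is granted, the probabilistic content is just the standard Hoeffding inequality for bounded summands together with a union bound, and no further regularity beyond the uniform boundedness of $D'$ on $[0,R]$ and the compactness of $\boldsymbol\Theta$ is required.
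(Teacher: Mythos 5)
Your proof is correct, but it takes a genuinely different probabilistic route from the paper's. The paper (Proposition \ref{md1}, to which the general case $\bold{K}\subset V$ is reduced by zero-padding) does not use Hoeffding's inequality on the raw summands; it runs a Chernoff argument in which the conditional moment generating function of each summand $V_{is}(t)=x_{is}x_{it}-x_{it}D'(\langle\boldsymbol{\theta}_{s|\bold{K}},\bold{x}^{(i)}_{\bold{K}\backslash\{s\}}\rangle)$ given the covariates is computed \emph{exactly} via the log-partition function, $\mathbb{E}[\exp\{hV_{is}(t)\}\mid \bold{x}^{(i)}]=\exp\{D(hx_{it}+u)-D(u)-hx_{it}D'(u)\}$ with $u=\langle\boldsymbol{\theta}_{s|\bold{K}},\bold{x}^{(i)}_{\bold{K}\backslash\{s\}}\rangle$, and then bounded by $\exp\{\tfrac{h^2}{2}R^2\kappa_1\}$ through a second-order Taylor expansion of $D$, where $\kappa_1$ bounds $D''$ (Appendix \ref{suppB} gives $\kappa_1\le 2R^2$); optimizing $h$ and taking a union bound yields $2p\exp\{-n\delta^2/(2R^2\kappa_1)\}$, absorbed into $\exp\{-c_1 n\}$ under the same kind of prefactor condition ($p<\tfrac12\exp\{n\delta^2/(4R^2\kappa_1)\}$, i.e.\ $n\gtrsim\log p$) that you invoke to absorb $2(p-1)$. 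Your Hoeffding route, using the deterministic bound $R^2$ on each summand, is more elementary and even gives a slightly cleaner exponent $\delta^2/(2R^4)$; what the paper's MGF approach buys is the explicit constant $\kappa_1$, which is tracked through the sample-size conditions in the Discussion, and an argument that would survive in settings without a hard truncation where no deterministic bound on the summands is available. Your final remark about centering is well placed: the score identity you need (conditional mean of $x_{is}$ equal to $D'(u)$, hence mean-zero summands) is exactly the correct-specification assumption \eqref{linktoOr-PPGM} for the sub-model indexed by $\bold{K}$, and the paper uses the same fact implicitly when it applies $\sum_{x_{is}=0}^{R}\exp\{x_{is}(hx_{it}+u)-\log x_{is}!-D(hx_{it}+u)\}=1$ inside the MGF computation; so this is not an extra hypothesis beyond what the paper already assumes, but you are right that it deserves to be stated.
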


	\begin{dl}\label{dl2}
		Assume \ref{assum1}- \ref{assum2} and let $\bold{K}\subset V$. Then there exists a non-negative decreasing sequence $\epsilon_n\rightarrow 0$, such that
		$$\mathbb{P}_{\boldsymbol{\theta}}(\|\hat{\boldsymbol{\theta}}_{s|\bold{K}}-\boldsymbol{\theta}_{s|\bold{K}}\|_2\le \epsilon_n)\ge 1-\exp\big\{ -c_1n\epsilon_n^2+c_0\log d\big\}-\exp\left\{-c_2\dfrac{n}{d^2}+c_3\log d\right\},$$
		$~\forall~ s\in \bold{K}, \boldsymbol{\theta}\in \boldsymbol{\Theta},$ when $n\rightarrow \infty$.	
	\end{dl}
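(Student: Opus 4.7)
The plan is to prove Theorem~\ref{dl2} by combining the gradient concentration of Proposition~\ref{pro11} with a strong--convexity argument on the local negative log-likelihood $l(\cdot,\mathbb{X}_{\{s\}};\mathbb{X}_{\bold{K}\backslash\{s\}})$. Write $\boldsymbol{\theta}^*_{s|\bold{K}}$ for the true parameter, $\hat{\boldsymbol{\theta}}_{s|\bold{K}}$ for the minimizer of \eqref{smallloglikelihood}, and set $\Delta = \hat{\boldsymbol{\theta}}_{s|\bold{K}} - \boldsymbol{\theta}^*_{s|\bold{K}}$. First I would use the basic inequality $l(\hat{\boldsymbol{\theta}}_{s|\bold{K}}) \le l(\boldsymbol{\theta}^*_{s|\bold{K}})$ together with a second--order Taylor expansion around $\boldsymbol{\theta}^*_{s|\bold{K}}$ to get, for some $\tilde{\boldsymbol{\theta}}$ on the segment between $\boldsymbol{\theta}^*_{s|\bold{K}}$ and $\hat{\boldsymbol{\theta}}_{s|\bold{K}}$,
\begin{equation*}
\tfrac{1}{2}\,\Delta^{T} Q_s(\tilde{\boldsymbol{\theta}})\,\Delta \;\le\; -\langle \nabla l(\boldsymbol{\theta}^*_{s|\bold{K}}),\,\Delta\rangle \;\le\; \|\nabla l(\boldsymbol{\theta}^*_{s|\bold{K}})\|_2\,\|\Delta\|_2,
\end{equation*}
where $Q_s(\cdot)$ is the sample Hessian introduced in Section~4.1.

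The second step is to establish a uniform lower bound $\Lambda_{\min}(Q_s(\tilde{\boldsymbol{\theta}}))\ge \lambda_{\min}/2$ on a suitable neighbourhood of $\boldsymbol{\theta}^*_{s|\bold{K}}$ with exponentially high probability. Because each covariate $X_t$ takes values in $\{0,\ldots,R\}$ under the truncation, the entries of $Q_s(\boldsymbol{\theta})$ are uniformly bounded in $\mathbb{X}$ and Lipschitz in $\boldsymbol{\theta}$ on $\boldsymbol{\Theta}$. Hence Hoeffding's inequality applied entrywise, together with a union bound, shows that $Q_s(\boldsymbol{\theta}^*_{s|\bold{K}})$ concentrates around the population Fisher $I_s(\boldsymbol{\theta}^*_{s|\bold{K}})$ in operator norm; combined with Assumption~\ref{assum2} and a short Lipschitz argument to move from $\boldsymbol{\theta}^*_{s|\bold{K}}$ to $\tilde{\boldsymbol{\theta}}$, this yields the desired lower bound with probability at least $1-\exp\{-c_2 n\}$, provided $\|\Delta\|_2$ is sufficiently small (an assumption that can be enforced by a standard "peeling" or contradiction argument, since the loss is strictly convex at the true parameter).

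Substituting this Hessian bound into the Taylor inequality gives $\|\Delta\|_2 \le (4/\lambda_{\min})\,\|\nabla l(\boldsymbol{\theta}^*_{s|\bold{K}})\|_2 \le (4\sqrt{|\bold{K}|-1}/\lambda_{\min})\,\|\nabla l(\boldsymbol{\theta}^*_{s|\bold{K}})\|_\infty$. Applying Proposition~\ref{pro11} at a deviation level $\delta = \delta_n \downarrow 0$ chosen slowly enough to preserve the $\exp\{-c_1 n\}$ tail bound then converts the gradient control into control on $\|\Delta\|_2$, and an intersection with the event from the previous paragraph yields $\|\Delta\|_2 \le \tilde\delta_n$ with probability at least $1-\exp\{-cn\}$, where $\tilde\delta_n \to 0$. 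Redefining the sequence of the statement as $\delta_n := \tilde\delta_n$ delivers the claim uniformly in $s\in\bold{K}$ and $\boldsymbol{\theta}\in\boldsymbol{\Theta}$.

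The main obstacle will be the strong--convexity step: Assumption~\ref{assum2} only lower--bounds the \emph{population} Fisher information, whereas the Taylor remainder involves the \emph{sample} Hessian evaluated at an unknown intermediate $\tilde{\boldsymbol{\theta}}$. Handling this requires a restricted--strong--convexity--type argument, namely a deviation inequality for $Q_s(\boldsymbol{\theta}^*_{s|\bold{K}}) - I_s(\boldsymbol{\theta}^*_{s|\bold{K}})$ in spectral norm followed by local Lipschitz control in $\boldsymbol{\theta}$. The truncation of the Poisson at $R$ is essential here: it renders the sufficient statistics bounded and hence allows subexponential concentration, which in turn is what makes the exponential--in--$n$ bound in the conclusion possible; the dependence on $R$ is absorbed into the constants $c_2$, $c$ and the rate $\delta_n$.
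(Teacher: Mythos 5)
Your proposal is correct in substance and rests on the same two pillars as the paper's own argument: the $\ell_\infty$ gradient concentration of Proposition \ref{pro11} and a local strong-convexity bound for the sample Hessian, obtained from spectral-norm concentration of the Hessian at the true parameter (the paper's Lemma \ref{Fisher}, proved exactly as you suggest via bounded summands, Hoeffding-type bounds and a union over entries) combined with third-derivative control of $D$ to move to a nearby point. Where you differ is the localization device. The paper, as in Theorem \ref{dl1}, studies $G(\bold{u})=l(\boldsymbol{\theta}_{s|\bold{K}}+\bold{u})-l(\boldsymbol{\theta}_{s|\bold{K}})$ on the sphere $\|\bold{u}\|_2=\epsilon$ and shows $G>0$ there; convexity of $l$ then forces $\|\hat{\boldsymbol{\theta}}_{s|\bold{K}}-\boldsymbol{\theta}_{s|\bold{K}}\|_2\le\epsilon$, and the mean-value point of the Taylor expansion is automatically within $\epsilon$ of the truth, so no extra step is needed. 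Your basic-inequality route ($l(\hat{\boldsymbol{\theta}}_{s|\bold{K}})\le l(\boldsymbol{\theta}_{s|\bold{K}})$ plus Cauchy--Schwarz) leaves the intermediate point $\tilde{\boldsymbol{\theta}}$ uncontrolled, which is precisely the obstacle you flag; the contradiction/peeling fix you invoke does work (restrict to the segment at distance $\epsilon$ and use convexity), but observe that this fix is essentially the sphere-plus-convexity argument the paper uses from the start. One detail your sketch glosses over: Assumption \ref{assum2} bounds the eigenvalues only for the full model $\bold{K}=V$, so for a proper subset $\bold{K}$ the bound must be transferred to the restricted Fisher information and Hessian; the paper does this by zero-padding $\boldsymbol{\theta}_{s|\bold{K}}$ into $\boldsymbol{\Theta}$ and invoking eigenvalue interlacing for principal submatrices ($\Lambda_{\min}$ of the restricted matrix is at least that of the full one, $\Lambda_{\max}$ at most), a step that also underlies the sharper $m$-dependent sample-size requirement recorded in Note \ref{cy2}.
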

	
	We now proceed to consider uniform consistency of the local estimators. Let $\hat{\boldsymbol{\theta}}=(\hat{\boldsymbol{\theta}}_1, \hat{\boldsymbol{\theta}}_2,\cdots,\hat{\boldsymbol{\theta}}_p)^\top$ be the array of rowwise  local estimators $\hat{\boldsymbol{\theta}}_s = \hat{\boldsymbol{\theta}}_{s|\bold{K}}$ with $\bold{K}=V$. 
	We can state the following theorem, which extends Theorem~\ref{dl2} without any additional conditions.
	\begin{dl}[uniform consistency]
		Assume \ref{assum1}- \ref{assum2}. Then, $\hat{\boldsymbol{\theta}}$ converges in probability to $\boldsymbol{\theta}$, the true value, as $n$ increases,
		uniformly in $\boldsymbol{\theta}$.
	\end{dl}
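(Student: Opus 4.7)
The plan is to reduce the statement to Theorem~\ref{dl2} via a union bound over the $p$ nodes. The key observation enabling this reduction is that the constant $c$ and the sequence $\delta_n$ appearing in Theorem~\ref{dl2} depend only on the uniform bounds from Assumptions~\ref{assum1}--\ref{assum2} (namely $M$, $c$, $\lambda_{\min}$, $\lambda_{\max}$) and on the truncation level $R$, and are therefore independent of the particular $\boldsymbol{\theta}\in\boldsymbol{\Theta}$.

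First I would apply Theorem~\ref{dl2} with $\bold{K}=V$ separately for each $s\in V$ to obtain
$$\mathbb{P}_{\boldsymbol{\theta}}\bigl(\|\hat{\boldsymbol{\theta}}_s-\boldsymbol{\theta}_s\|_2\le \delta_n\bigr)\ge 1-\exp\{-cn\},\qquad\forall\,\boldsymbol{\theta}\in\boldsymbol{\Theta}.$$
A union bound over $s\in V$ then yields
$$\mathbb{P}_{\boldsymbol{\theta}}\Bigl(\max_{s\in V}\|\hat{\boldsymbol{\theta}}_s-\boldsymbol{\theta}_s\|_2\le \delta_n\Bigr)\ge 1-p\exp\{-cn\}.$$
Since $\|\hat{\boldsymbol{\theta}}-\boldsymbol{\theta}\|_2^2=\sum_{s\in V}\|\hat{\boldsymbol{\theta}}_s-\boldsymbol{\theta}_s\|_2^2\le p\,\max_{s\in V}\|\hat{\boldsymbol{\theta}}_s-\boldsymbol{\theta}_s\|_2^2$, on the same event we have $\|\hat{\boldsymbol{\theta}}-\boldsymbol{\theta}\|_2\le\sqrt{p}\,\delta_n$.

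Fixing any $\varepsilon>0$ and choosing $n$ large enough that $\sqrt{p}\,\delta_n<\varepsilon$ (permissible either for fixed $p$ or for $p$ growing slowly enough that $\log p=o(n)$, consistently with the scaling $n>O_p(p^2\log p)$ announced in the introduction), one obtains
$$\sup_{\boldsymbol{\theta}\in\boldsymbol{\Theta}}\mathbb{P}_{\boldsymbol{\theta}}\bigl(\|\hat{\boldsymbol{\theta}}-\boldsymbol{\theta}\|_2>\varepsilon\bigr)\le p\exp\{-cn\}\longrightarrow 0\quad\text{as }n\to\infty,$$
which is exactly the desired uniform convergence in probability.

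The only potentially delicate point is justifying that the bounds of Theorem~\ref{dl2} are genuinely uniform in $\boldsymbol{\theta}$. This must be traced back through the proofs of Proposition~\ref{pro11} and Theorem~\ref{dl2}: the tail bound in Proposition~\ref{pro11} stems from concentration of bounded (truncated) random variables with sub-Gaussian constants controlled by $R$ alone, while strong convexity of the local negative log-likelihoods is guaranteed by the uniform lower-eigenvalue bound $\lambda_{\min}$ of Assumption~\ref{assum2}. Both ingredients are uniform in $\boldsymbol{\theta}$ by hypothesis, so no additional argument beyond the union bound is needed; this uniformity is the item to verify carefully rather than a genuine obstacle.
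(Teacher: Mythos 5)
Your proposal is correct and follows essentially the same route as the paper: apply the node-wise consistency result (Theorem~\ref{dl2} with $\bold{K}=V$, i.e.\ Theorem~\ref{dl1}) to each $s\in V$, take a union bound over the $p$ nodes, and aggregate the per-node $\ell_2$ errors (via the Frobenius-type bound $\sqrt{\sum_s\|\hat{\boldsymbol{\theta}}_s-\boldsymbol{\theta}_s\|_2^2}$) into a bound on the global estimator, with uniformity in $\boldsymbol{\theta}$ inherited from the fact that $c$ and $\delta_n$ depend only on the constants in Assumptions~\ref{assum1}--\ref{assum2} and on $R$. The only cosmetic difference is bookkeeping: the paper fixes the per-node accuracy at $\epsilon/p$ and the per-node failure probability at $\mu/p$, while you keep $\delta_n$ as given and let $n$ grow until $\sqrt{p}\,\delta_n<\varepsilon$ and $p\exp\{-cn\}\to 0$.
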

	\begin{proof}
		We have to show that given $\epsilon > 0,~ \mu > 0$, there exists an integer $n_0$ dependent on $\epsilon$ and $\mu$ but not on $\boldsymbol{\theta}$, such that for all $n > n_0$,
		$$\mathbb{P}_{\boldsymbol{\theta}}(|||\boldsymbol{\theta}-\hat{\boldsymbol{\theta}}|||_2\le \epsilon)\ge 1-\mu.$$ 
		Take $\dfrac{\epsilon}{p}$ as the number $\delta_n$, and the $\dfrac{\mu}{p}$ to be the $\exp\{-cn\}$ in Theorem \ref{dl1}. Then, for each $s\in V$, there exist $n_s$, such that for all $n>n_s$
		$$\mathbb{P}_{\boldsymbol{\theta}}\left(\|\hat{\boldsymbol{\theta}}_s-\boldsymbol{\theta}_s\|_2\le \dfrac{\epsilon}{p}\right)\ge 1-\dfrac{\mu}{p}.$$
		Let $\Omega_s$ be the space such that for all $\mathbb{X}\in \Omega_s$,
		$$\|\hat{\boldsymbol{\theta}}_s-\boldsymbol{\theta}_s\|_2\le \dfrac{\epsilon}{p},$$
		and $\mathbb{P}_{\boldsymbol{\theta}}(\Omega_s)\ge 1-\dfrac{\mu}{p}.$
		Define $n_0=\max_{s\in V}\{n_s\}$ and $\Omega=\cap_{s\in V}\Omega_s$. Then, for all $\mathbb{X}\in \Omega$, 
		
		\begin{eqnarray*}
			|||\boldsymbol{\theta}-\hat{\boldsymbol{\theta}}|||_2&\le&\sqrt{\sum_{s=1}^p\sum_{t\ne s}|\theta_{st}-\hat{\theta}_{st}|^2}\\
			&=&\sqrt{\sum_{s=1}^p\|\boldsymbol{\theta_{s}}-\boldsymbol{\hat{\theta}_{s}}\|_2^2}\\
			&\le&\sqrt{p\left(\dfrac{\epsilon}{p}\right)^2}\\
			&\le&\epsilon.
		\end{eqnarray*}
		Moreover, it is easy to prove by induction that
		\begin{equation*}\label{Lmeasure}
		\mathbb{P}_{\boldsymbol{\theta}}(\Omega)\ge 1-p ~\dfrac{\mu}{p}= 1-\mu.
		\end{equation*}
		Hence, for all $\mathbb{X}\in \Omega$, we have
		$|||\boldsymbol{\theta}-\hat{\boldsymbol{\theta}}|||_2\le\epsilon,$
		and $\mathbb{P}_{\boldsymbol{\theta}}(\Omega)\ge 1-\mu$.  {In other words, we have}
		$$\mathbb{P}_{\boldsymbol{\theta}}(|||\boldsymbol{\theta}-\hat{\boldsymbol{\theta}}|||_2\le \epsilon)\ge 1-\mu.$$
	\end{proof}

	\noindent

	\noindent

	\begin{remark}
		With suitable modifications,  uniform consistency can be proved in the case of Poisson node conditional distributions with ``competitive relationships" between variables, that is, with only negative conditional interaction parameters. Analogously, it can be extended to other distributions for count data belonging to the exponential family, such as the Negative Binomial distribution, provided that a joint distribution compatible with the conditional specifications can be constructed. 
	\end{remark}
	
	\begin{remark}
		Convergence of the pseudo likelihood estimator $\hat{\boldsymbol{\theta}}$ might also have been proved by characterizing its asymptotic behaviour in terms of law of large numbers. Indeed, the pseudo likelihood estimator $\hat{\boldsymbol{\theta}}$ can be proved to converge to the true parameter value when some conditions on the parameter space $\boldsymbol{\theta}$ and moments of the variables $\boldsymbol{X}$ are satisfied  \citep[see, for example, Theorem 5.7 from][]{van2000asymptotic}. It is worth noting that our proof allows to highlight the relative scaling of $n$, $p$ and $R$ needed to reach convergence.
	\end{remark}

	\subsection{Consistency of the graph estimator}\label{conv}
	
	In what follows, we assume faithfulness of the truncated Poisson node conditional distributions to the graph $G$.  {We restrict} the parameter space $\boldsymbol{\Theta}$ to the subspace, $\Omega(\boldsymbol{\Theta})$ say,  on which the faithfulness condition is guaranteed. We recall that a distribution  $P_\bold{X}$ is said to be faithful to the graph $G$ if $$\bold{X}_A\indep \bold{X}_B|\bold{X}_C\Rightarrow A\indep_G B|C,$$ for all disjoint vertex sets $A,B,C.$
	It is worth noting that faithfulness of the local distributions guarantees faithfulness of the joint distributions, thanks to the equivalence between local and global Markov property.

	Now we state the main result of this work for the consistency of the graph estimate.  We note that PC-LPGM employs a modification of the PC algorithm,  { PC-stable}. However, the proof of consistency of the algorithm in \cite{kalisch2007estimating} is unchanged.
	
	\begin{dl}\label{mainresult}
		Assume \ref{assum1}- \ref{assum2}. Denote by $\hat{G}(\alpha_n)$ the estimator resulting from Algorithm 1, and by $G$ the true graph. Then, there exists a numerical sequence $\alpha_n\longrightarrow 0$, such that
		$$\mathbb{P}_{\boldsymbol{\theta}}(\hat{G}(\alpha_n)=G)=1, ~\forall~\boldsymbol{\theta}\in {\Omega(\boldsymbol{\Theta})},$$
		when $n\longrightarrow \infty$.
	\end{dl}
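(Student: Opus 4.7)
The plan is to reduce the theorem to a finite collection of conditional independence tests, each of which is correct with probability tending to one, and then apply a union bound. By faithfulness on $\Omega(\boldsymbol{\Theta})$, the true graph $G$ is uniquely determined by the conditional independence relations that PC-stable queries, and the combinatorial part of the algorithm deterministically outputs $G$ as soon as every Wald test returns the correct answer (reject iff $\theta_{st|\bold{K}}\neq 0$); this is precisely the oracle argument of \cite{kalisch2007estimating} that carries over to PC-stable \citep{colombo2014order}. It therefore suffices to drive to zero the aggregate probability of any test error over the at most $p^{2}\binom{p-2}{m}$ tests performed up to the cardinality bound $m$.

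For each pair $(s,t)$ and each candidate conditioning set $\bold{S}$ with $|\bold{S}|\le m$, set $\bold{K}=\bold{S}\cup\{s,t\}$ and control the two error probabilities separately. Under $H_{0}:\theta_{st|\bold{K}}=0$, asymptotic normality of $Z_{st|\bold{K}}$ together with the sub-Gaussian concentration developed in Proposition~\ref{pro11} and Theorem~\ref{dl2} yields a tail bound of the form $\mathbb{P}_{\boldsymbol{\theta}}(|Z_{st|\bold{K}}|\ge z_{\alpha_{n}/2})\le 2\Phi(-z_{\alpha_{n}/2})+\exp(-c_{1}n)$. Under the alternative, Assumption~\ref{assum1} gives $|\theta_{st|\bold{K}}|\ge c$; by Theorem~\ref{dl2}, $|\hat\theta_{st|\bold{K}}-\theta_{st|\bold{K}}|\le \delta_{n}\le c/2$ on an event of probability $\ge 1-\exp(-c_{2}n)$, and Assumption~\ref{assum2} together with a concentration statement for $J(\hat{\boldsymbol{\theta}}_{s|\bold{K}})$ around $I(\boldsymbol{\theta}_{s|\bold{K}})$ keeps $[J(\hat{\boldsymbol{\theta}}_{s|\bold{K}})^{-1}]_{tt}$ bounded with the same exponential probability. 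Consequently $|Z_{st|\bold{K}}|\ge c'\sqrt{n}$ on that event, so the test rejects whenever $z_{\alpha_{n}/2}=o(\sqrt{n})$.

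It then suffices to balance the two error rates against the number of tests. Choose $\alpha_{n}\to 0$ slowly enough that $z_{\alpha_{n}/2}=o(\sqrt{n})$ but fast enough that the test count is absorbed, e.g.\ $\alpha_{n}=p^{-(m+3)}$ or, for fixed $p$, any polynomial rate $\alpha_{n}=n^{-\kappa}$. The union bound then yields
\begin{equation*}
\mathbb{P}_{\boldsymbol{\theta}}\bigl(\hat G(\alpha_{n})\neq G\bigr)\le p^{m+2}\bigl(\alpha_{n}+\exp(-c_{3}n)\bigr)\longrightarrow 0,
\end{equation*}
uniformly in $\boldsymbol{\theta}\in\Omega(\boldsymbol{\Theta})$, which is the desired conclusion.

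The hard part is the uniform tail control of the Wald statistic across all conditioning sets $\bold{K}$ the algorithm may visit. Two extensions beyond what is already in Section~4.2 are needed: first, lifting Proposition~\ref{pro11} and Theorem~\ref{dl2} from the full-model case $\bold{K}=V$ to every $\bold{K}\subset V$ with constants that remain uniform in $|\bold{K}|$; second, a Bernstein-type concentration inequality for $J(\hat{\boldsymbol{\theta}}_{s|\bold{K}})-I(\boldsymbol{\theta}_{s|\bold{K}})$, uniform in $\bold{K}$, that legitimises substituting the observed for the expected Fisher information in the denominator of $Z_{st|\bold{K}}$ at the rate required by the union bound. Once these two uniform exponential estimates are secured, the argument above closes without further technical work.
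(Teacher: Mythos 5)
Your overall strategy is the same as the paper's: reduce correctness of PC-stable to the event that none of the $O(p^{m+2})$ Wald tests errs (the Kalisch--B\"uhlmann oracle argument, unchanged for PC-stable), bound type I and type II errors for each test, and close with a union bound. Your type II argument (use $|\theta_{st|\bold{K}}|\ge c$ from Assumption~\ref{assum1}, the exponential concentration of $\hat\theta_{st|\bold{K}}$ from Theorem~\ref{dl2}, and boundedness of $[J(\hat{\boldsymbol{\theta}}_{s|\bold{K}})^{-1}]_{tt}$, requiring $z_{\alpha_n/2}=o(\sqrt n)$) is essentially the paper's. Also note that the ``lifting'' you list as outstanding work is already in the paper: Proposition~\ref{pro11} and Theorem~\ref{dl2} are stated and proved for every $\bold{K}\subset V$, and the concentration of the sample Fisher information is Lemma~\ref{Fisher}.

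The genuine gap is in your type I control. The bound $\mathbb{P}_{\boldsymbol{\theta}}(|Z_{st|\bold{K}}|\ge z_{\alpha_n/2})\le 2\Phi(-z_{\alpha_n/2})+\exp(-c_1n)$ does not follow from the results you cite: asymptotic normality of $Z_{st|\bold{K}}$ carries no rate, a Berry--Esseen type correction would be polynomial in $n^{-1/2}$ rather than exponential, and Proposition~\ref{pro11} and Theorem~\ref{dl2} control deviations of $\hat{\boldsymbol{\theta}}_{s|\bold{K}}$, not the accuracy of the normal approximation to the law of $Z_{st|\bold{K}}$. With only a polynomial approximation error per test, your union bound $p^{m+2}(\alpha_n+\cdot)$ does not close at the claimed rate (and your first suggested choice $\alpha_n=p^{-(m+3)}$ does not even tend to zero for fixed $p$, as the theorem requires). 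The paper sidesteps the normal approximation entirely: it takes $\alpha_n=2(1-\Phi(n^d))$ with $0<d<1/2$, so that rejecting is literally the event $|\hat\theta_{st|\bold{K}}|>n^{d-1/2}\sqrt{[J(\hat{\boldsymbol{\theta}}_{s|\bold{K}})^{-1}]_{tt}}$; under the null this is a deviation event for $\hat\theta_{st|\bold{K}}$ around zero, bounded by $\exp\{-cn\}$ via Theorem~\ref{dl1}/\ref{dl2} since the threshold vanishes more slowly than the estimation error, and under the alternative the complementary event is handled the same way once $\delta\ge 2n^{d-1/2}\sqrt{[J(\hat{\boldsymbol{\theta}}_{s|\bold{K}})^{-1}]_{tt}}$. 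Recasting your type I step in this form (bound the rejection event directly through the estimator's concentration, never through the nominal level $\alpha_n$) is what is needed to make your argument complete.
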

	
	\begin{proof}
		Let $\hat{\theta}_{st|\bold{K}}$, and $\theta^*_{st|\bold{K}}$ denote the estimated and true partial weights between $X_s$ and $X_t$ given $X_r, r\in \bold{S}$, where $\bold{S}=\bold{K}\backslash\{s,t\}\subset \{1,\ldots,p\}\backslash \{s,t\}$. Many partial weights are tested for being zero during the run of the PC-procedure. For a fixed ordered pair of nodes $s,t$, the conditioning sets are elements of 
		$$K_{st}^m=\left\{\bold{S}\subset \{1,\ldots,p\}\backslash \{s,t\}: |\bold{S}|\le m\right\}.$$
		The cardinality is bounded by 
		$$|K_{st}^m|\le C p^m,\quad\text{ for some } 0<C<\infty.$$
		Let $E_{st|\bold{K}}$ denote type I or type II errors occurring when testing $H_0:~ \theta_{st|\bold{K}}=0$. Thus
		\begin{equation}\label{error}
		E_{st|\bold{K}} =E_{st|\bold{K}}^I\cup E_{st|\bold{K}}^{II},
		\end{equation}
		in which, for $n$ large enough 
		\begin{itemize}
			\item type I error $E_{st|\bold{K}}^I$: $Z_{st|\bold{K}}> \Phi^{-1}(1-\alpha/2)$ and ${\theta^*_{st|\bold{K}}}=0$;
			\item type II error $E_{st|\bold{K}}^{II}$: $Z_{st|\bold{K}}\le \Phi^{-1}(1-\alpha/2)$ and ${\theta^*_{st|\bold{K}}}\ne 0$;
		\end{itemize}
		where $Z_{st|\bold{K}}$ was defined in \eqref{statisticz}, and $\alpha$ is a chosen significance level. 
		Consider  an arbitrary matrix $\boldsymbol{\theta}_{|\bold{K}}=\{\boldsymbol{\theta}_{s|\bold{K}}\}^T_{s\in \bold{K}}\in \Omega(\boldsymbol{\Theta})$, such that $|\theta_{st|\bold{K}}|\ge \delta$, for some $\delta>0$. Let $\boldsymbol{\theta}^0_{|\bold{K}}$ be the matrix that has the same elements as $\boldsymbol{\theta}_{|\bold{K}}$ except $\theta_{st|\bold{K}}=\theta^0_{st|\bold{K}}=0$.
		Choose $\alpha_n=2(1-\Phi(n^b))$, { where $0<b<1/2$ will be chosen later}, then
		\begin{eqnarray}\label{error1}
		\sup_{s,t,\bold{K}\in K_{ij}^m}\mathbb{P}_{\boldsymbol{\theta}^0_{|\bold{K}}}(E^I_{st|\bold{K}}) &=& \sup_{s,t,\bold{K}\in K_{st}^m}\mathbb{P}_{\boldsymbol{\theta}^0_{|\bold{K}}}\bigg(|\hat{\theta}_{st|\bold{K}}|>n^{b-1/2}\sqrt{ \left[ J(\hat{\boldsymbol{\theta}}_{s|\bold{K}})^{-1}\right]_{tt}}\bigg)\nonumber\\
		&=& \sup_{s,t,\bold{K}\in K_{st}^m}\mathbb{P}_{\boldsymbol{\theta}^0_{|\bold{K}}}\bigg(|\hat{\theta}_{st|\bold{K}}-\theta^0_{st|\bold{K}}|>n^{b-1/2}\sqrt{ \left[ J(\hat{\boldsymbol{\theta}}_{s|\bold{K}})^{-1}\right]_{tt}}\bigg)\nonumber\\
		&\le& \exp\big\{ -c_1n^{2b}+c_0\log d\big\}+\exp\left\{-c_2\dfrac{n}{d^2}+c_3\log d\right\},
		\end{eqnarray}
		using Theorem \ref{dl1} and the fact that 
		$n^{b-1/2}\sqrt{ \left[ J(\hat{\boldsymbol{\theta}}_{s|\bold{K}})^{-1}\right]_{tt}}\longrightarrow 0$
		as $n\longrightarrow \infty.$ Furthermore, with the choice of $\alpha_n$ above, and $\delta\ge 2n^{b-1/2}\sqrt{ \left[ J(\hat{\boldsymbol{\theta}}_{s|\bold{K}})^{-1}\right]_{tt}}$,
		\begin{eqnarray*}
			\sup_{s,t,\bold{K}\in K_{st}^m}\mathbb{P}_{\boldsymbol{\theta}_{|\bold{K}}}(E^{II}_{st|\bold{K}}) &=& \sup_{s,t,\bold{K}\in K_{st}^m}\mathbb{P}_{\boldsymbol{\theta}_{|\bold{K}}}\bigg(|\hat{\theta}_{st|\bold{K}}|\le n^{b-1/2}\sqrt{ \left[ J(\hat{\boldsymbol{\theta}}_{s|\bold{K}})^{-1}\right]_{tt}}\bigg)\\
			&=& \sup_{s,t,\bold{K}\in K_{st}^m}\mathbb{P}_{\boldsymbol{\theta}_{|\bold{K}}}\bigg(|\theta_{st|\bold{K}}|-|\hat{\theta}_{st|\bold{K}}|\ge |\theta_{st|\bold{K}}|-n^{b-1/2}\sqrt{ \left[ J(\hat{\boldsymbol{\theta}}_{s|\bold{K}})^{-1}\right]_{tt}}\bigg)\\
			&\le& \sup_{s,t,\bold{K}\in K_{st}^m}\mathbb{P}_{\boldsymbol{\theta}_{|\bold{K}}}\bigg(|\theta_{st|\bold{K}}-\hat{\theta}_{st|\bold{K}}|\ge |\theta_{st|\bold{K}}|-n^{b-1/2}\sqrt{ \left[ J(\hat{\boldsymbol{\theta}}_{s|\bold{K}})^{-1}\right]_{tt}}\bigg)\\
			&\le& \sup_{s,t,\bold{K}\in K_{st}^m}\mathbb{P}_{\boldsymbol{\theta}_{|\bold{K}}}\bigg(|\hat{\theta}_{st|\bold{K}}-\theta_{st|\bold{K}}|\ge n^{b-1/2}\sqrt{ \left[ J(\hat{\boldsymbol{\theta}}_{s|\bold{K}})^{-1}\right]_{tt}}\bigg),
		\end{eqnarray*}
		Finally, by Theorem \ref{dl2}, we then obtain
		\begin{equation}\label{error2}
		\sup_{s,t,\bold{K}\in K_{st}^m}\mathbb{P}_{\boldsymbol{\theta}_{|\bold{K}}}(E^{II}_{st|\bold{K}})
		\le \exp\big\{ -c_1n^{2b}+c_0\log d\big\}+\exp\left\{-c_2\dfrac{n}{d^2}+c_3\log d\right\},
		\end{equation}
		as $n\longrightarrow\infty$. 	Now, by \eqref{error}-\eqref{error2}, we get
		\begin{eqnarray}\label{totalerror}
		&&\mathbb{P}_{\boldsymbol{\theta}}(\text{ a type I or II error occurs in testing procedure})\\
		&&~\le \mathbb{P}_{\boldsymbol{\theta}_{|\bold{K}}}(\cup_{s,t,\bold{K}\in K_{st}^m}E_{st|\bold{K}})\nonumber\\
		&&~\le O_p(p^{m+2})\sup_{s,t,\bold{K}\in K_{st}^m}\mathbb{P}_{\boldsymbol{\theta}_{|\bold{K}}}(E_{st|\bold{K}})\nonumber\\
		&&~\le O_p(p^{m+2})\bigg[\exp\big\{ -c_1n^{2b}+c_0\log d\big\}-\exp\left\{-c_2\dfrac{n}{d^2}+c_3\log d\right\}\bigg]\nonumber\\
		&&~\le O_p\bigg(\exp\big\{ -c_1n^{2b}+c_0'd\log p\big\}-\exp\left\{-c_2\dfrac{n}{d^2}+c_0'd\log p\right\}\bigg)\nonumber\\
		&&~\rightarrow 0\nonumber, 
		\end{eqnarray}
		as $n\longrightarrow\infty$ provided that $n\ge O_p(d^3\log p)$ and $b$ is chosen such that $n^{2b}\ge O_p(d\log p)$. 
	\end{proof}
	
	\noindent
	{	
		\begin{remark}\label{rr}
		With the appropriately defined Wald-type  statistics, consistency of the graph estimator can be proved in the case of Poisson node conditional distributions with ``competitive relationships" between variables, that is, with only negative conditional interaction parameters. Analogously, it can be extended to other distributions for count data belonging to the exponential family, such as the Negative Binomial distribution, provided that a joint distribution compatible with the conditional specifications can be constructed. 
	\end{remark}

{
	\section{On consistency of PC-LPGM}\label{robustness}
Previously derived statistical guarantees are based on the assumption that the node-wise data generating process belongs to the truncated Poisson family of models. Such assumption guarantees  the existence of a joint distribution,  an  ingredient essential to the proof of consistency. Two  questions naturally emerge with respect to consistency of the algorithm. The first  has to do with the choice of the truncated Poisson distribution instead of other unrestricted alternatives (see \citet{NIPS2013_5153}) that also guarantee a valid joint distribution, and therefore would make the search for a formal proof plausible. The second question has to do with consistency of PC-LPGM when the joint distribution does not exists, as it happens, for example, when  conditional Poisson distributions are assumed, but no restrictions are imposed on the conditional interaction parameters. This section focusses on such two issues.

\subsection{About the choice of the truncated Poisson distribution}
A key feature of the truncated Poisson family of models, beside guaranteeing  existence of a joint distribution,  is the inclusion of the Poisson distribution as limiting case, reached  when  the truncation point $R$ grows to infinity.  In fact, other families, such as, for example, the quadratic and the sublinear Poisson ones \citep{NIPS2013_5153},  } guarantee existence of a joint distribution. But they do not provide the inclusion property, which is crucial if one wants to explore the effects of a possible model misspecification. Indeed, one could argue that the data generating process is truly Poisson, and that truncation represents an element of model misspecification. In what follows, we will explore the impact of such misspecification.

In standard settings, it is well known that, in presence of model misspecification, the maximum likelihood estimator, instead of converging to the true parameter value, converges to a Kullback-Leibler projection of the data-generating distribution onto the fitted model class. Thus, the maximum likelihood estimator still exhibits a desirable form of robustness to model misspecification. 
In this section, we  theoretically prove that, under suitable assumptions,
PC-LPGM based on the truncated Poisson assumption still converges to the true graph even if the data generating process is  Poisson. To this aim, some new notation is introduced, which is defined  on top of previously defined notation by conveniently adding a superscript $P$ for Poisson distributions and $TP$ for truncated Poisson distributions, when needed. 
	
	In what follows, the true conditional  distributions are Poisson, i.e.,
		\begin{equation}\label{Poison model}
	X_s|\boldsymbol{x}_{V\backslash\{s\}}\sim \text{Pois}(\lambda^*),~ \text{where } \lambda^*=\exp\{\sum_{t\ne s}\theta_{st}^*x_t\}.
	\end{equation}
	For a generic sample $\mathbb{X},$ the  rescaled log-likelihood under the true model is given by
	\begin{eqnarray*}
		\ell_n^{P}(\boldsymbol{\theta}_{s},\mathbb{X}_s;\mathbb{X}_{V\backslash\{s\}})&=&\frac{1}{n}\sum_{i=1}^n\log \mathbb{P}^{P}_{\boldsymbol{\theta}_s}(x_{is}|\bold{x}^{(i)}_{V\backslash\{s\}})\\
		&=& \frac{1}{n}\sum_{i=1}^n\exp\big\{\sum_{t\ne s}\theta_{st}x_{is}x_{it}-\log x_{is}!-\exp\{\sum_{t\ne s}\theta_{st}x_{it}\}\big\}.
	\end{eqnarray*}
	Let $\hat{\boldsymbol{\theta}}^{P}_{ns}$ be a maximum likelihood estimator of the rescaled log-likelihood  $\ell_n^{P}(\boldsymbol{\theta}_{s},\mathbb{X}_s;\mathbb{X}_{V\backslash\{s\}})$. Under {standard regularity} conditions, $\hat{\boldsymbol{\theta}}^{P}_{ns}$ converges to the true parameter $\boldsymbol{\theta}_s^*$.
	
	We now apply PC-LPGM assuming that the  conditional  distributions are truncated Poisson, that is 
	\begin{equation}\label{TPoisson model}
	X_s|\boldsymbol{x}_{V\backslash\{s\}}\sim \text{TPois}(\lambda),~ \text{where } \lambda=\exp\{\sum_{t\ne s}\theta_{st}x_t\}.
	\end{equation}
	In this way,  PC-LPGM is working with a misspecified model. The rescaled log-likelihood under the misspecified model  is given by
	\begin{eqnarray*}
		\ell_n^{TP}(\boldsymbol{\theta}_{s},\mathbb{X}_s;\mathbb{X}_{V\backslash\{s\}})&=&\frac{1}{n}\sum_{i=1}^n\log \frac{\mathbb{P}^{P}_{\boldsymbol{\theta}_s}(x_{is}|\bold{x}^{(i)}_{V\backslash\{s\}})}{F^P_i(R,\boldsymbol{\theta}_s)}\\
		&=& \frac{1}{n}\sum_{i=1}^n\log \mathbb{P}^{P}_{\boldsymbol{\theta}_s}(x_{is}|\bold{x}^{(i)}_{V\backslash\{s\}})-\frac{1}{n}\sum_{i=1}^n\log{F^P_{i}(R,\boldsymbol{\theta}_s)},
	\end{eqnarray*}
where $F^P_i(R,\boldsymbol{\theta}_s)=\mathbb{P}^{P}_{\boldsymbol{\theta}_s}(x_{is}\le R|\bold{x}^{(i)}_{V\backslash\{s\}})$.
	{Thanks to the inclusion property,} it is easy to see that for all $\epsilon_1>0,$ there exists a $R_0>0$ such that, for all $R>R_0$, it holds
	\begin{equation}\label{Rconvergence}
	\|\ell_n^{TP}(\boldsymbol{\theta}_{s},\mathbb{X}_s;\mathbb{X}_{V\backslash\{s\}})-\ell_n^{P}(\boldsymbol{\theta}_{s},\mathbb{X}_s;\mathbb{X}_{V\backslash\{s\}})\|_2< \epsilon_1,~ \forall\,\boldsymbol{\theta}_{s}\in\boldsymbol{\Theta} .
	\end{equation}
	
	Fix a $R$ that satisfies Equation \eqref{Rconvergence}, where $\epsilon_1$ is given in Appendix (see proof of Theorem~\ref{robust}), and
	let $\hat{\boldsymbol{\theta}}_{ns}^{TP}$ be a maximum likelihood estimator  of the rescaled log-likelihood  $\ell_n^{TP}(\boldsymbol{\theta}_{s},\mathbb{X}_s;\mathbb{X}_{V\backslash\{s\}})$. In the following theorem, we prove that $\hat{\boldsymbol{\theta}}^{TP}_{ns},$ under suitable conditions, converges to the true parameter $\boldsymbol{\theta}_s^*.$

	\begin{dl}\label{robust}
		Assume that the log-likelihood function of models \eqref{Poison model} and \eqref{TPoisson model} have { a} unique optimal solution on $\boldsymbol{\Theta}$. Then, $\hat{\boldsymbol{\theta}}_{ns}^{TP}$ converges to the true parameter $\boldsymbol{\theta}_s^*$ when $n$ tends to infinity provided that  $|\Lambda_{min}[Q_s^P({\boldsymbol{\theta}}_{s})]|=|\Lambda_{min}[\nabla^2\ell_{n}^{P}({\boldsymbol{\theta}}_{s},\mathbb{X}_s;\mathbb{X}_{V\backslash\{s\}})]|>\lambda_{min}>0,$ for all ${\boldsymbol{\theta}}_{s}\in \boldsymbol{\Theta}.$ 
	\end{dl}

\begin{remark}
		By following the same lines, it is easy to show that Theorem \ref{robust}  also holds for $\bold{K}\subset V$, i.e., under the same conditions  of Theorem \ref{robust} $\hat{\boldsymbol{\theta}}_{ns|\bold{K}}^{TP}$ converges to the true parameter $\boldsymbol{\theta}_{s|\bold{K}}^*$ when $n$ tends to infinity.
\end{remark}
The previous theorem shows that, provided that the truncation point $R$ is large enough, the maximum likelihood estimators derived from the misspecified conditional models are still node-wise consistent. Statistical properties of PC-LPGM, however, hinge on statistical properties  of the Wald-type statistic in the conditional models. The following theorem derives the conditions under which the Wald-type statistic derived under the true and the misspecified models are asymptotically equivalent under the null hypothesis, and, therefore, provide the same test results.
	\begin{dl}\label{robust1}
		Assume that the log-likelihood function of models \eqref{Poison model} and \eqref{TPoisson model} have { a} unique optimal solution on $\boldsymbol{\Theta}$. Then, the Z statistic $Z^{TP}_{st|\bold{K}}$ converges to  $Z^{P}_{st|\bold{K}}$ when $n$ tends to infinity provided that there exist positive constants $\lambda_{min}$, and $\lambda_{max}$ such that $$\lambda_{max}>\Lambda_{max}[Q^P_s({\boldsymbol{\theta}}_{s})]\ge\Lambda_{min}[Q^P_s({\boldsymbol{\theta}}_{s})]>\lambda_{min}>0,$$ 
		and $$\lambda_{max}>\Lambda_{max}[Q^{TP}_s({\boldsymbol{\theta}}_{s})]\ge\Lambda_{min}[Q^{TP}_s({\boldsymbol{\theta}}_{s})]>\lambda_{min}>0,$$ 
		for all ${\boldsymbol{\theta}}_{s}\in \boldsymbol{\Theta}.$
	\end{dl}
	\begin{proof}
		By applying the definition of Z statistic under the null hypothesis $H_0: \theta_{st}=0$, we get $$Z_{st|\bold{K}}^P=\dfrac{\sqrt{n}\hat{\theta}_{st|\bold{K}}^{P}}{\sqrt{\left[J^P(\hat{\boldsymbol{\theta}}_{s|\bold{K}}^{P})^{-1}\right]_{tt}}}=\dfrac{\hat{\theta}_{st|\bold{K}}^{P}}{\sqrt{\left[Q^P_s(\hat{\theta}_{st|\bold{K}}^{P})^{-1}\right]_{tt}}},$$ and $$Z_{st|\bold{K}}^{TP}=\dfrac{\hat{\theta}_{st|\bold{K}}^{TP}}{\sqrt{\left[Q^{TP}_s(\hat{\boldsymbol{\theta}}_{s|\bold{K}}^{TP})^{-1}\right]_{tt}}}.$$ It holds
		\begin{eqnarray*}
			\left|Z_{st|\bold{K}}^P-Z_{st|\bold{K}}^{TP}\right|&=& \left|\dfrac{\hat{\theta}_{st|\bold{K}}^{P}}{\sqrt{\left[Q^P_s(\hat{\theta}_{st|\bold{K}}^{P})^{-1}\right]_{tt}}}-\dfrac{\hat{\theta}_{st|\bold{K}}^{TP}}{\sqrt{\left[Q^{TP}_s(\hat{\boldsymbol{\theta}}_{s|\bold{K}}^{TP})^{-1}\right]_{tt}}}\right|\\
			&\le&	\left|\dfrac{\hat{\theta}_{st|\bold{K}}^{P}}{\sqrt{\left[Q^P_s(\hat{\theta}_{st|\bold{K}}^{P})^{-1}\right]_{tt}}}\right|+\left|\dfrac{\hat{\theta}_{st|\bold{K}}^{TP}}{\sqrt{\left[Q^{TP}_s(\hat{\boldsymbol{\theta}}_{s|\bold{K}}^{TP})^{-1}\right]_{tt}}}\right|\\
			&\le& \sqrt{\dfrac{\lambda_{max}}{p}}
			\left(\left|\hat{\theta}_{st|\bold{K}}^{P}\right|+\left|\hat{\theta}_{st|\bold{K}}^{TP}\right|\right)\\
			&\rightarrow& 0 \text{ when } n\rightarrow\infty,
		\end{eqnarray*}
		since $\hat{\theta}_{st|\bold{K}}^{P}$, $\hat{\theta}_{st|\bold{K}}^{TP}$ tend to 0 when $n\rightarrow\infty$; and the truncation point $R$ is large enough (where line 2 to line 3 due to the singular value decomposition of matrices).
	\end{proof}
As $Z_{st|\bold{K}}^P$  and $Z_{st|\bold{K}}^{TP}$ are asymptotically equivalent, so are results of the tests that they provide when used in PC-LPGM. In other words, when the true model for the conditional distributions is Poisson, under the previously stated conditions, PC-LPGM based on $Z_{st|\bold{K}}^{TP}$ leads to the same rejections as the correctly specified PC-LPGM, which is based on the proper test statistics $Z_{st|\bold{K}}^P.$ In practice, 
extensive simulation studies have shown that fixing a truncation point just equal to  the largest observed value guarantees that PC-LPGM under the two specifications (the true Poisson and the misspecified truncated Poisson) leads to the same results {(an excerpt of the results produced by such studies is given in Table \ref{table10-TPandP}, Appendix \ref{suppD}).}

	\subsection{ Unrestricted Poisson conditional models}\label{empirical}
	 Remark~\ref{rr} in Section~\ref{rr}  guarantees that, in the case of Poisson node conditional distributions, a proof of consistency of PC-LPGM with proper test statistic can be provided in the situation of ``competitive relationships" between variables.  It is interesting to explore if consistency also holds   with unrestricted conditional interaction parameters, a situation for which a theoretical proof is still an unsolved question. 
	
	We devote this section to an empirical study of consistency of our proposed algorithm in this setting. We clarify that, in what follows, PC-LPGM works under a correct model specification, i.e., tests are based on the proper  $Z_{st|\bold{K}}^P$ statistic. }
	We aim to measure the ability of  PC-LPGM  to recover the true structure of the graphs, also in situations where relatively moderate sample sizes are available.   As measure of ability, we  adopt two measures: PPV that stands for Positive Predictive Value and is defined as TP/(TP+FP); and Sensitivity (Se), defined as TP/(TP+FN), where TP (true positive), FP (false positive), and FN (false negative) refer to the { number of} inferred edges.

	In doing these studies, we also aim to compare  PC-LPGM to a number of popular structure learning algorithms. We therefore consider the Local Poisson Graphical Models (LPGM) approach \citep{allen2013local},  as implemented in the {\tt R} package {\tt XMRF},  and Poisson dependency networks (PDNs) \citep{hadiji2015poisson}, 
 implemented in the {\tt R} function {\tt learnPDN} (see \url{https://sfb876.tu-dortmund.de/auto?self=%24eon9ai8e80}). 
	It is worth remembering that structure learning for discrete undirected graphical models is usually performed by employing methods for continuous data after proper data transformation.  We therefore consider two representatives of  approaches based on the Gaussian assumption, {that is,}   variable selection with lasso (VSL) \citep{meinshausen2006high}, and  the graphical lasso algorithm (GLASSO) \citep{friedman2008sparse}. 
Moreover, we consider two structure learning methods dealing with the class of nonparanormal distributions,  the nonparanormal-Copula algorithm (NPN-Copula) \citep{liu2009nonparanormal}, and the nonparanormal-SKEPTIC algorithm (NPN-Skeptic) \citep{liu2012nonparanormal}.  {These last four algorithms are all available in the {\tt R} package {\tt huge}}.  

	\subsubsection{Data generation}
	For two different cardinalities,  $p=10$ and $p=100$,  we consider three graphs of different structure: (i) a scale-free graph, in which the node degree distribution follows a powerlaw; (ii) a hub graph, where each node is connected to one of the hub nodes; (iii) a random graph, where presence of edges are   {independent and identically distributed} Bernoulli random variables.
	To construct the scale-free and hub networks, we employed the \texttt{R} package \texttt{XMRF}. For the scale-free network, we assumed a power law with parameter 0.01 for the node degree distribution. For the hub network, we assumed two hub nodes for $p= 10$, and 5 hub nodes for $p=100$.  To construct the random network,  we employed the \texttt{R} package \texttt{igraph} with edge probability $0.2$ for $p=10$, and $0.02$ for $p=100$.   See Figure~\ref{graphtypes} and~\ref{graphtypes100} for a plot of the three chosen graphs for $p=10$ and $p=100$, respectively.  
	
	\begin{figure}[htbp]
		\begin{center}
			\includegraphics[width = 0.9\linewidth, height=0.32\textheight]{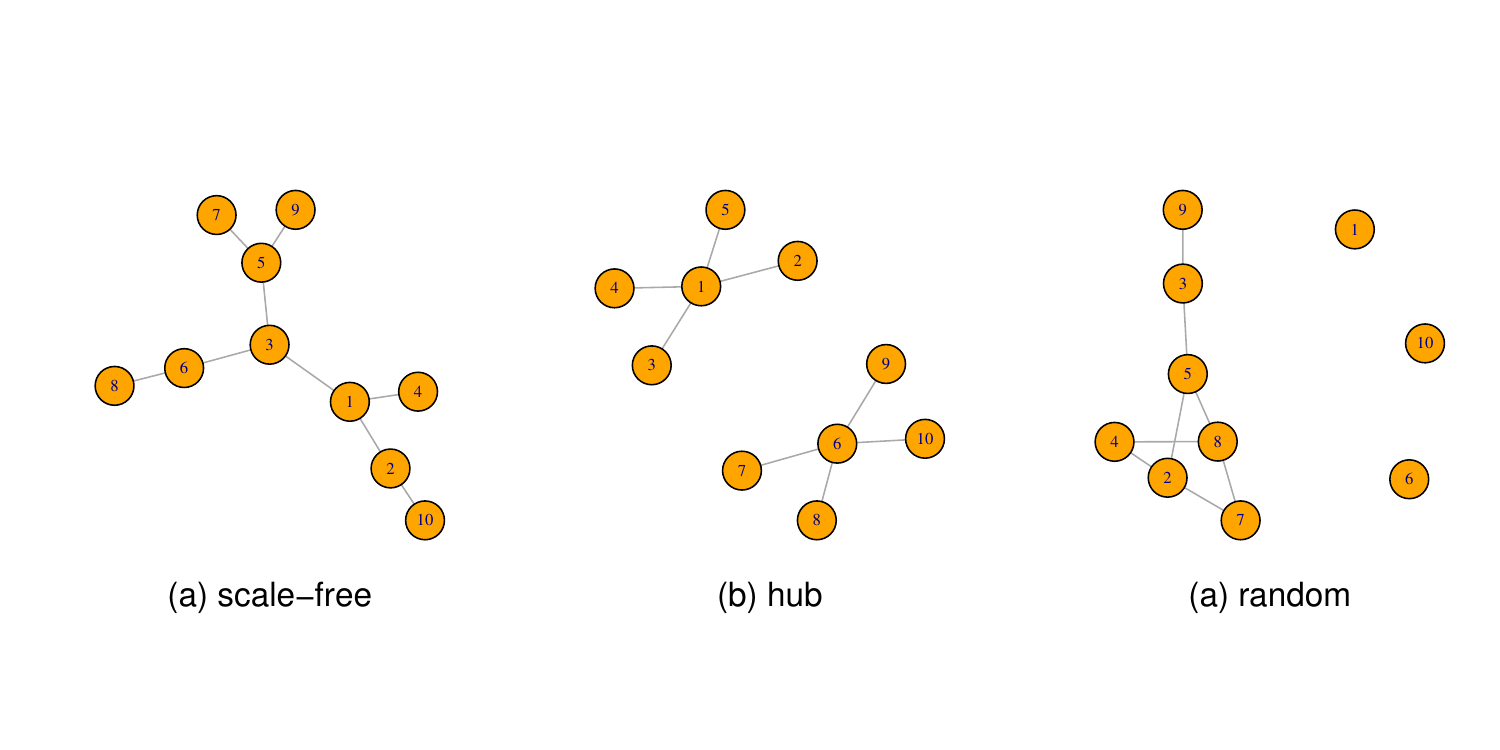}
			\vspace{-3em}
			\caption{ \small The graph structures for  $p=10$ employed in the simulation studies: (a) scale-free; (b)  hub; (c)  random graph.}
			\vspace{-2em}
			\label{graphtypes}
		\end{center}
	\end{figure}

	For each graph, 500 datasets were sampled for three sample sizes, $n=200,1000,2000$.  
	To generate the data, we followed the approach in \cite{allen2013local}.
	{Let $\mathbb{X}\in\mathbb{R}^{n\times p}$ be the set of $n$ independent observations of random vector $\bold{X}$.
		Then, $\mathbb{X}$ is obtained from the following model}
	$\mathbb{X}=\mathbb{Y}W+\mathbb{\epsilon},$
	where $\mathbb{Y}=(y_{st})$ is an $n\times (p+p(p-2)/2)$ matrix whose entries $y_{st}$ are realizations of independent random variables $Y_{st}\sim \text{Pois}(\lambda_{true})$ and $\mathbb{\epsilon}=(e_{st})$ is an $n\times p$ matrix with  entries $e_{st}$ which are realizations of random variables $E_{st}\sim \text{Pois}(\lambda_{noise})$. 
	Let $W$ be the adjacency matrix of a given true graph, then the adjacency matrix is encoded by matrix $W$ as $W=[I_p;P\odot(1_p tri(W)^T)]^T$. Here, $P$ is a $p\times (p(p-1)/2)$ pairwise permutation matrix, $\odot$ denotes the element-wise product, and $tri(W)$ is the $(p(p-1)/2)\times 1$ vectorized upper triangular part of $W$. As in \cite{allen2013local}, we simulated data at two signal-to-noise ratio (SNR) levels. We set $\lambda_{true}=1$ with $\lambda_{noise}=5$ for the low SNR level, and $\lambda_{noise}=0.5$ for the high SNR level.
	\vspace{-2.5em}
	\begin{figure}[htbp]
		\begin{center}
			\includegraphics[width = 0.9\linewidth, height=0.32\textheight]{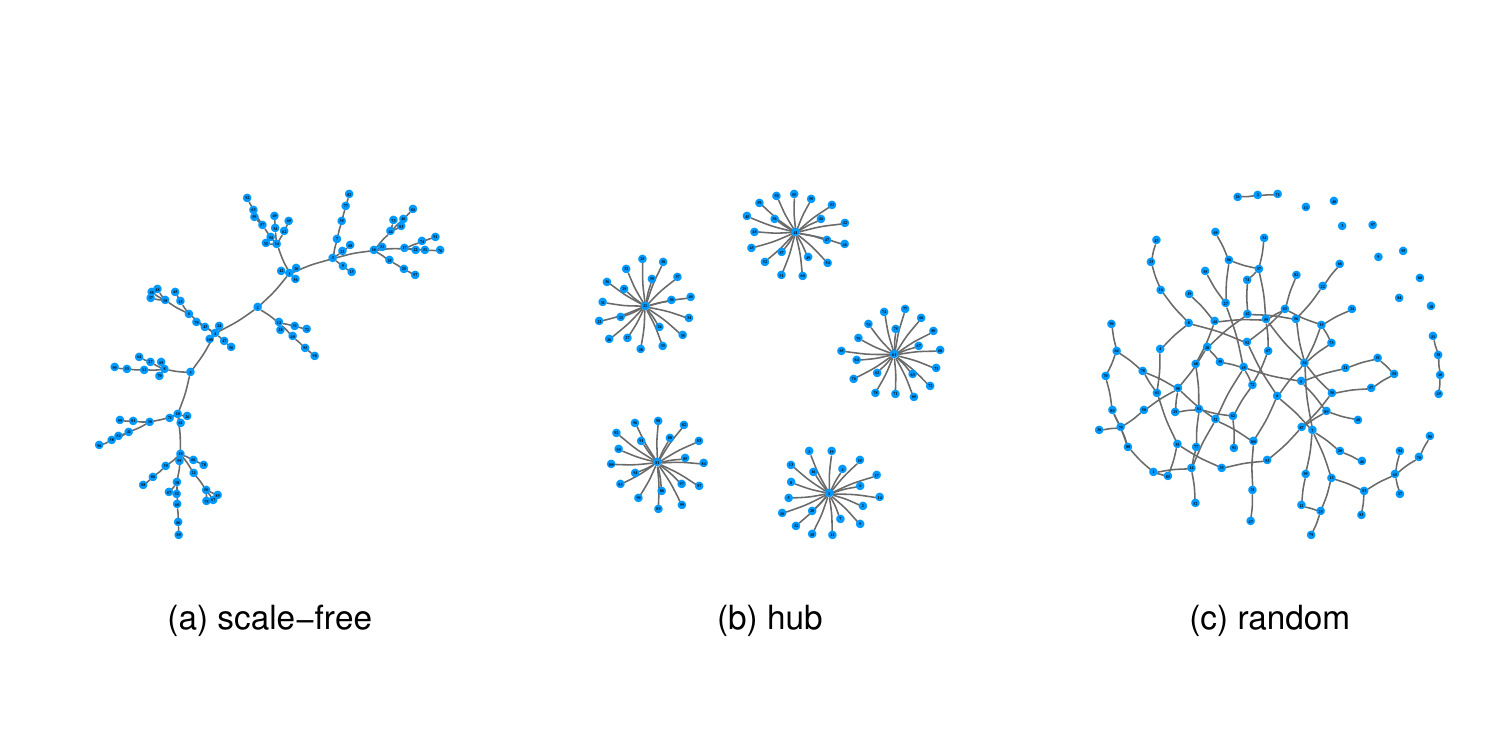}
			\vspace{-3em}
			\caption{\small The graph structures for  $p=100$ employed in the simulation studies: (a) scale-free; (b)  hub; (c)  random graph.}
			\label{graphtypes100}
			\vspace{-2em}
		\end{center}
	\end{figure}
	\vspace{-1.5em}
	
	\subsubsection{Results}
	The considered algorithms  are listed below, along with specifications, if needed, of tuning parameters. 
	Algorithms for Gaussian data have been used on log transformed data shifted by 1. Whenever a regularization parameter $\lambda$ had to be chosen, the StARS algorithm \citep{liu2010stability} was employed, which aims to seek the value of $\lambda \in (\lambda_{min},\lambda_{max})$, $\lambda_{opt} $ say, leading to the most stable set of edges. We refer the reader to Appendix \ref{suppC}, for details on the StARS algorithm and its tuning parameters, in particular the variability threshold $\beta$, the number of parameters $\lambda$, i.e., $nlambda$, and the number of subsamplings $B$. It is worth noting that, whenever the graph  corresponding to  $\lambda_{opt}$ was empty,  we shifted to the first nonempty graph (if it existed) in the decreasing regularization path. We therefore considered:
	\begin{itemize}
		\item[-] {\bf PC-LPGM:}    level of significance of tests $1\%$;
		\item[-] {\bf LPGM:}  $\beta=0.05$, $nlambda=10$, $B=20;$ $\frac{\lambda_{min}}{\lambda_{max}}=0.01$; $\gamma = 0.001$, $sth=0.9$; 
		\item[-] {\bf VSL:} $\beta=0.1$, $nlambda=10$, $B=20;$
		\item[-] {\bf GLASSO:}  $\beta=0.1$, $nlambda=10$, $B=20;$
		\item[-] {\bf NPN-Copula:} $\beta=0.1$, $nlambda=10$, $B=20;$
		\item[-] {\bf NPN-Skeptic:} 	$\beta=0.1$, $nlambda=10$, $B=20.$
	\end{itemize}
	
{	For the two considered vertex cardinalities,  $p=10, 100$, and for the chosen sample sizes $n=200, 1000, 2000$, Figure \ref{10-5} and \ref{100-5} plot  Monte Carlo means of TP, PPV and Se for each of considered method at low ($\lambda_{noise}=5$) and high ($\lambda_{noise}=0.5$) SNR levels.}
	Each value is computed as an average of  the 1500 values obtained by simulating 500 samples for each of the three networks. 
	Monte Carlo means (and standard deviations) of the same quantities disaggregated by network type are given in Appendix \ref{suppD},  Tables \ref{table1-chap1} -- \ref{table4-chap1}. 
	These results indicate that the PC-LPGM algorithm is consistent and  outperforms, on average, Gaussian-based competitors (VSL, GLASSO), nonparanormal-based competitors (NPN-Copula, NPN-Skeptic) as well as the state-of-the-art algorithms that are designed specifically for Poisson graphical models (LPGM, PDN) on average in terms of reconstructing the structure from given data. 
	
	
\begin{figure}[htbp]
	\centering
		\begin{subfigure}{\textwidth}
		\caption{$\lambda_{noise}=0.5$} \label{s10-05}
		\vspace{-2em}
		\includegraphics[width = 1\linewidth, height=0.35\textheight]{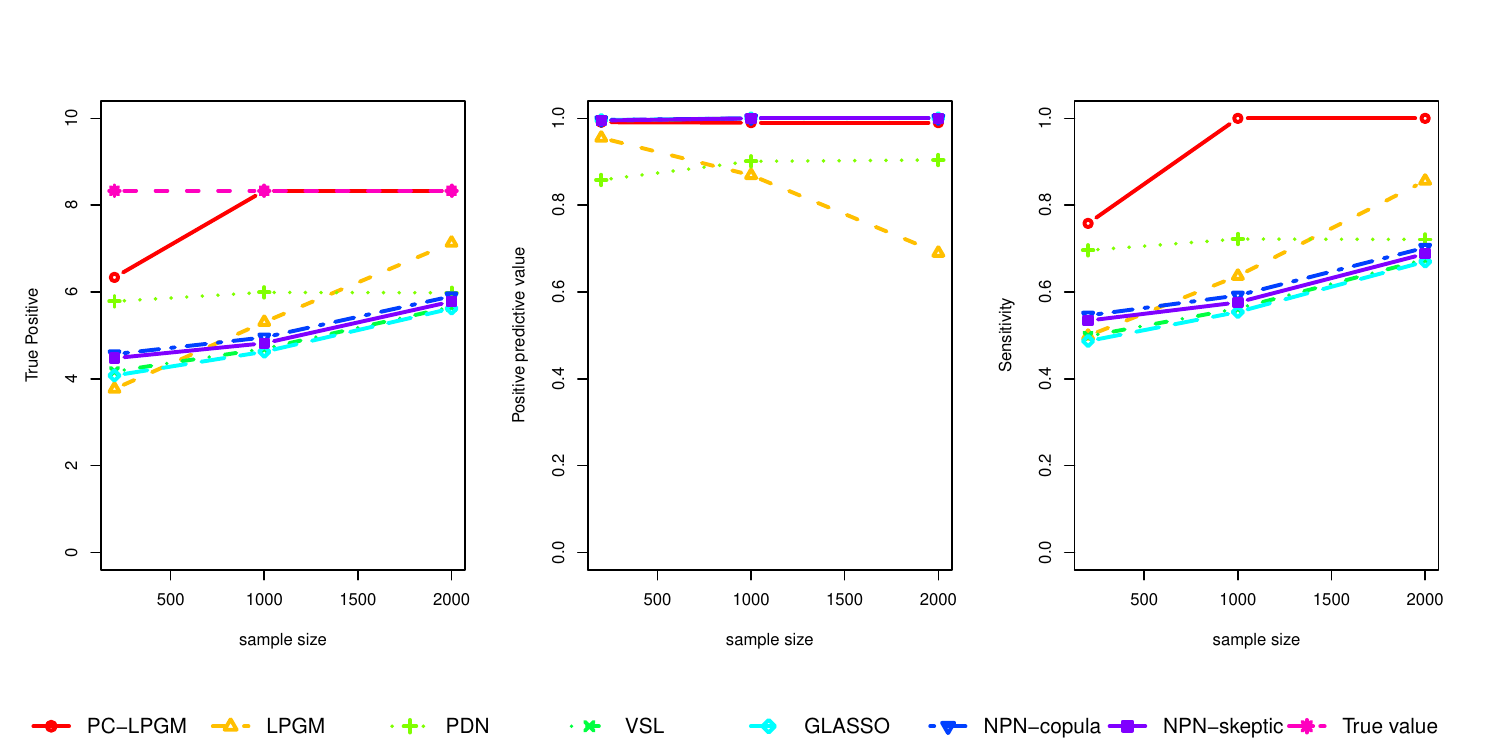}
		\end{subfigure}
\vspace{2em}
\newline
		\begin{subfigure}{\textwidth}
		\caption{$\lambda_{noise}=5$} \label{s10-5}
		\vspace{-2em}
		\includegraphics[width = 1\linewidth, height=0.35\textheight]{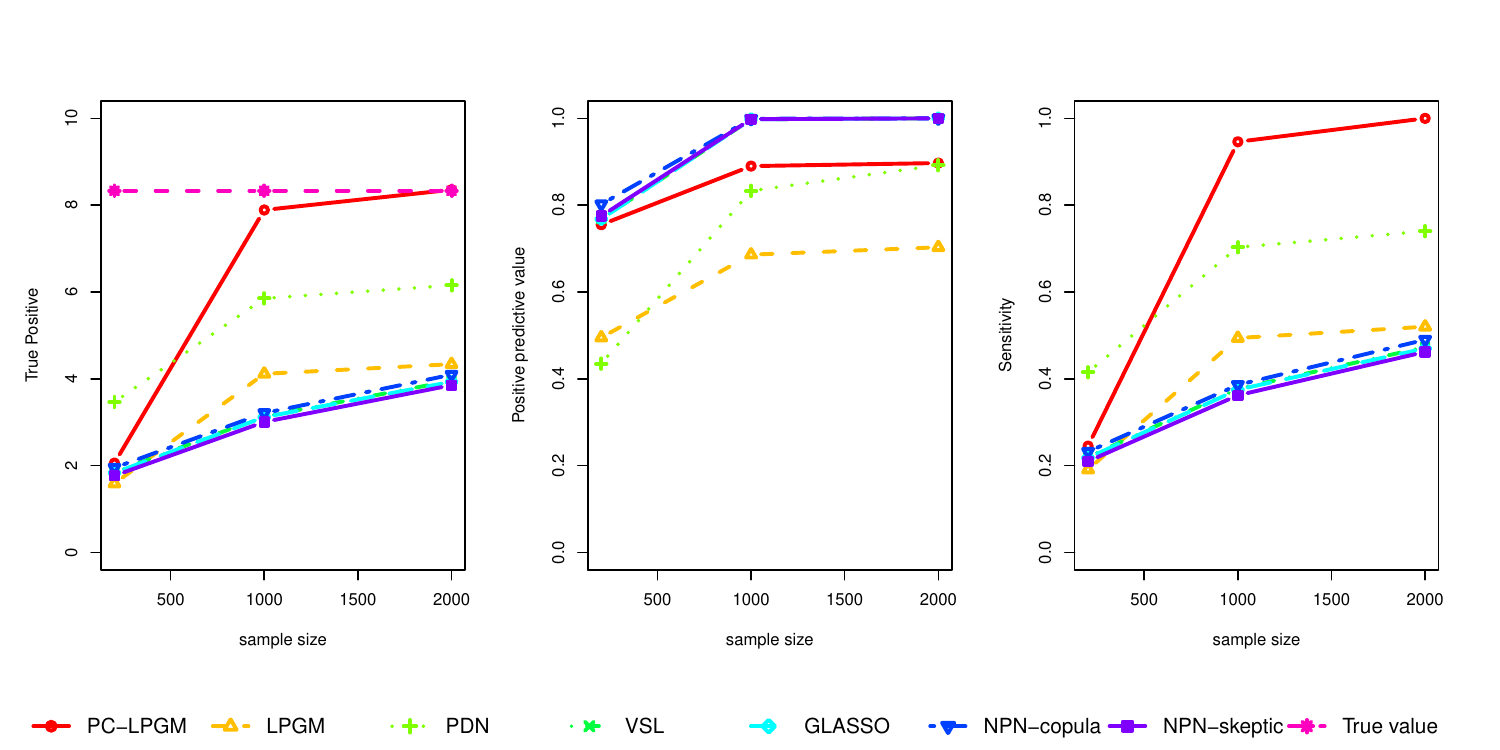}
		\end{subfigure}

\caption{\scriptsize Monte Carlo means of TP, PPV and Se for PC-LPGM; LPGM; PDN; VSL; GLASSO; NPN-Copula; NPN-Skeptic for networks in Figure~\ref{graphtypes} ($p=10$), sample sizes $n=200,\,1000,\, 2000,$  and SNR level $\lambda_{noise}=0.5$ (\subref{s10-05}) and $\lambda_{noise}=5$ (\subref{s10-5}).}
			\label{10-5}
\end{figure}

	When $p=10,$ the PC-LPGM algorithm reaches the highest TP value, followed by the PDN and the LPGM algorithms. When $n\ge 1000,$ PC-LPGM recovers almost all edges for both low and high SNR levels,  see { Figure \ref{10-5}. } 
A closer look at the PPV and Se plot  provides further insight of the behaviour of considered methods. Among the algorithms with highest PPV, PC-LPGM shows a sensitivity approaching 1 already at the sample size $n=1000$ for both a  high and a low SNR level.  It is worth noting that, LPGM algorithm was successful only for a high SNR level ($\lambda_{nois}=0.5$). 
	
	It is interesting to note that the performance of the PC-LPGM algorithm is far better than that of the competing algorithms employing the Poisson assumption,  {PDN and LPGM}. This might be explained in terms of difference between penalization and restriction of the conditional sets. In the LPGM  algorithm, as well as in the PDN algorithm, a prediction model is fitted locally on all other variables, by mean of a series of independent penalized regressions. In the PC-LPGM algorithm, the number of variables in the conditional sets is controlled and progressively increased from 0 to $p-2$ (or to the maximum number of neighbours $m$). In our simulations, this second strategy appears to be more powerful in the network reconstruction.

	The Gaussian based methods (VSL, GLASSO) perform reasonably well, with an inferior score with respect to the leading threesome only for the hub graph at high SNR level (see Table \ref{table1-chap1}, Appendix \ref{suppD}). 
	It is worth noting that sophisticated techniques that replace the Gaussian distribution with a more flexible continuous distribution such as the nonparanormal distribution, for example, NPN-Copula, NPN-Skeptic  show slight gains in accuracy over the naive analysis. 
%

\begin{figure}[htbp]
	\centering
		\begin{subfigure}{\textwidth}
		\caption{$\lambda_{noise}=0.5$} \label{s100-05}
		\vspace{-2em}
		\includegraphics[width = 1\linewidth, height=0.35\textheight]{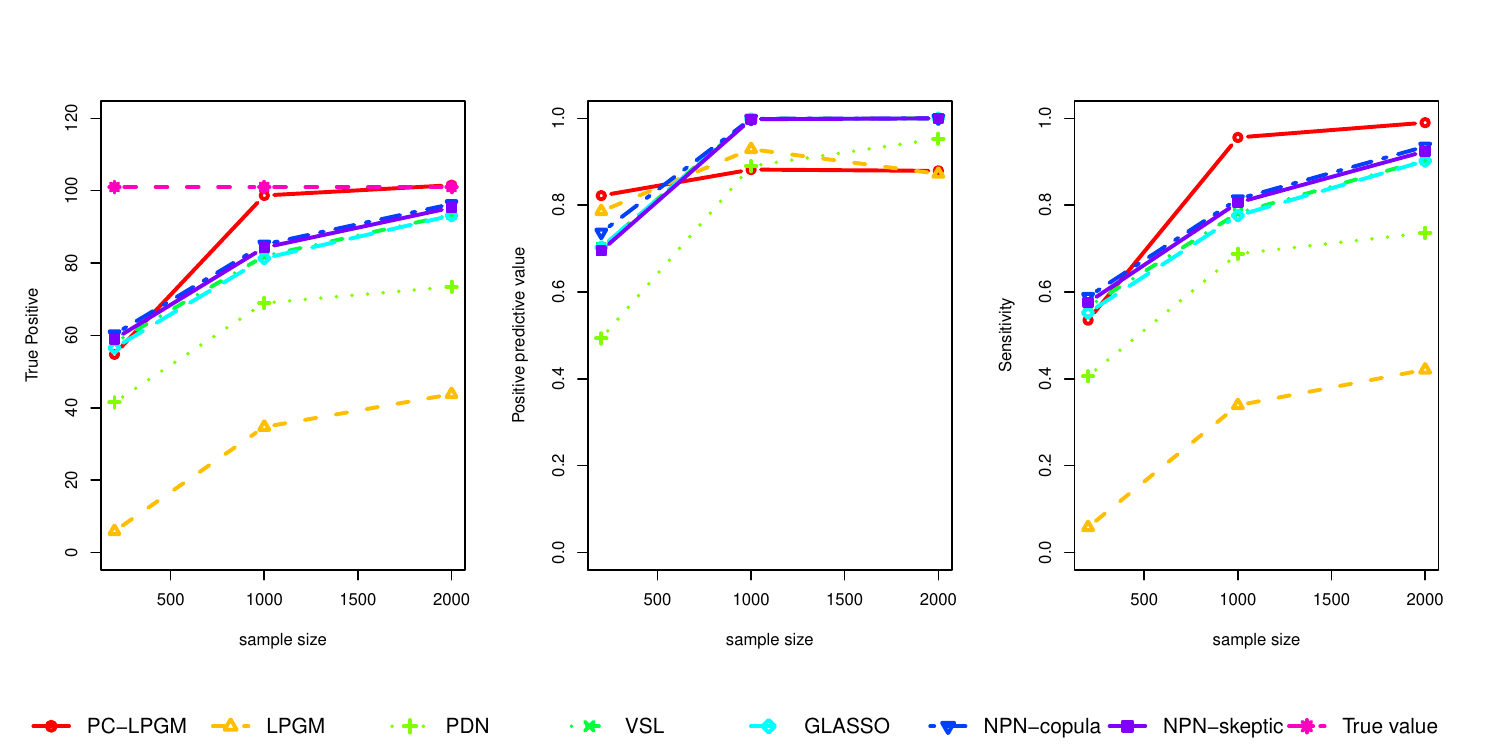}
		\end{subfigure}
\vspace{2em}
\newline
		\begin{subfigure}{\textwidth}
		\caption{$\lambda_{noise}=5$} \label{s100-5}
		\vspace{-2em}
		\includegraphics[width = 1\linewidth, height=0.35\textheight]{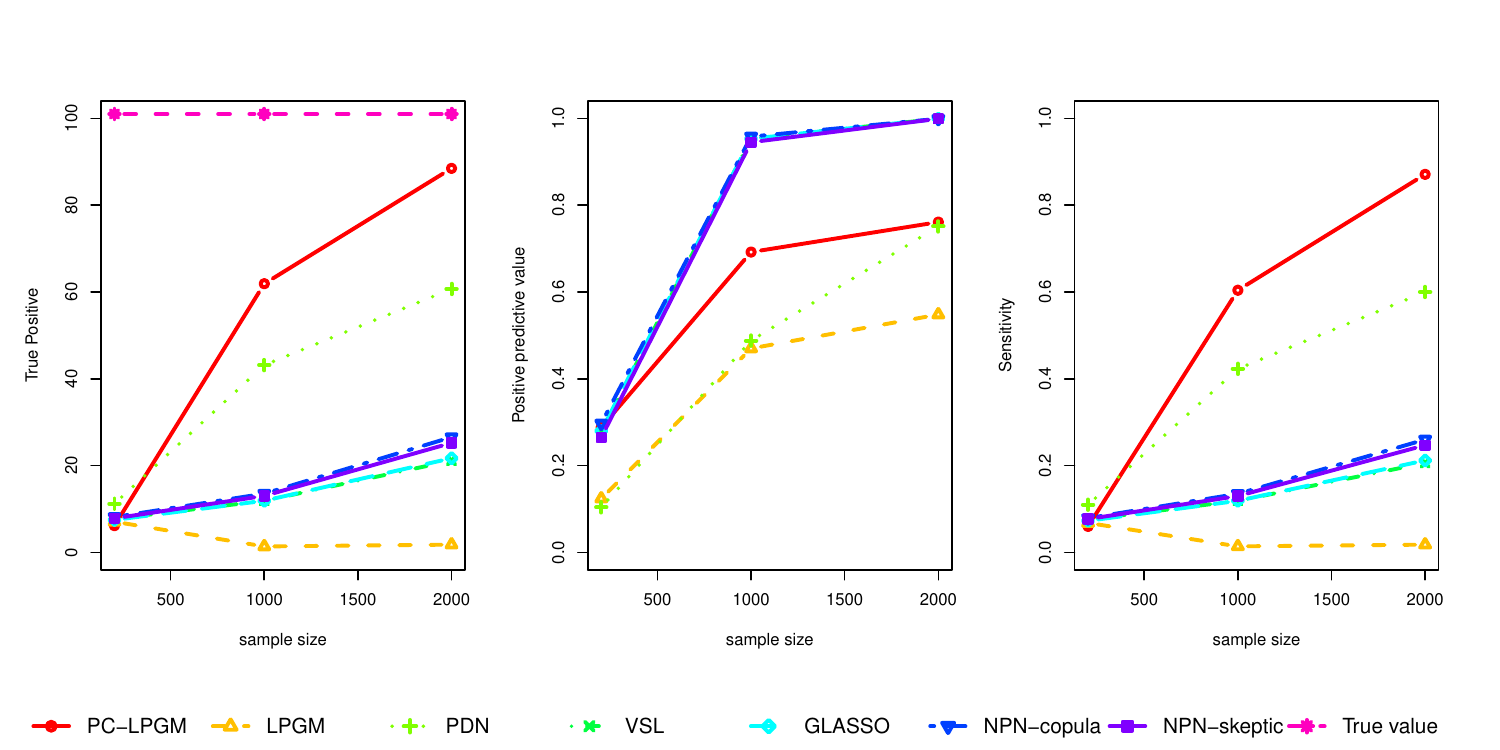}
		\end{subfigure}

\caption{\scriptsize Monte Carlo means of TP, PPV and Se for PC-LPGM; LPGM; PDN; VSL; GLASSO; NPN-Copula; NPN-Skeptic for networks in Figure~\ref{graphtypes} ($p=10$), sample sizes $n=200,\,1000,\, 2000,$  and SNR level $\lambda_{noise}=0.5$ (\subref{s100-05}) and $\lambda_{noise}=5$ (\subref{s100-5}).}
			\label{100-5}
\end{figure}
	
	Results for the high dimensional setting ($p=100$) are somehow comparable, as it can be seen in { Figure~\ref{100-5}.} 
The PC-LPGM outperforms all competing methods, and differences among algorithms are more evident. The TP score of PC-LPGM becomes  already reasonable when $n$ approaches 2000 observations. It is worth noting that performances of methods based on $l_1$-regularized regression, {LPGM in particular}, {} is, overall, less accurate and more variable in this scenario.  An  extensive analysis of the results revealed that  the graph recovered by LPGM is almost comparable to an empty graph in a number of cases.  These poor performances attracted the attention of one Reviewer and of the Action Editor, who asked for ``more challenging simulations, and extremely honest comparisons of the results".  Table \ref{table10-poisnew} and Table \ref{table100-poisnew} address this request. For the two vertex cardinalities, $p = 10$ and $p = 100,$ the tables report Monte Carlo  means of TP, PPV and Se obtained by simulating 500 samples from graphs in Figure \ref{graphtypes} for variables with Poisson node conditional distribution with mean $\lambda= 1$ and levels of noise $ \lambda_{noise} = 0.5,5$.  For both values of $p,$ sample sizes have been considered also below the limits for convergence. As performances of LPGM are highly dependent on the tuning of its parameters ($\beta$, $\gamma$, $sth$, etc), for this specific algorithm figures in Table \ref{table10-poisnew} and Table \ref{table100-poisnew} refer to the best combination of parameters that we managed to find ($B=50,$ $nlambda=20$, $\frac{\lambda_{min}}{\lambda_{max}}=0.01$, $\gamma = 10^{-6}$, $sth=0.6$, $\beta=0.1$ for $p=10$ and $\beta=0.05$ for $p=100$).  In other words,  we let, to the best of our abilities, LPGM work under a favourable tuning. Performance indicators overall show that competing algorithms still compare unfavourably with PC-LPGM algorithm in most scenarios. 

	Overall, results seem to demonstrate the good performances of PC-LPGM algorithm in all considered situations. 
{

\section{On the learning strategy of PC-LPGM}\label{discussion}
\noindent

	In the previous sections we have presented  PC-LPGM, given statistical guarantees, and discussed its consistency with respect to the model specification. 
	In this section, we will  further  discuss its learning strategy, i.e., hypothesis testing.  
	
	
	The} main ingredient that distinguishes PC-LPGM from its potential competitors is the use iterative hypothesis testing instead of penalized estimation.  This substitution offers, in our view, a number of possible advantages. Firstly, inheriting the advantages of the PC algorithm, it allows to easily implement sparsity by a control on the number of variables, $m,$ in the conditional sets, avoiding at the same time over-shrinking of small but significant covariate effects. Secondly, it offers computational advantages, especially when sparse networks are the target of inference.
	Finally, hypothesis testing is scale-invariant, i.e., is not affected by scale transformations of regressors.
	
	The following two empirical studies shed more light on  the above mentioned advantages by comparing PC-LPGM with its most natural counterpart, LPGM.  \\
	
	\noindent		{\bf Sparsity and computational costs.}
	{ In the first study, to guarantee a fair comparison between the two algorithms, we adopted the playground of Section 6, i.e., we worked with Poisson node-conditional distributions and unrestricted dependencies among variables.}
	For  two vertex cardinalities,  $p =10, 100$, and one sample size, $n=1000,$ we  compared PC-LPGM with proper test statistic $Z_{st|\bold{K}}^P$ and LPGM algorithm in learning  a number of random graphs having different edge probability $\pi$. In detail, we fixed nine values for $\pi,$ running from $0.1$ to $0.9$ for $p=10,$ and  from 0.01 to 0.09 for $p=100.$ {Considering that, in this setting, the total number of edges is a random variable with expected value $\pi \frac{p(p-1)}{2},$ we allow, on average,  up to 45 edges when $p=10$ and up to 445 when $p=100$.  For each value of $\pi$, a network was generated and 500 samples were simulated from it as in Section 6.1 at both the { high }  ($\lambda_{noise}=0.5)$ and the { low }  ($\lambda_{noise}=5)$ SNR level.  }The two algorithms were tuned as follows
	\begin{itemize}
		\item[-] {\bf PC-LPGM:}    level of significance of tests $1\%$; $m=8$ for $p=10$; $m=3$ for $p=100;$
		\item[-] {\bf LPGM:}  $\beta=0.	05$; $B=50;$ $\frac{\lambda_{min}}{\lambda_{max}}=0.01$; $\gamma = 10^{-6}$, $sth=0.6$, $nlambda=10$.
	\end{itemize}
	It is worth noting that, with $p=10$, we avoided limiting the cardinality of the conditional sets, that is, we did not impose any prior knowledge on sparsity of the graph.
	
	Table \ref{table10-time} 
	reports  Monte Carlo means of TP,  PPV, Se, and running time for the two algorithms in the two cardinality scenarios.
	The runtime analysis (second) was done on an CPU: Intel(R) Xeon(R) CPU E5-4650 v3 @ 2.10GHz on Linux and using R 3.5.1 and 20 cores.

		When $p=10$, PC-LPGM performs better than LPGM for almost all values of probability $\pi\le 0.5,$  highlighting the efficiency of PC-LPGM when dealing with sparse graphs. The average of the ratio of the runtime of PC-LPGM over that of LPGM is around 0.33, showing that PC-LPGM, on average, needs about one third of the time needed to LPGM.  
		When $p=100$, PC-LPGM reaches the highest PPV and Se in almost all cases with {comparable} runtime when the complexity bound is achieved, say $m\le 3$. 		
		It is worth noting that the computational complexity of PC-LPGM is an exponential function in $m$. In the worst case, the algorithm is infeasible if  $p$ and $m$ are both large. We refer the reader to \cite{kalisch2007estimating} for more comments about complexity of PC algorithm.
		
{
With the second study, we moved to the truncated Poisson model specification setting. In what follows, we denote by PC-TPGM  and TPGM, respectively, our algorithm for the truncated Poisson ($Z_{st|\bold{K}}^{TP}$ test statistic) and the \citet{allen2013local} algorithm under a truncated Poisson  model specification. The key difference in working with truncated models instead of their untruncated counterparts is the presence, in the conditional distributions, of the normalization term $\exp D(\langle \boldsymbol{\theta}_s,\bold{x}_{V\backslash\{s\}}\rangle)$, whose estimation impacts on the computational cost. To guarantee a fair comparison, we  re-implemented {\it ex-novo} the TPGM algorithm   and both PC-TPGM and TPGM were based on the same Nelder--Mead optimization algorithm, implemented in the {\tt R}  function \texttt{optim}. 

For  three vertex cardinalities, i.e., $p =10, 50, 100$, and one sample size, $n=1000,$ three random graphs were generated, each for a given  probability $\pi$ of edge inclusion. In detail, we chose $\pi=0.1,0.2,0.3$  for $p=10;$  $\pi=0.02,0.04,0.06$  for $p=50;$  and $\pi=0.01,0.02,0.03$  for $p=100.$  500 samples were simulated from each network as in Section 6.1 at both the  high ($\lambda_{noise}=0.5),$ and the low  ($\lambda_{noise}=5) $ SNR level. The tuning parameters were selected as previously specified, when needed.  {As expected, for both algorithms, the ability to reconstruct the true networks -as measured by the previously defined metrics- was confirmed to be unaltered over the model specification} (Poisson or truncated Poisson),  when a sufficiently large truncation point $R$ was fixed. For this reason, in what follows, we report only results on  runtime. 

Figure \ref{runtime} compares the runtime of the two algorithms.  For completeness,  runtimes under the Poisson model specification are also reported, with the familiar acronyms for the algorithms.
The figure  plots  Monte Carlo means (over the 1500 replications) of runtime for each of considered method at  high  and the low  SNR level.  As expected, PC-TPGM and TPGM algorithms showed a computational cost  higher than the one observed under the Poisson assumption, but superiority of PC-TPGM over TPGM is preserved.  Overall, PC-LPGM is the most efficient  algorithm,  followed by LPGM. For this reason, we advise the user to use, when the working conditions allow ($p$ large), PC-LPGM (or LPGM).}

			\begin{table}[ht]
				\centering
				{\scriptsize
					\caption{\label{table10-time}\scriptsize{ Monte Carlo  means of TP, PPV, Se and runtime obtained by simulating 500 samples 
							from two random graphs with $p = 10$ and $p=100$ variables with Poisson node conditional distribution and levels of noise $\lambda_{noise} = 0.5, 5$. The probability  of 
							edge inclusion $\pi$ runs from 0.1 to 0.9 for $p=10$, and from 0.01 to 0.09 for $p=100$.}}
					\begin{tabular}{c|c|c|rrrr|rrrr}
						\hline
				~&~&~&\multicolumn{4}{c|} {\bf LPGM}&\multicolumn{4}{c}{\bf PC-LPGM }\\
				$\lambda_{noise}$	&	$p$ & $\pi$ & TP  & PPV & Se & time & TP  & PPV & Se & time \\ 
						\hline
				~&		&0.1 & 7.058 & 0.649 & 0.784 & 8.372 & 8.401&0.976&  0.934 & 1.577 \\ 
				~&		&0.2 & 9.390 & 0.475 & 0.939 & 7.823 &10.000&0.983&  1.000 & 1.413 \\ 
				~&		&0.3 & 12.296 & 0.523 & 0.946 & 8.257 & 12.767 & 0.987&  0.982 & 1.595 \\ 
				~&		&0.4 & 21.006 & 0.873 & 0.955 & 7.568 & 17.490 & 0.996 & 0.795 & 2.465 \\ 
				~&		10&0.5 & 22.298 & 0.906 & 0.969 & 7.332 & 18.282 & 0.997 & 0.795 & 2.877  \\ 
				~&		&0.6 & 26.626 & 0.859 & 0.918 & 7.038 & 17.626 & 0.997 & 0.608 & 2.983 \\ 
				~&		&0.7 & 28.772 & 0.848 & 0.899 & 7.024 & 16.850 & 0.998 & 0.527 & 2.889 \\ 
				~&		&0.8 & 30.800 & 0.905 & 0.880 & 6.096 & 15.950 & 0.998 & 0.456 & 2.825  \\ 
				0.5	&	&0.9 & 31.432 & 0.928 & 0.827 & 6.331 & 14.626 & 0.998 & 0.385 & 2.785  \\ 
				&&&&&&&&&&\\
				~&		&0.01 & 34.064 & 0.748 & 0.946 & 81.192 & 36.000& 0.927&  1.000 & 8.908\\ 
				~&		&0.02 & 9.842 & 0.934 & 0.096 & 53.017 & 97.444 & 0.910 & 0.946 & 14.065 \\ 
				~&		&0.03 & 24.700 & 0.843 & 0.179 & 35.820 & 134.150 & 0.939 & 0.972 & 10.538  \\ 
				~&		&0.04 & 152.303 & 0.821 & 0.841 & 35.460 & 173.363 & 0.954 & 0.958 & 13.459 \\ 
				~&		100&0.05 & 188.227 & 0.337 & 0.798 & 39.953 & 214.367 & 0.961 & 0.908 & 19.874\\ 
				~&		&0.06 & 254.077 & 0.144 & 0.895 & 40.525 & 223.370 & 0.960 & 0.787 & 19.211 \\ 
				~&		&0.07 & 290.883 & 0.164 & 0.871 & 38.545 & 220.047 & 0.954 & 0.659 & 21.600 \\ 
				~&		&0.08 & 306.810 & 0.146 & 0.859 & 37.689 & 228.437 & 0.955 & 0.640 & 19.789 \\ 
				~&		&0.09 & 358.010 & 0.179 & 0.833 & 34.979 & 217.850 & 0.947 & 0.507 & 17.458 \\ 
				&&&&&&&&&&\\
						\hline
				&&&&&&&&&&\\
				~&		&0.1 & 8.857 & 0.556 & 0.984 & 11.351 & 6.807 & 0.958& 0.756& 1.368 \\ 
				~&		&0.2 & 8.730 & 0.897 & 0.873 & 10.989 & 8.500 & 0.967& 0.850& 1.254 \\ 
				~&		&0.3 & 12.593 & 0.902 & 0.969 & 11.032 & 11.757 & 0.992 & 0.904 & 1.347 \\ 
				~&		&0.4 & 20.357 & 0.559 & 0.925 & 10.626 & 16.027 & 0.998 & 0.728 & 2.221 \\ 
				~&		10&0.5&21.040 & 0.566 & 0.915 & 10.833 & 17.667 & 0.998 & 0.768 & 2.518  \\ 
				~&		&0.6 & 27.000 & 0.687 & 0.931 & 37.942 & 16.667 & 0.982 & 0.575 & 8.838 \\ 
				~&		&0.7 & 28.837 & 0.757 & 0.901 & 34.501 & 16.127 & 0.997 & 0.504 & 8.287  \\ 
				~&		&0.8 & 30.893 & 0.828 & 0.883& 26.211& 15.120 & 0.997 & 0.432&  7.734  \\ 
				5	&	&0.9 &32.117 & 0.908 & 0.845& 34.305& 13.853&  0.999&  0.365&  6.720  \\ 
				&&&&&&&&&&\\
				~&		&0.01 & 36.000 & 0.792 & 1.000 & 42.504 & 35.900&  0.918&  1.000 & 10.127 \\ 
				~&		&0.02 &103.000 & 0.895 & 1.000 & 42.199 & 96.400 & 0.902 & 0.936 & 11.372  \\ 
				~&		&0.03 &137.323 & 0.892 & 0.995 & 40.119 & 132.870 & 0.924 & 0.963 & 15.444  \\ 
				~&		&0.04&177.193 & 0.216 & 0.979 & 31.798 & 170.020 & 0.944 & 0.939 & 17.008 \\ 
				~&		100&0.05 & 224.397 & 0.399 & 0.951 & 29.219 & 207.693 & 0.954 & 0.880 & 17.998 \\ 
				~&		&0.06 & 244.570 & 0.474 & 0.861 & 29.790 & 212.047 & 0.952 & 0.747 & 14.699 \\ 
				~&		&0.07 &232.533 & 0.488 & 0.696 & 32.580 & 206.760 & 0.944 & 0.619 & 16.127  \\ 
				~&		&0.08 &252.373 & 0.529 & 0.707 & 66.931 & 213.083 & 0.945 & 0.597 & 33.321 \\ 
				~&		&0.09 &228.670 & 0.452 & 0.532 & 69.960 & 201.286 & 0.938 & 0.468 & 48.612 \\ 
						\hline
					\end{tabular}
				}
			\end{table}
		
\begin{figure}[htbp]
\centering
		\includegraphics[width = 0.9\linewidth, height=0.42\textheight]{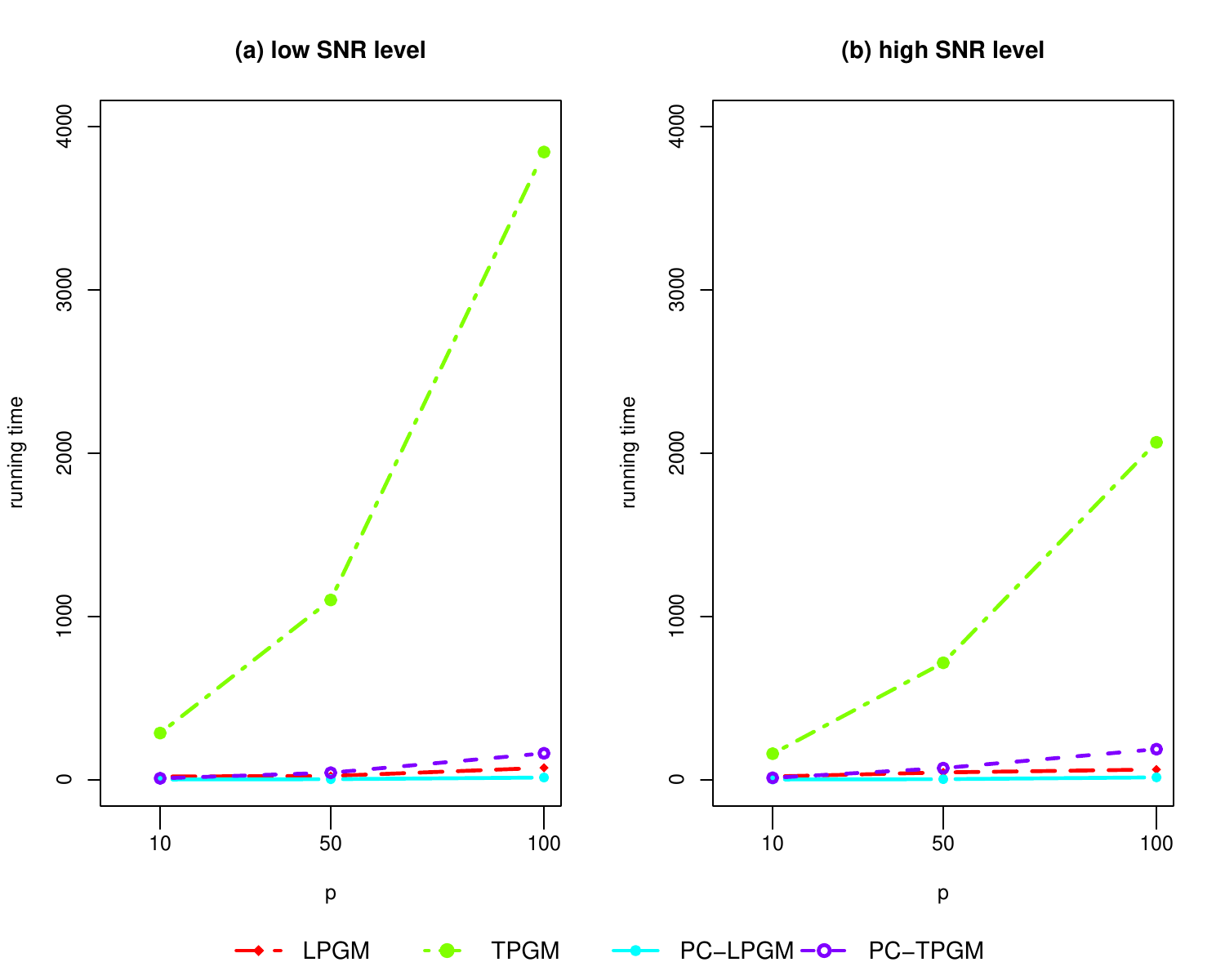}

\caption{\scriptsize Monte Carlo means of runtime (in seconds) obtained by simulating  500 samples from one random graph for each probability  of edge inclusion $\pi$ associated to each vertex cardinality $p=10, 50,100$ ($\pi=0.1, 0.2, 0.3$  for $p=10$; $\pi=0.02, 0.04, 0.06$ for $p=50$; and $\pi=0.01, 0.02, 0.03$ for $p=100$).   Sample size $n=1000$. 
}
\label{runtime}
\end{figure}	

	\noindent
	\\ {\bf Scaling of variables.}
	Feature scaling is one of the most typical, sometimes mandatory, data preprocessing steps performed prior to downstream analysis. Some form of scaling is typical, for example, in the preprocessing of next generation sequencing data, as shown also in Section~\ref{realanalysis}.  One advantage of iterative hypothesis testing versus penalized regression is invariance with respect to scaling of regressors in conditional models. As regularized methods have different levels of sensitivity to feature scaling,  in this section, we will look at one simulated scenario and compare the performance of PC-LPGM and LPGM with respect to scaling. A random network with  edge  probability equal to 0.3 was generated for $p=10$ nodes. For such network, 500 samples of size  $n=1000$  were generated  { as in Section 6.1 at the high  SNR level ($\lambda_{noise}=0.5$). Two variables, $X_1$ and $X_6$, were then scaled in turn. The choice of these variables  was driven by the size of their neighbourhood. Indeed, we selected the two nodes having, respectively, the smallest and the largest neighbourhood's  size ($N(1)=1$ and  $N(6)=4$), where $N(i)$ denotes the number of neighbours of node $i$. }  To scale the variables, we multiplied them for a factor  $k,$ for different values of $k.$   We then applied PC-LPGM and LPGM algorithm on the scaled data.   Results are reported in Table \ref{scaleproblem}. 
		\begin{table}[ht]
		
		\centering
		\begin{scriptsize}
			
			\caption{	\label{scaleproblem}\scriptsize { Monte Carlo means of TP, FP, FN, PPV, and Se obtained by simulating 500 samples 
					from one random graph on $p = 10$ variables with Poisson node conditional distribution and level of noise $\lambda_{noise} = 0.5$. The probability  of 
					edge inclusion is $\pi=0.3.$}}
			\begin{tabular}{c|c|rrrrr|rrrrr}
				\hline
				&&\multicolumn{5}{c|} {\bf LPGM}&\multicolumn{5}{c}{\bf PC-LPGM }\\
				$i$&$k$& TP & FP & FN & PPV & Se &  TP & FP & FN & PPV & Se  \\ 
				\hline
				&1/5 & 8.490 & 9.494 & 4.510 & 0.516 & 0.653 & 11.690 & 0.336 & 1.310 & 0.974 & 0.899 \\ 
				6&1/2 & 10.064 & 8.986 & 2.936 & 0.563 & 0.774 & 12.098 & 0.402 & 0.902 & 0.970 & 0.931 \\
				& 1 & 12.296 & 11.832 & 0.704 & 0.523 & 0.946 &  11.846 & 0.056 & 1.154 & 0.996 & 0.911  \\  
				&5 & 4.022 & 0.032 & 8.978 & 0.996 & 0.309 & 12.248 & 0.524 & 0.752 & 0.962 & 0.942 \\ 
				&10 & 4.006 & 0.090 & 8.994 & 0.985 & 0.308 & 12.224 & 0.600 & 0.776 & 0.956 & 0.940 \\ 
				&&&&&&&&&&&\\
				&1/3 & 11.278 & 10.038 & 1.722 & 0.540 & 0.868 & 12.056 & 0.358 & 0.944 & 0.973 & 0.927 \\ 
				1&1/2 & 12.200 & 9.794 & 0.800 & 0.567 & 0.938 & 12.014 & 0.340 & 0.986 & 0.975 & 0.924 \\ 
				& 1 & 12.296 & 11.832 & 0.704 & 0.523 & 0.946 &  11.846 & 0.056 & 1.154 & 0.996 & 0.911  \\
				&5 & 1.070 & 0.042 & 11.930 & 0.979 & 0.082 & 12.892 & 0.462 & 0.108 & 0.968 & 0.992 \\ 
				&10 & 1.154 & 0.398 & 11.846 & 0.861 & 0.089 & 12.910 & 0.480 & 0.090 & 0.966 & 0.993 \\ 
				\hline
			\end{tabular}
		\end{scriptsize}
	\end{table}
	
	Results show, as expected, invariance of PC-LPGM with respect to all scalings, with high PPV and Se values. On the other side, scale-variance of LPGM is evident, particularly when variance inflation of the regressor is performed ($k=5,10$). If the predictor is scaled up a lot, the corresponding coefficient is not shrunken by LPGM, increasing the number of false negatives.

	\section{Real data analysis: inferring networks from next generation sequencing data}\label{realanalysis}
	\noindent
	 To make our evaluation of PC-LPGM stronger, we perform some biological validation by applying {PC-LPGM (with $m=3$)} to two different datasets, one, retrieved from the Cancer Genome Atlas, on level III breast cancer microRNAs (miRNAs) expression;  and one, downloaded from the Gene Expression Omnibus (GEO), on olfactory epithelium stem cell. Here, we expect to obtain results coherent with the current biological knowledge. 
	 	
	\subsection{Breast cancer}\label{breastcancer}
	miRNAs are non-coding RNAs that are transcribed but do not encode proteins. 
	miRNAs have been reported to play a pivotal role in regulating key biological processes, for example, 
	post-transcriptional modifications and translation processes. 
	Some studies revealed that some disease-related miRNAs  can 
	indirectly regulate the function of other miRNAs associated with the same phenotype. 
	In this perspective,   studying the features of the interaction pattern of  miRNAs in some conditions 
	might help understand complex phenotype conditions.
	
	Here, we consider level III breast cancer. Our interest lies in the pattern of interactions
	among miRNAs, with a particular focus on the existence of hubs. In fact,  nodes with atypically
	high numbers of connections represent sites of signalling convergence with potentially
	large explanatory power for network behaviour or utility for clinical prognosis and therapy.
	By applying our algorithm, we expect to obtain results in line with known associations between
	miRNAs and breast cancer, and possibly gain more understanding of the nature of their effect on other genes.
	In other words, we expect some miRNAs associated with this phenotype to be  the hubs of  our
	estimated structure.
	\begin{figure}[htbp]
		\begin{center}
			\includegraphics[width = 0.7\linewidth, height=0.25\textheight]{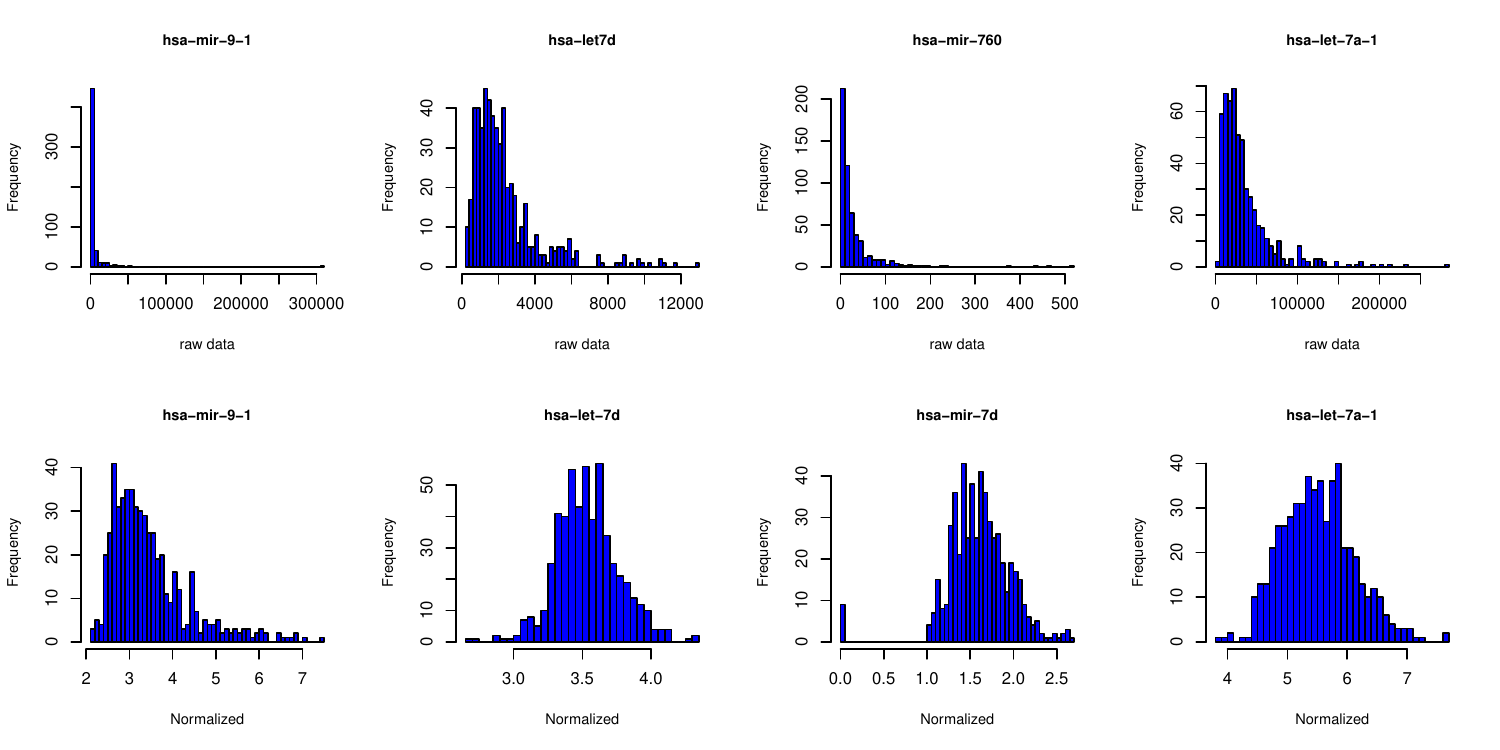}
			\caption{\scriptsize Distribution of four miRNA-Seq: raw data (top), normalized data (bottom).}
			\label{mir}
		\end{center}
		\vspace{-1.25em}
	\end{figure}
	
	miRNAs expression, obtained by high-throughput sequencing,  was downloaded from The Cancer Genome Atlas (TCGA) portal (\url{https://tcga-data.nci.nih.gov/docs/publications/brca_2012/}). The raw count data 
	set consisted of 544 patients and 1046 miRNAs. As measurements were zero-inflated and highly skewed, 
	with total count volumes depending on experimental condition,
	standard preprocessing was applied to the data  \cite[see][]{allen2013local}.
	In particular: we normalized the data by the $75\%$ quantile matching \citep{bullard2010evaluation};
	selected top $25\%$ most variable mirRNAs across the data;
	used a power transform $X^\alpha$ for $\alpha\in [0,1]$ with $\alpha$ chosen via the minimum Kolmogorov-Smirnov statistic \citep{li2012normalization} {giving rise to $\alpha=0.164$}. 
	The miRNAs with little variation across the samples were filtered out,  leaving 544 patients ($n=544$) and 261 miRNA ($p=261$). The effect of preprocessing  on four prototype miRNA are shown in Figure ~\ref{mir}. 
	
	
	Normalized data was used as input to PC-LPGM. A significance level of 5\% resulted in a spare graph is  shown in Figure~\ref{BRCgraph}. 
	\begin{figure}[htbp]
	\begin{tabular}[t]{ll}
	 \begin{minipage}{0.45\linewidth}
	
		\begin{center}
		
			\includegraphics[width = 1\linewidth, height=0.32\textheight]{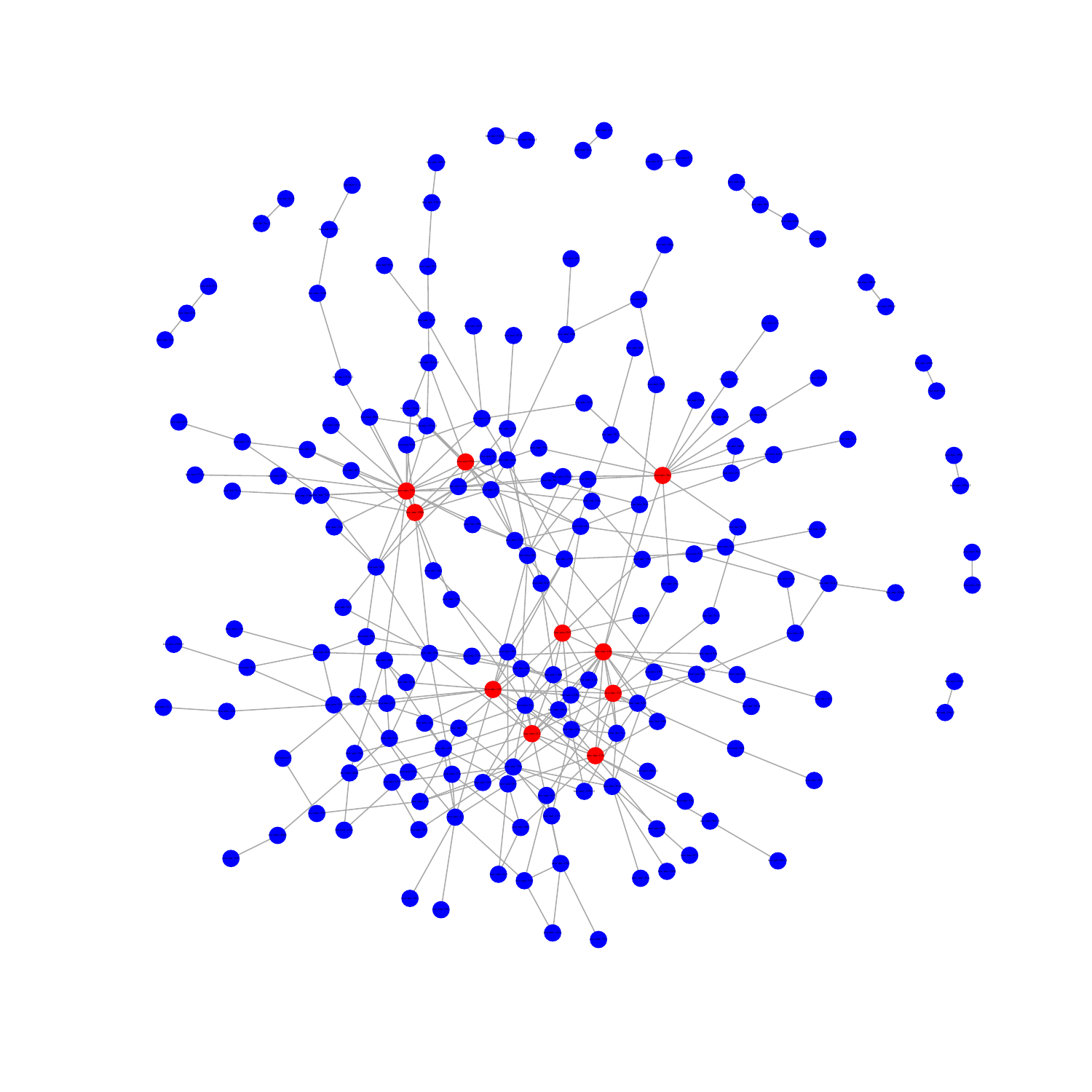}
		\end{center}

	 \end{minipage}
		& 
	 \begin{minipage}{0.45\linewidth}
		\begin{center}
			\includegraphics[width = 1\linewidth, height=0.3\textheight]{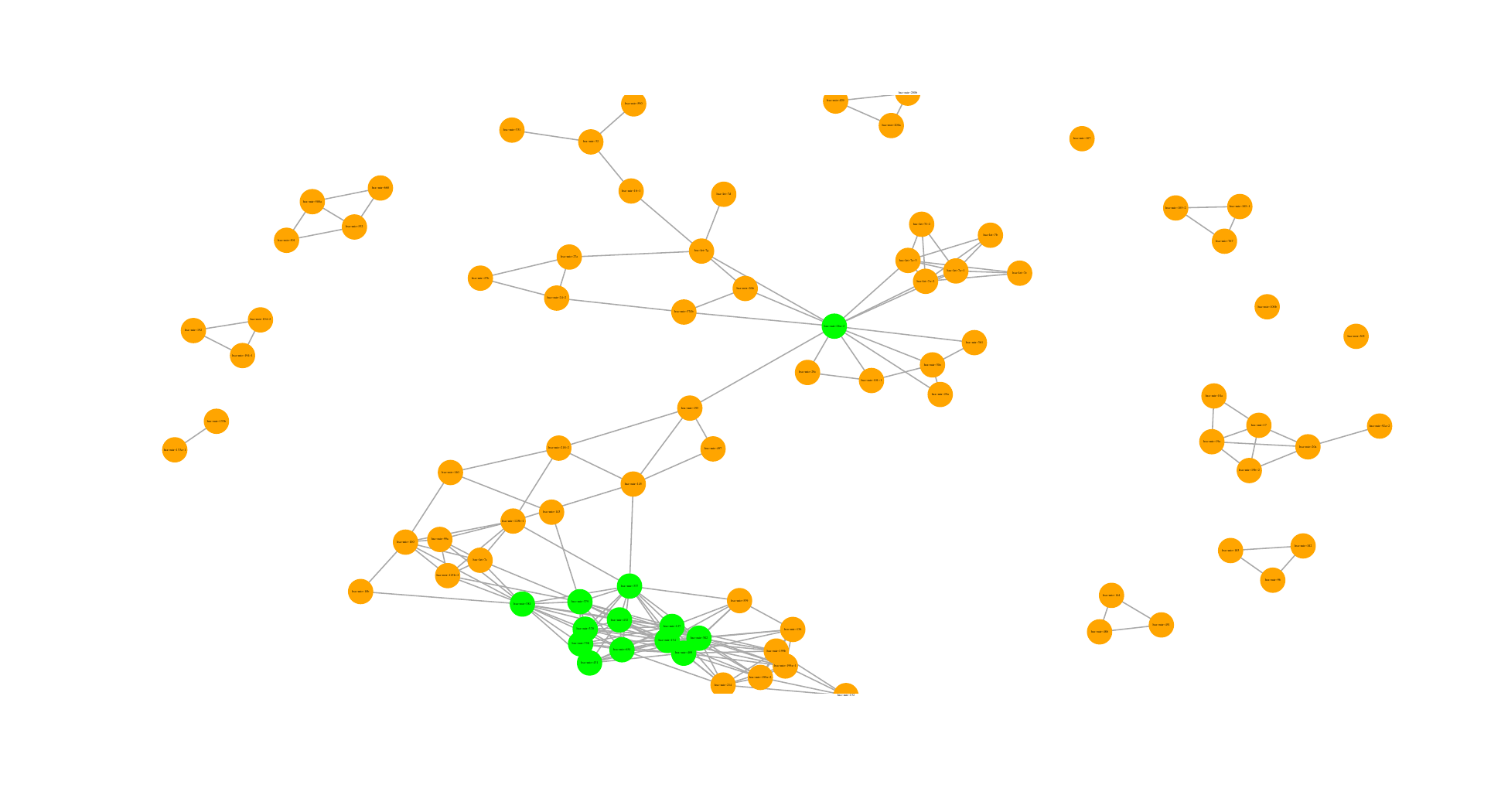}
		\end{center}
	  \end{minipage}
	  \end{tabular}
	  \caption{\scriptsize Breast cancer miRNA network estimated by the PC-LPGM algorithm (left, hub nodes coloured red),  and by the npn-COPULA algorithm (right, hub nodes coloured green).}
	  \label{BRCgraph}
	  \end{figure}
	%
	%
	
	We identified ten hub nodes (number of edges greater than 9) in the network,  miR-10b, -30a, -143, -375, -145, -210, -139, -934, -190b, -590. Almost all of them are known to be related to breast cancer \citep{volinia2012breast}, providing a biological validation of the potential of the algorithm to recover the sites of the network with high explanatory power. In particular, miR-10b and -210 highly express in breast cancer, when high expression is related to poor prognosis; miR-30a, -143 and -145 appear to be inhibitors of progression, and should therefore be low in patients with good survival \citep{zhang2014microrna, yan2014mir}.  These results play the role of a biological validation
	of the ability of PC-LPGM to retrieve structures reflecting existing relations among variables. 
	
	The reader is referred to \citet{allen2013local} for results of the application of LPGM  to the same dataset.  A structural comparison shows that PC-LPGM identifies less edges than LPGM and some common hub nodes, such as  miR-10b, and miR-375. 
{ To evaluate effectiveness of methods based on the Gaussian assumption, which require preliminary data transformation,  we ran the NPN-COPULA algorithm (tuning: $\beta=0.01$; $B=50$) on  log transformed data shifted by 1. Figure~\ref{BRCgraph} (right) shows the resulting graph. This algorithm identified 13 hub nodes: miR-26a-2, -127, -379, -134, -381, -337, -431, -409, -654, -758, -382, -370, -432, none of which coincides with those found by  PC-LPGM and only one,  miR-379,  is common to those found by LPGM. }

	\subsection{Olfactory epithelium stem cell} 
Recently, whole-transcriptome profiling of single cells by RNA sequencing has been developed as a powerful method for discriminating the heterogeneity of cell types and cell states in a complex population \citep{gadye2017injury}. 

Here, we re-analyse a subset of the data presented in \cite{gadye2017injury}. We focus on the olfactory sensory neurons lineage, which starts from the horizontal basal cell (HBC) stem cells and through a series of intermediate states generates mature olfactory neurons. Wild-type HBC stem cells were collected by fluorescence-
activated cell sorting (FACS), and profiled by single-cell RNA-seq. As before, we also focus on the existence of hub genes on the network of interactions among genes.  In fact, the identification of hubs in the gene network could help pinpoint important transcription factors, i.e., genes that regulate the expression of a large number of other genes in the system and could be targeted for follow-up experiments.  We therefore expect to obtain results in line with known associations between genes and the developmental trajectory of stem cells, and possibly gain more understanding on the nature of their effect on other genes. In other words, we expect some genes associated with this mechanism to be  the hubs of  our estimated structure.

Gene expression, obtained by high-throughput sequencing,  was downloaded from the Gene Expression Omnibus (GEO) (\url{https://www.ncbi.nlm.nih.gov/geo/query/acc.cgi?acc=GSE99251}). The raw count data 
set consisted of 542 cells and we selected 850 transcription factor genes (the list of transcription factors was downloaded from \url{https://github.com/diyadas/HBC-regen/tree/master/ref}). The data were zero-inflated and highly skewed. We then select top $20\%$ variables with highest mean to apply the standard preprocessing steps as in Section \ref{breastcancer}. In particular, we normalized the data by $75\%$ quantile matching;
	selected top $50\%$ most variable mirRNAs across the data; and
	used a power transform $X^\alpha$ { (with $\alpha=0.219$)}.
The genes with low mean and little variation across the samples were filtered out,  leaving 542 cells ($n=542$) and 85 genes ($p=85$). The effect of preprocessing  on four prototype genes are shown in Figure ~\ref{tfRNA}.

\begin{figure}[htbp]
	\begin{center}
		\includegraphics[width = 0.7\linewidth, height=0.25\textheight]{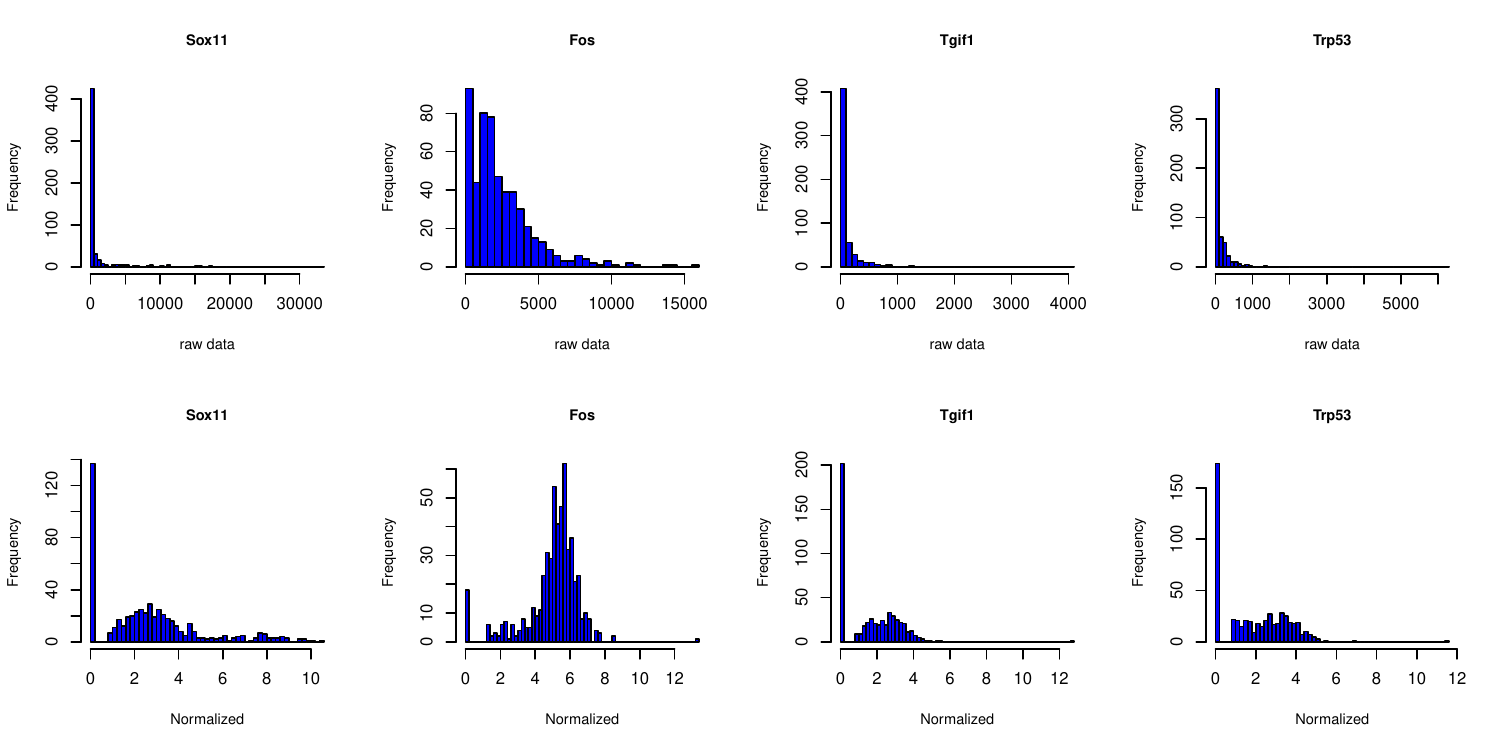}
		\caption{ Distribution of RNA-Seq read counts for four genes: raw data (top), normalized data (bottom).}
		\label{tfRNA}
	\end{center}
\end{figure}

A significance level of 5\% resulted in a spare graph by applying PC-LPGM algorithm on the normalized data, as shown in Figure~\ref{neurongraph}. 

\begin{figure}[htbp]
	\begin{center}
		\includegraphics[width = 1\linewidth, height=0.38\textheight]{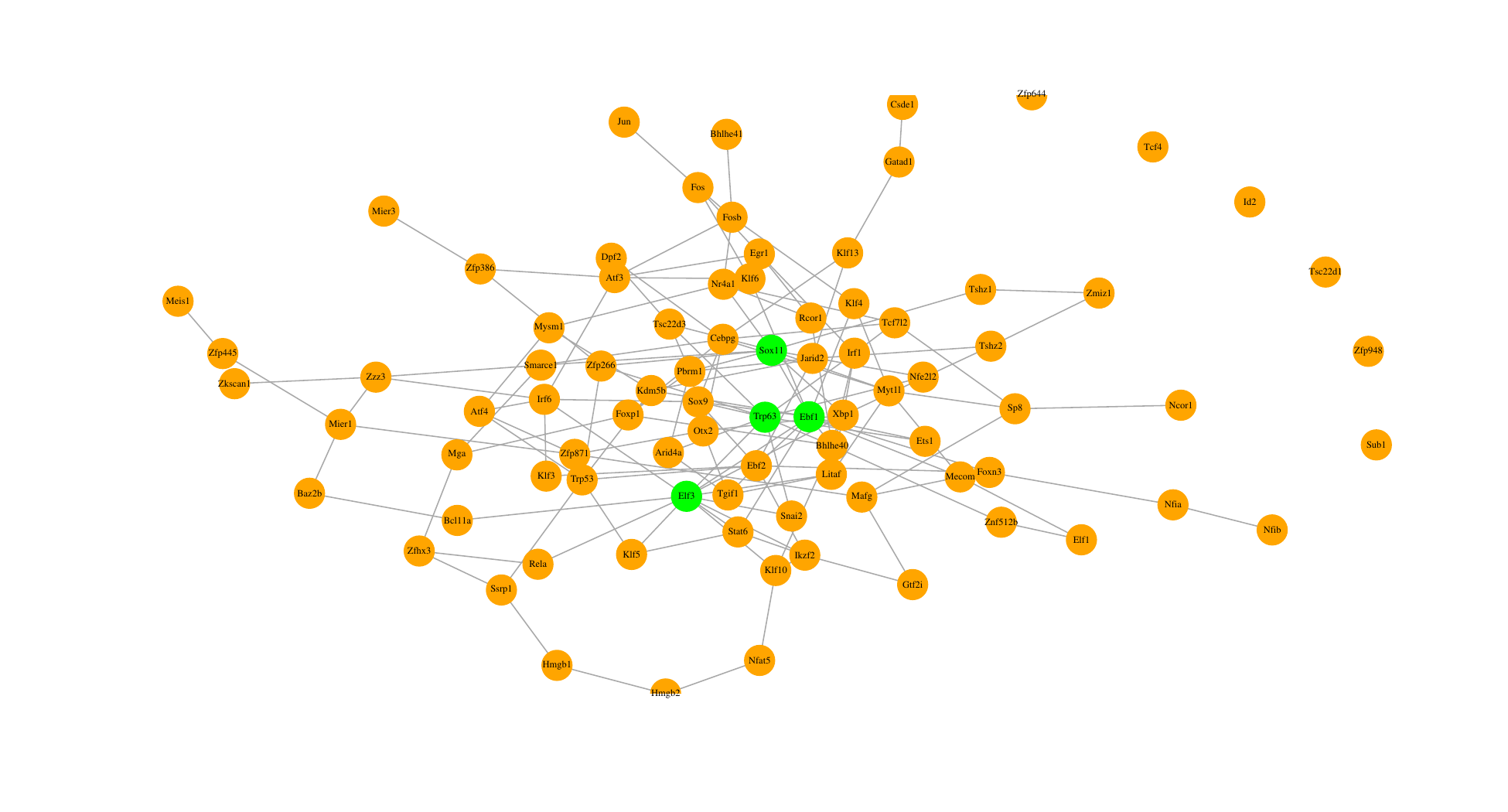}
		\vspace{-3em}
		\caption{ Olfactory Neuron gene network estimated by the PC-LPGM algorithm (hub nodes coloured green).}
		\label{neurongraph}
	\end{center}
\end{figure}


We identified four hub nodes, i.e. nodes with more than 9 edges, in the network: Sox11, Ebf1, Elf3, Trp63. Almost all of them are known to be related to the developmental trajectory of stem cells. In particular, Tpr63 is a gene essential to maintain the quiescent state of HBCs. In fact, by knocking out this gene HBCs will differentiate into mature cell types \citep{fletcher2017deconstructing}. It is therefore very reassuring that our method identified this gene as a hub node. Sox11 is known to be essential in neurogenesis \citep{ninkovic2013baf}.  Ebf1 is a transcription factor known to be involved with the later part of the lineage, in the final phase of maturing neurons \citep{wang1997characterization, garel1997family}. While there is no specific indication in the literature that Elf3 plays an important role in the olfactory epithelium, this gene is known to be important in other epithelial systems \citep{tymms1997novel}. It
will be interesting to follow up with experimental validation of the role of this gene.

 {   Gene networks estimated by the LPGM algorithm (tuning: $\beta=0.05$; $B=50;$ $\frac{\lambda_{min}}{\lambda_{max}}=0.01$; $\gamma = 10^{-6}$), and the NPN-COPULA algorithm (tuning: $\beta=0.05$; $B=50$;  log transformed data shifted by 1) are shown in Figure \ref{OEgraph}. The LPGM algorithm identified four hub nodes, namely: Sox11, Ebf1, Atf3, Fos,  two of which are common to PC-LPGM.  
The NPN-COPULA algorithm identified eleven hub nodes, namely: Fos,    Egr1,  Ebf1,  Klf6,   Atf3,   Ebf2,   Fosb,  Myt1l, Klf4,   Tcf7l2, Nr4a1, one of which is common to PC-LPGM.

In both the case studies that we considered in this Section, the true gene networks are unknown, and so is, to a large extent,  knowledge on the underlying biological processes. It is therefore difficult to interpret differences in the results obtained by different algorithms.  Based on the observation that the highest level of agreement between results occurs when comparing  PC-LPGM  and  LPGM outputs, this exercise highlights that methods that correctly exploit discreteness of the data tend to retrieve from data similar information.  }

\begin{figure}[htbp]
	\begin{tabular}[t]{ll}
	 \begin{minipage}{0.45\linewidth}
	
		\begin{center}
		
			\includegraphics[width = 1\linewidth, height=0.25\textheight]{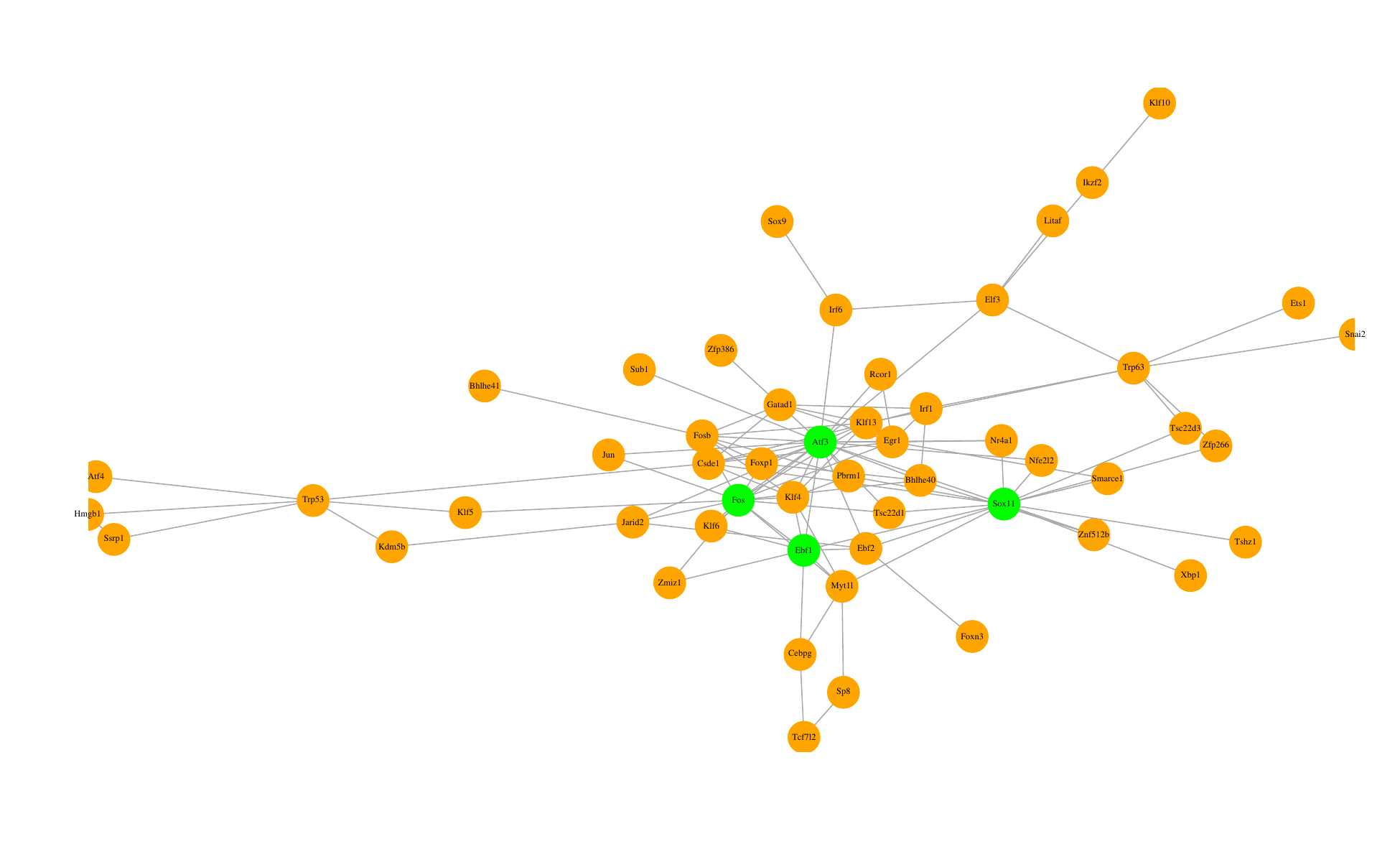}
		\end{center}

	 \end{minipage}
		& 
	 \begin{minipage}{0.45\linewidth}
		\begin{center}
			\includegraphics[width = 1\linewidth, height=0.23\textheight]{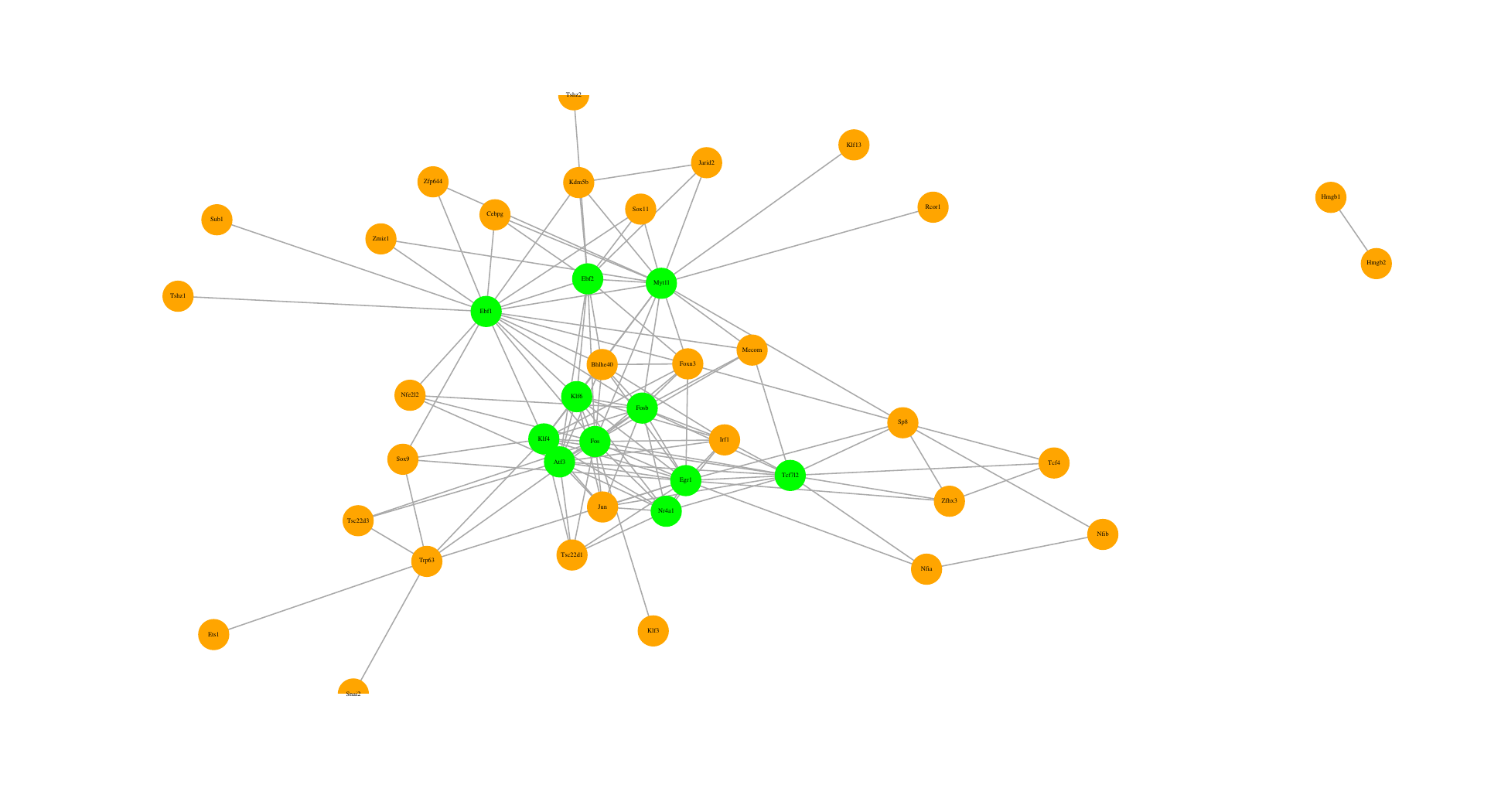}
		\end{center}
	  \end{minipage}
	  \end{tabular}
	  \caption{\scriptsize Olfactory Neuron gene network estimated by the LPGM algorithm (left, hub nodes coloured green),  and by the npn-COPULA algorithm (right, hub nodes coloured green).}
	  \label{OEgraph}
	  \end{figure}



\section{Conclusions}\label{conclusions}
\noindent

The main contribution of this paper is a careful analysis of the numerical and statistical efficiency of PC-LPGM, a simple method for structure learning of undirected graphical models for count data. A key strategy of our approach is controlling the number of variables in the conditional sets, as done in the PC algorithm. In this way, we control problems of estimation when the number of random variables $p$ is large possibly goes to infinity.

Our main theoretical result on truncated Poisson counts provides sufficient conditions on the set $(n, p, d,R)$ and on the model parameters for the method to succeed in consistently estimating the neighbours of every node in the graph. Precisely, Theorem \ref{mainresult} not only specifies sufficient conditions but it also provides the probability with which the method recovers the true edge set. Indeed, Equation \eqref{totalerror} shows that

\begin{eqnarray*}
	&&\mathbb{P}_{\boldsymbol{\theta}}(\text{ a type I or II error occurs in testing procedure})\\
	&&~\quad \le O_p\bigg(\exp\big\{ -c_1n^{2b}+c_0'd\log p\big\}-\exp\left\{-c_2\dfrac{n}{d^2}+c_0'd\log p\right\}\bigg).
\end{eqnarray*}
Hence, the right-hand side of the above given equation tends to 0 if  $n>O_p(d^3\log p)$. Moreover, Proposition \ref{md1}, and Lemma \ref{Fisher} require $$n> \max\left\{O_p\left(R^2\kappa_1\log p\right), O_p\left(\kappa_1^2R^4p^2\log p\right)\right\}=O_p\left(\kappa_1^2R^4p^2\log p\right)$$
{ Thus, the sufficient condition for convergence becomes
$$n> \max\left\{ O_p(d^3\log p), O_p\left(\kappa_1^2R^4p^2\log p\right)\right\}.$$  Appendix \ref{suppB} shows that $\kappa_1\le O_p(R^2)$. Hence,  the sufficient condition for consistency of PC-LPGM with exponentially decaying error is  $$n>\max\left\{ O_p(d^3\log p), O_p\left(R^8p^2\log p\right)\right\}.$$}

When $R$ is fixed, the condition reduces to $n>\max\left\{ O_p(d^3\log p), O_p\left(p^2\log p\right)\right\}$. However, it is worth remembering that when the maximum number of neighbours that one node is allowed to have is fixed to $d$,  a limitation is operated on the cardinalities $d$ or $d-1$ of the sets $\bold{K}$. In this situation, the condition for convergence is relaxed to $n>\max\left\{ O_p(d^3\log p), O_p\left(d^2\log d\right)\right\}=  O_p(d^3\log p)$(see Equation \eqref{totalerror} and Note \ref{cy2} for details). This condition is coherent with the sufficient condition for consistent neighbourhood selection in Ising model (see \cite{ravikumar2010}).

Our simulation results show that the algorithm perform well also when  Poisson conditional distributions with no constraints on the interaction parameters are taken as starting point for model specification. The empirical comparison shows that the algorithm outperforms its natural competitors.


	
	{\color{black}
	\acks{We thank the action editor and three anonymous reviewers  for their careful reading of the manuscript and their  many
insightful comments and suggestions. We also thank Davide Risso for generously sharing with 
		us his expertise in the area of single-cell data analysis.  This work 
		was supported  by grant BIRD172830,  University of Padova, Italy. }}
	
	

	\appendix

	\section{Proofs}\label{suppA}
	In this section, we provide proofs of Proposition \ref{pro11} and Theorem \ref{dl2}  stated in Section \ref{statistical} of the main paper. We begin by introducing  results for the case $\bold{K}=V$. Then, the same results for general case   {$\bold{K}\subset V$} are deduced.  { The rest of this section is devoted for the proof of Theorem~\ref{robust} in Section~\ref{robustness}. }
	
	Before going into details, we first prove the following Lemma, used in the proof of Theorem \ref{dl1}.
	\begin{bd}\label{Fisher}
		Assume \ref{assum2}. Then, for any $\delta>0$, we have
		{
		\begin{eqnarray*}
			\mathbb{P}_{\boldsymbol{\theta}}\left(\Lambda_{\max}\bigg(\frac{1}{n}\sum_{i=1}^{n} (\bold{X}_{V\backslash \{s\}}^{(i)})^T\bold{X}_{V\backslash \{s\}}^{(i)}\bigg)\le \lambda_{\max}+\delta\right)&\ge&1- \exp\left(-c_2\frac{\delta^2 n}{p^2}+c_3\log p\right)\\
			\mathbb{P}_{\boldsymbol{\theta}}\left(\Lambda_{\min}\left(Q_s(\boldsymbol{\theta_s})\right)\ge \lambda_{\min}-\delta\right)&\ge& 1- \exp\left(-c_2\frac{\delta^2 n}{p^2}+c_3\log p\right).
		\end{eqnarray*}
	}
	\end{bd}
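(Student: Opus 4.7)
The plan is to prove both inequalities by a common three-step strategy: entry-wise concentration via Hoeffding's inequality, a union bound over the at most $(p-1)^2$ matrix entries converted into a spectral-norm bound, and Weyl's inequality to transfer matrix deviation into eigenvalue deviation. Truncation is what makes this work: since each $X^{(i)}_t\in\{0,1,\ldots,R\}$ is bounded, all products and log-partition derivatives appearing below are uniformly bounded in $R$, independently of $\boldsymbol{\theta}\in\boldsymbol{\Theta}$.

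For the first inequality, write $M_n=\tfrac{1}{n}\sum_{i=1}^{n}(\bold{X}^{(i)}_{V\backslash\{s\}})^T\bold{X}^{(i)}_{V\backslash\{s\}}$. Each $(t,u)$-entry is an i.i.d.\ average of products $X^{(i)}_tX^{(i)}_u\in[0,R^2]$, so Hoeffding's inequality delivers an exponential tail for a single entry, and a union bound over $(p-1)^2$ entries yields $\max_{t,u}|M_n(t,u)-\mathbb{E}[M_n](t,u)|\le\epsilon$ with probability at least $1-2(p-1)^2\exp(-2n\epsilon^2/R^4)$. Converting via $|||A|||_2\le (p-1)\max_{t,u}|A_{t,u}|$ and choosing $\epsilon=\delta/(p-1)$, one obtains $|||M_n-\mathbb{E}[M_n]|||_2\le\delta$ on the same event; Weyl's inequality combined with Assumption \ref{assum2} then gives $\Lambda_{\max}(M_n)\le\lambda_{\max}+\delta$.

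For the second inequality, differentiating the log-likelihood \eqref{smallloglikelihood} with $\bold{K}=V$ yields
\[ Q_s(\boldsymbol{\theta}_s)=\frac{1}{n}\sum_{i=1}^{n}D''(\langle\boldsymbol{\theta}_s,\bold{x}^{(i)}_{V\backslash\{s\}}\rangle)\,(\bold{x}^{(i)}_{V\backslash\{s\}})^T\bold{x}^{(i)}_{V\backslash\{s\}}, \]
where $D''(\cdot)$ is the variance of the truncated Poisson and therefore lies in $[0,R^2]$. Consequently each entry of $Q_s(\boldsymbol{\theta}_s)$ is an i.i.d.\ average taking values in $[0,R^4]$, whose mean is the matching entry of $I_s(\boldsymbol{\theta}_s)$. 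The same Hoeffding/union-bound/Weyl pipeline, together with $\Lambda_{\min}(I_s(\boldsymbol{\theta}_s))\ge\lambda_{\min}$ from Assumption \ref{assum2}, delivers the stated bound on $\Lambda_{\min}(Q_s(\boldsymbol{\theta}_s))$.

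The principal delicate point is the polynomial-in-$p$ loss incurred when passing from an entry-wise $\ell_\infty$ bound to a spectral-norm bound on a $(p-1)\times(p-1)$ matrix; combined with the $R^4$ factor from Hoeffding, this forces the constant $c_2$ to be strictly positive only in the regime $n\gg R^4p^2\log p$. That scaling matches exactly the sample complexity quoted in the discussion following Theorem \ref{mainresult}, so the loss is consistent with the rate already required elsewhere in the paper; no tighter matrix concentration inequality (e.g.\ Matrix Bernstein) is needed to close the argument.
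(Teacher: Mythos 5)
Your proposal is correct and follows essentially the same route as the paper's proof: entry-wise Hoeffding-type concentration for the bounded summands (using the boundedness of $D''$ and of the truncated variables), a union bound over the $O(p^2)$ entries converted into a spectral-norm bound (the paper uses the Frobenius bound with $\epsilon=\delta/p$, you use $(p-1)\max_{t,u}|A_{t,u}|$, which is equivalent in effect), and an eigenvalue-perturbation step combined with Assumption 4.2. The only cosmetic difference is that the paper bounds $D''$ by the constant $\kappa_1$ (shown in Appendix B to be $O(R^2)$) rather than invoking the variance interpretation directly, and your explicit remark that $c_2>0$ requires $n\gg R^4p^2\log p$ matches the sample-size condition stated in the paper's discussion.
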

	\begin{proof}
		The $(j,k)$ element of the matrix $Z^n=Q_s(\boldsymbol{\theta}_s)-I_s(\boldsymbol{\theta}_s)$ can be written as
		\begin{eqnarray*}
			Z_{jk}^n(\boldsymbol{\theta}_s)&=&\frac{1}{n}\sum_{i=1}^nD''(\langle\boldsymbol{\theta}_s,{X}^{(i)}_{V\backslash \{s\}}\rangle ){X}_{ij}X_{ik}-\mathbb{E}_{\boldsymbol{\theta}}\left(D''\big(\langle\boldsymbol{\theta}_s,{X}_{V\backslash \{s\}}\rangle \big){X}_jX_k\right)\\
			&=&\frac{1}{n}\sum_{i=1}^n Y_i-\mathbb{E}_{\boldsymbol{\theta}}\left(\frac{1}{n}\sum_{i=1}^n Y_i\right),
		\end{eqnarray*}
		where $Y_i=D''(\langle\boldsymbol{\theta}_s,{X}^{(i)}_{V\backslash \{s\}}\rangle ){X}_{ij}X_{ik},~i=1,\ldots,n$ are independent and bounded by
		$$|Y_i|\le \kappa_1 R^2.$$
		By the Azuma-Hoeffding inequality  \citep[Theorem 2 in][]{hoeffding1963probability}, for any $\epsilon>0$, we have
		$$\mathbb{P}_{\boldsymbol{\theta}}\left((Z_{ij}^n)^2\ge \epsilon^2\right)=\mathbb{P}_{\boldsymbol{\theta}}\left(|Z_{ij}^n|\ge \epsilon\right)\le 2\exp\left(-\frac{\epsilon^2n}{2\kappa_1^2R^4}\right).$$
		Moreover,
		\begin{eqnarray*}
			\Lambda_{\min}(I_s(\boldsymbol{\theta}_s))&=&\min_{\|\bold{y}\|_2=1}\bold{y}I_s(\boldsymbol{\theta}_s)\bold{y}^T\\
			&=&\min_{\|\bold{y}\|_2=1}\left\{\bold{y}Q_s(\boldsymbol{\theta}_s)\bold{y}^T+\bold{y}(I_s(\boldsymbol{\theta}_s)-Q_s(\boldsymbol{\theta}_s))\bold{y}^T\right\}\\
			&\le& \bold{y}Q_s(\boldsymbol{\theta}_s)\bold{y}^T+\bold{y}(I_s(\boldsymbol{\theta}_s)-Q_s(\boldsymbol{\theta}_s))\bold{y}^T,
		\end{eqnarray*}
		where $\bold{y}\in \mathbb{R}^{p-1}$ is an arbitrary vector with unit norm. Hence,
		\begin{equation}\label{Hessian}
		\Lambda_{\min}(Q_s(\boldsymbol{\theta}_s))\ge \Lambda_{\min}(I_s(\boldsymbol{\theta}_s))-\max_{\|\bold{y}\|_2=1}\bold{y}\big(I_s(\boldsymbol{\theta}_s)-Q_s(\boldsymbol{\theta}_s)\big)\bold{y}^T\ge \lambda_{\min}-|||I_s(\boldsymbol{\theta}_s)-Q_s(\boldsymbol{\theta}_s)|||_2.
		\end{equation}
		We now derive a bound on the spectral norm 
		$|||I_s(\boldsymbol{\theta}_s)-Q_s(\boldsymbol{\theta}_s)|||_2$. Let $\epsilon=\delta/p$, then 
		\begin{eqnarray}\label{Fisherdistance}
		\mathbb{P}_{\boldsymbol{\theta}}\left(|||I_s(\boldsymbol{\theta}_s)-Q_s(\boldsymbol{\theta}_s)|||_2\ge \delta\right)&\le&\mathbb{P}_{\boldsymbol{\theta}}\left(\bigg(\sum_{j,k\ne s}(Z_{jk}^n)^2\bigg)^{1/2}\ge \delta \right)\nonumber\\
		&\le& 2p^2\exp\left\{-\frac{\delta^2n}{2p^2\kappa_1^2R^4}\right\}\nonumber\\
		&\le& \exp\left(-c_2\frac{\delta^2 n}{p^2}+c_3\log p\right).
		\end{eqnarray}
		From Equation \eqref{Hessian} and \eqref{Fisherdistance}, we have
		$$\mathbb{P}_{\boldsymbol{\theta}}\left(\Lambda_{\min}(Q_s(\boldsymbol{\theta}_s))\ge \lambda_{\min}-\delta\right)\ge 1-\exp\left(-c_2\frac{\delta^2 n}{p^2}+c_3\log p\right).$$
		Similarly, we have 
		$$\mathbb{P}_{\boldsymbol{\theta}}\left(\Lambda_{\max}\left[\frac{1}{n}\sum_{i=1}^{n} (\bold{X}_{V\backslash \{s\}}^{(i)})^T\bold{X}_{V\backslash \{s\}}^{(i)}\bigg]\le \lambda_{\max}+\delta\right]\right)\ge 1-\exp\left(-c_2\frac{\delta^2 n}{p^2}+c_3\log p\right).$$
	\end{proof}
	
	We now introduce  results for the case $\bold{K}=V$.
	
	\noindent
\begin{md}\label{md1}
	Assume \ref{assum1}- \ref{assum2}. Then, for any $\delta>0$
	$$\mathbb{P}_{\boldsymbol{\theta}}(\|\nabla l(\boldsymbol{\theta}_s,\bold{X}_s;\bold{X}_{V\backslash \{s\}})\|_{\infty}\ge \delta)\le \exp\{-c_1n\delta^2+c_0\log p\},~\forall~ \boldsymbol{\theta}\in \boldsymbol{\Theta},$$
	when  $n\longrightarrow\infty$.
\end{md}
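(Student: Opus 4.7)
The plan is to write each coordinate of the gradient as an empirical mean of zero-mean, bounded i.i.d.\ random variables and then apply a Hoeffding-type concentration bound coordinate-wise, closing with a union bound over the $p-1$ coordinates. The truncation level $R$ will enter the bound through the uniform bound on the summands.

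First, I would differentiate the expression in \eqref{smallloglikelihood} (with $\bold{K}=V$) to obtain, for each $j\in V\setminus\{s\}$,
\begin{equation*}
\bigl[\nabla l(\boldsymbol{\theta}_s,\bold{X}_s;\bold{X}_{V\setminus\{s\}})\bigr]_j
=\frac{1}{n}\sum_{i=1}^{n}\bigl(D'(\langle\boldsymbol{\theta}_s,\bold{X}^{(i)}_{V\setminus\{s\}}\rangle)-X_{is}\bigr)\,X^{(i)}_{ij}.
\end{equation*}
From the node-conditional model \eqref{dijoinprob}, $D'(\langle\boldsymbol{\theta}_s,\bold{x}_{V\setminus\{s\}}\rangle)=\mathbb{E}_{\boldsymbol{\theta}}[X_s\mid \bold{x}_{V\setminus\{s\}}]$, so that by the tower property each summand has mean zero under $\mathbb{P}_{\boldsymbol{\theta}}$.

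Second, I would verify the almost sure boundedness of each summand. Since the conditional distribution of $X_s$ is truncated Poisson at $R$, we have both $X_{is}\in\{0,\ldots,R\}$ and $D'(\langle\boldsymbol{\theta}_s,\bold{X}^{(i)}_{V\setminus\{s\}}\rangle)\in[0,R]$, and likewise $X^{(i)}_{ij}\in\{0,\ldots,R\}$ for any $j$. Hence
\begin{equation*}
\bigl|\bigl(D'(\langle\boldsymbol{\theta}_s,\bold{X}^{(i)}_{V\setminus\{s\}}\rangle)-X_{is}\bigr)X^{(i)}_{ij}\bigr|\le 2R^2.
\end{equation*}
Applying the Azuma--Hoeffding inequality (as in Lemma~\ref{Fisher}) coordinatewise yields
\begin{equation*}
\mathbb{P}_{\boldsymbol{\theta}}\Bigl(\bigl|[\nabla l]_j\bigr|\ge \delta\Bigr)\le 2\exp\!\left(-\frac{\delta^{2}\,n}{8R^{4}}\right),\qquad j\in V\setminus\{s\}.
\end{equation*}

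Finally, a union bound over the $p-1$ coordinates gives
\begin{equation*}
\mathbb{P}_{\boldsymbol{\theta}}\bigl(\|\nabla l(\boldsymbol{\theta}_s,\bold{X}_s;\bold{X}_{V\setminus\{s\}})\|_\infty\ge\delta\bigr)
\le 2(p-1)\exp\!\left(-\frac{\delta^{2}\,n}{8R^{4}}\right),
\end{equation*}
which, absorbing the polynomial factor in $p$ and writing $c_1=c_1(\delta,R,p)$ for $n$ large enough relative to $R^{4}\log p/\delta^{2}$, is at most $\exp\{-c_1 n\}$. This bound is uniform in $\boldsymbol{\theta}\in\boldsymbol{\Theta}$ because the almost sure bound $2R^{2}$ on the summands does not depend on $\boldsymbol{\theta}$. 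The main subtlety (rather than an obstacle) is simply the need to exploit the truncation: once the summands are bounded uniformly in $\boldsymbol{\theta}$ and in the sample, the rest is a standard Hoeffding/union-bound argument that tracks the explicit dependence on $(n,p,R)$ required in Theorem~\ref{mainresult}.
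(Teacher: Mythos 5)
Your proposal is correct, but it takes a genuinely different route from the paper. You make the score mean-zero explicit (via $D'(\langle\boldsymbol{\theta}_s,\bold{x}_{V\backslash\{s\}}\rangle)=\mathbb{E}_{\boldsymbol{\theta}}[X_s\mid\bold{x}_{V\backslash\{s\}}]$ and the tower property), bound each summand by $2R^2$ using the truncation, and then apply Hoeffding's inequality coordinatewise followed by a union bound over the $p-1$ coordinates, ending with an exponent of order $n\delta^2/R^4$ and a condition of the form $n\gtrsim R^4\log p/\delta^2$. The paper instead runs a Chernoff argument: it bounds $\mathbb{P}(\|W\|_\infty\ge\delta)$ by exponential moments, computes the conditional moment generating function of $V_{is}(t)=X_{is}X_{it}-X_{it}D'(\cdot)$ \emph{exactly} using the exponential-family structure (the normalization identity for the truncated Poisson), Taylor-expands the log-partition function $D$ to get $\exp\{\tfrac{h^2}{2}x_{it}^2D''(\cdot)\}$, and optimizes $h=\delta/(R^2\kappa_1)$, yielding the exponent $n\delta^2/(2R^2\kappa_1)$ under the condition $p<\tfrac12\exp\{n\delta^2/(4R^2\kappa_1)\}$. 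The two proofs buy slightly different things: yours is more elementary and self-contained (only boundedness from the truncation is used, so uniformity in $\boldsymbol{\theta}$ is immediate), while the paper's is a Bernstein-type bound whose constant involves $\kappa_1=\sup|D''|$, i.e.\ the conditional variance proxy; since Appendix~\ref{suppB} shows $\kappa_1\le O_p(R^2)$, both give the same $R^4$-order scaling in the worst case, but the paper's version is potentially sharper and feeds directly into the $(n,p,R)$ sample-size bookkeeping of the Discussion, which is phrased in terms of $\kappa_1$. No gap in your argument.
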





\begin{proof}
	A rescaled negative node conditional log-likelihood  can be written as 
	\begin{eqnarray*}
		l(\boldsymbol{\theta}_{s}, \, \mathbb{X}_{\{s\}} \, ; \mathbb{X}_{V\backslash \{s\}}) &=& -\frac{1}{n}\log \prod_{i=1}^{n}\mathbb{P}_{\boldsymbol{\theta}_{s}}(x_{is}|\bold{x}_{V\backslash \{s\}}^{(i)})\\
		&=&\frac{1}{n}\sum_{i=1}^{n}\left[-x_{is}\langle\boldsymbol{\theta}_{s},\bold{x}_{V\backslash \{s\}}^{(i)}\rangle+\log x_{is} ! +D(\langle\boldsymbol{\theta}_{s},\bold{x}_{V\backslash \{s\}}^{(i)}\rangle)\right],\nonumber
	\end{eqnarray*}
	The $t$-partial derivative of the node conditional log-likelihood $l(\boldsymbol{\theta}_s,\mathbb{X}_s;\mathbb{X}_{V\backslash \{s\}})$ is:
	\begin{eqnarray*}
		W_t=\nabla_t l(\boldsymbol{\theta}_s,\mathbb{X}_s;\mathbb{X}_{V\backslash \{s\}}) &=& \frac{1}{n}\sum_{i=1}^{n}\left[-x_{is}x_{it}+x_{it}D'(\langle\boldsymbol{\theta}_s,\bold{x}_{V\backslash \{s\}}^{(i)}\rangle )\right]
	\end{eqnarray*}
	Let $V_{is}(t)= X_{is}X_{it}-X_{it}D'(\langle\boldsymbol{\theta}_s,\bold{X}_{V\backslash \{s\}}^{(i)}\rangle )$. We have,
	\begin{eqnarray}\label{derivative}
	\mathbb{P}_{\boldsymbol{\theta}}(\|W\|_\infty \ge\delta)&=& \mathbb{P}_{\boldsymbol{\theta}}(\max_{t\in V\backslash \{s\}}|\nabla_t l(\boldsymbol{\theta}_s,X_s;\bold{X}_{V\backslash\{s\}})|\ge \delta)\nonumber\\
	&=&\mathbb{P}_{\boldsymbol{\theta}}\bigg( \max_{t\in V\backslash \{s\}}\bigg|\dfrac{1}{n}\sum_{i=1}^{n}V_{is}(t)\bigg|\ge\delta\bigg)\nonumber\\
	&\le& p \left[\mathbb{P}_{\boldsymbol{\theta}}\left(\dfrac{1}{n}\sum_{i=1}^{n}V_{is}(t)\ge\delta\right)+\mathbb{P}_{\boldsymbol{\theta}}\left(-\dfrac{1}{n}\sum_{i=1}^{n}V_{is}(t)\ge\delta\right)\right]\nonumber\\
	&\le& p\left[ \dfrac{\mathbb{E}_{\boldsymbol{\theta}}\left[\prod_{i=1}^{n}\exp\left\{hV_{is}(t)\right\}\right]}{\exp\{nh\delta\}}+\dfrac{\mathbb{E}_{\boldsymbol{\theta}}\left[\prod_{i=1}^{n}\exp\left\{-hV_{is}(t)\right\}\right]}{\exp\{nh\delta\}}\right]\nonumber\\
	&=& p \left[\dfrac{\prod_{i=1}^{n}\mathbb{E}_{\boldsymbol{\theta}}\left[\exp\left\{hV_{is}(t)\right\}\right]}{\exp\{nh\delta\}}+\dfrac{\prod_{i=1}^{n}\mathbb{E}_{\boldsymbol{\theta}}\left[\exp\left\{-hV_{is}(t)\right\}\right]}{\exp\{nh\delta\}}\right] \nonumber\\
	&=& p\bigg[\exp\bigg\{ \sum_{i=1}^{n}\log\mathbb{E}_{\boldsymbol{\theta}}\left[\exp\left\{hV_{is}(t)\right\}\right]-nh\delta\bigg\}\nonumber\\
	&&+\exp\bigg\{\sum_{i=1}^{n}\log\mathbb{E}_{\boldsymbol{\theta}}\left[\exp\left\{-hV_{is}(t)\right\}\right] -nh\delta\bigg\}\bigg],
	\end{eqnarray}
	for some $h>0$. We therefore need to compute $$\sum_{i=1}^{n}\log\mathbb{E}_{\boldsymbol{\theta}}\left[\exp\left\{hV_{is}(t)\right\}\right],$$ and $$\sum_{i=1}^{n}\log\mathbb{E}_{\boldsymbol{\theta}}\left[\exp\left\{-hV_{is}(t)\right\}\right].$$ First, we have
	\begin{eqnarray}\label{moment}
	\mathbb{E}_{\boldsymbol{\theta}_s}\left[\exp\{hV_{is}(t)\}|\bold{x}_{V\backslash \{s\}}^{(i)}\right] &=& \sum_{x_{is}=0}^{R}\exp\bigg\{h[x_{is}x_{it}-x_{it}D'(\langle\boldsymbol{\theta}_s,\bold{x}_{V\backslash \{s\}}^{(i)}\rangle )]\nonumber\\
	&&~+x_{is} \langle\boldsymbol{\theta}_s,\bold{x}_{V\backslash \{s\}}^{(i)}\rangle
	- \log x_{is}!	- D(\langle\boldsymbol{\theta}_s,\bold{x}_{V\backslash \{s\}}^{(i)}\rangle)\bigg\}\nonumber\\
	&=&\sum_{x_{is}=0}^{R}\exp\bigg\{x_{is}[hx_{it}+\langle\boldsymbol{\theta}_s,\bold{x}_{V\backslash \{s\}}^{(i)}\rangle ]- \log x_{is}!\nonumber\\
	&& ~- hx_{it}D'(\langle\boldsymbol{\theta}_s,\bold{x}_{V\backslash \{s\}}^{(i)}\rangle )- D(\langle\boldsymbol{\theta}_s,\bold{x}_{V\backslash \{s\}}^{(i)}\rangle)\bigg\}\nonumber\\
	&=& \exp\bigg\{D(hx_{it}+\langle\boldsymbol{\theta}_s,\bold{x}_{V\backslash \{s\}}^{(i)}\rangle)-D(\langle\boldsymbol{\theta}_s,\bold{x}_{V\backslash \{s\}}^{(i)}\rangle)\nonumber\\
	&&~ -hx_{it}D'(\langle\boldsymbol{\theta}_s,\bold{x}_{V\backslash \{s\}}^{(i)}\rangle) \bigg\}\nonumber\\
	&=& \exp\left\{\frac{h^2}{2}(x_{it})^2D''(vhx_{it}+\langle\boldsymbol{\theta}_s,\bold{x}_{V\backslash \{s\}}^{(i)}\rangle)\right\},\nonumber
	\end{eqnarray}
	$~\text{for some } v\in [0,1],$ where we move from line 2 to line 3 by applying $\sum_{x_{is}=0}^{R}\exp\big\{x_{is}[hx_{it}+\langle\boldsymbol{\theta}_s,\bold{x}_{V\backslash \{s\}}^{(i)}\rangle ]- \log x_{is}!-D(hx_{it}+\langle\boldsymbol{\theta}_s,\bold{x}_{V\backslash \{s\}}^{(i)}\rangle)\big\}=1$, and from line 3 to line 4 by using a Taylor expansion for function $D(.)$ at $\langle\boldsymbol{\theta}_s,\bold{x}_{V\backslash \{s\}}^{(i)}\rangle$. 
	
	Therefore,
	\begin{eqnarray}\label{pospart}
	\sum_{i=1}^{n}\log\mathbb{E}_{\boldsymbol{\theta}}\left[\exp\left\{hV_{is}(t)\right\}\right]&=&\sum_{i=1}^{n}\log\mathbb{E}_{\boldsymbol{\theta}_{V\backslash \{s\}}}\bigg[
	\mathbb{E}_{\boldsymbol{\theta}_s}\left[\exp\{hV_{is}(t)\}|\bold{x}_{V\backslash \{s\}}^{(i)}\right]\bigg]\nonumber\\
	&=&\sum_{i=1}^{n}\log\mathbb{E}_{\boldsymbol{\theta}_{V\backslash \{s\}}}\bigg[\exp\left\{\frac{h^2}{2}(X_{it})^2D''(vhX_{it}+\langle\boldsymbol{\theta}_s,\bold{X}_{V\backslash \{s\}}^{(i)}\rangle)\right\}\bigg]\nonumber\\
	&\le& n\frac{h^2}{2}R^2\kappa_1,\nonumber\\
	\end{eqnarray}
	where $D''(vhX_{it}+\langle\boldsymbol{\theta}_s,\bold{X}_{V\backslash \{s\}}^{(i)}\rangle)<\kappa_1,~\forall~ \boldsymbol{\theta}_s\in \Omega(\boldsymbol{\Theta})$ (\text{since }$ D''(.)$ \text{ is a continuous function, and }$\Omega(\boldsymbol{\Theta})$ \text{ is bounded},  see Appendix \ref{suppB} for details). Similarly,
	\begin{eqnarray}\label{moment}
	\mathbb{E}_{\boldsymbol{\theta}_s}\left[\exp\{-hV_{is}(t)\}|\bold{x}_{V\backslash \{s\}}^{(i)}\right] &=& 
	\exp\left\{\frac{h^2}{2}(x_{it})^2D''(-vhx_{it}+\langle\boldsymbol{\theta}_s,\bold{x}_{V\backslash \{s\}}^{(i)}\rangle)\right\},\nonumber
	\end{eqnarray}
	Therefore,
	\begin{eqnarray}\label{negpart}
	\sum_{i=1}^{n}\log\mathbb{E}_{\boldsymbol{\theta}}\left[\exp\left\{-hV_{is}(t)\right\}\right]&=&\sum_{i=1}^{n}\log\mathbb{E}_{\boldsymbol{\theta}_{V\backslash \{s\}}}\bigg[
	\mathbb{E}_{\boldsymbol{\theta}_s}\left[\exp\{-hV_{is}(t)\}|\bold{x}_{V\backslash \{s\}}^{(i)}\right]\bigg]\nonumber\\
	&=&\sum_{i=1}^{n}\log\mathbb{E}_{\boldsymbol{\theta}_{V\backslash \{s\}}}\bigg[\exp\left\{\frac{h^2}{2}(X_{it})^2D''(-vhX_{it}+\langle\boldsymbol{\theta}_s,\bold{X}_{V\backslash \{s\}}^{(i)}\rangle)\right\}\bigg]\nonumber\\
	&\le&n\frac{h^2}{2}R^2\kappa_1.
	\end{eqnarray}
	Let $h=\dfrac{\delta}{R^2\kappa_1}$, from \eqref{derivative}--\eqref{negpart}, we have
	\begin{eqnarray*} 	
		\mathbb{P}_{\boldsymbol{\theta}}(\|W\|_\infty \ge\delta)&\le& p\bigg[\exp\bigg\{ n\frac{h^2}{2}R^2\kappa_1-nh\delta\bigg\}
		+\exp\bigg\{n\frac{h^2}{2}R^2\kappa_1 -nh\delta\bigg\}\bigg]\\
		&=&  2p\bigg[\exp\bigg\{ \frac{-n\delta^2}{2R^2\kappa_1}\bigg\}\bigg]\\
		&\le&\exp\big\{ -c_1n\delta^2+c_0\log p\big\},
\end{eqnarray*}
	
\end{proof}

	\begin{dl}\label{dl1}
		Assume \ref{assum1}- \ref{assum2}. Then, there exists a non-negative decreasing sequence $\epsilon_n\rightarrow 0$, such that
			$$\mathbb{P}_{\boldsymbol{\theta}}(\|\hat{\boldsymbol{\theta}}_{V\backslash \{s\}}-\boldsymbol{\theta}_{V\backslash \{s\}}\|_2\le \epsilon_n)\ge 1-\exp\big\{ -c_1 n\epsilon_n^2+c_0 \log p\big\}-\exp\left\{ -c_2 \dfrac{n}{p^2}+c_3\log p\right\},~\forall~ \boldsymbol{\theta}\in \boldsymbol{\Theta},$$
			when $n\rightarrow \infty$.
	\end{dl}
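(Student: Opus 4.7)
\medskip

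\noindent\textbf{Proof plan for Theorem \ref{dl1}.}

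The plan is to exploit convexity of the rescaled negative conditional log-likelihood $l(\cdot)$ together with the two concentration facts already in hand: the gradient control of Proposition \ref{md1} and the Hessian eigenvalue control of Lemma \ref{Fisher}. Concretely, I would parametrize by the deviation $u = \boldsymbol{\theta}_s' - \boldsymbol{\theta}_s \in \mathbb{R}^{p-1}$ and define
$$F(u) \;=\; l(\boldsymbol{\theta}_s + u,\,\mathbb{X}_{\{s\}};\,\mathbb{X}_{V\backslash\{s\}}) \;-\; l(\boldsymbol{\theta}_s,\,\mathbb{X}_{\{s\}};\,\mathbb{X}_{V\backslash\{s\}}),$$
which is convex (since the truncated Poisson conditional model is in the exponential family and the log-partition $D(\cdot)$ is convex), satisfies $F(0) = 0$, and is minimized at $\hat u := \hat{\boldsymbol{\theta}}_s - \boldsymbol{\theta}_s$, giving $F(\hat u) \le 0$.

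The core device is the ``basin'' argument used for convex M-estimators: if I can show that $F(u) > 0$ for every $u$ on the sphere $\{\|u\|_2 = \delta_n\}$, then strict convexity together with $F(0)=0$ forces the minimizer $\hat u$ to lie in the interior of the ball $\{\|u\|_2 \le \delta_n\}$, i.e.\ $\|\hat{\boldsymbol{\theta}}_s - \boldsymbol{\theta}_s\|_2 \le \delta_n$. So the whole proof reduces to producing a probabilistic lower bound on $F(u)$ uniformly over the sphere. To do that I would use a second-order Taylor expansion,
$$F(u) \;=\; \langle \nabla l(\boldsymbol{\theta}_s,\mathbb{X}_{\{s\}};\mathbb{X}_{V\backslash\{s\}}),\,u\rangle \;+\; \tfrac{1}{2}\, u^\top Q_s(\tilde{\boldsymbol{\theta}}_s)\, u,$$
for some $\tilde{\boldsymbol{\theta}}_s = \boldsymbol{\theta}_s + t u$, $t\in(0,1)$. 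For the linear piece, Hölder's inequality gives $|\langle \nabla l, u\rangle| \le \|\nabla l\|_\infty \|u\|_1 \le \sqrt{p-1}\,\|\nabla l\|_\infty \delta_n$, and Proposition \ref{md1} with threshold $\delta$ controls $\|\nabla l\|_\infty$ by $\delta$ on an event of probability at least $1 - \exp(-c_1 n)$. For the quadratic piece, I want $\tfrac{1}{2} u^\top Q_s(\tilde{\boldsymbol{\theta}}_s) u \ge \tfrac{\lambda_{\min}}{4}\|u\|_2^2$ on an event of probability at least $1 - \exp(-c_2 n)$; Lemma \ref{Fisher} delivers exactly this at $\boldsymbol{\theta}_s$, and a continuity/Lipschitz argument on $D''$ (which is smooth and bounded on the compact set $\boldsymbol{\Theta}$) transfers the bound to $\tilde{\boldsymbol{\theta}}_s$ as long as $\delta_n$ is small.

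Putting the two halves together, on the intersection of the two high-probability events,
$$F(u) \;\ge\; \tfrac{\lambda_{\min}}{4}\,\delta_n^2 \;-\; \sqrt{p-1}\,\delta\,\delta_n,$$
which is strictly positive whenever $\delta < \tfrac{\lambda_{\min}\delta_n}{4\sqrt{p-1}}$. I would then pick the threshold in Proposition \ref{md1} as $\delta \asymp \delta_n/\sqrt{p-1}$, which keeps the exponent $c_1 = O(\delta^2/(R^2\kappa_1))$ in Proposition \ref{md1} of order $\delta_n^2/((p-1) R^2 \kappa_1)$. It therefore suffices to choose $\delta_n \downarrow 0$ slowly enough that $n\delta_n^2/(p-1)$ still diverges (e.g.\ $\delta_n \asymp n^{-1/4}$ when $p,R$ are fixed), so both concentration events fail with probability at most $\exp(-cn)$, and a union bound over these two events yields the claimed $1-\exp(-cn)$ guarantee.

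The main obstacle I foresee is the quadratic lower bound step: Lemma \ref{Fisher} gives a bound on $\Lambda_{\min}(Q_s)$ only at the true parameter $\boldsymbol{\theta}_s$, whereas the Taylor remainder forces us to evaluate the Hessian at an intermediate point $\tilde{\boldsymbol{\theta}}_s$ that depends on $u$. Turning the pointwise bound of Lemma \ref{Fisher} into a uniform bound over the $\delta_n$-ball around $\boldsymbol{\theta}_s$ requires leveraging smoothness of $D''$ together with the boundedness of $\mathbb{X}_{V\backslash\{s\}}$ entries (which is at most $R$) so that the Hessian perturbation $\|Q_s(\tilde{\boldsymbol{\theta}}_s) - Q_s(\boldsymbol{\theta}_s)\|_2$ is $O(\delta_n)$, hence negligible for small $\delta_n$. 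All the other ingredients (convexity, Taylor expansion, union bound, choice of $\delta_n$) are standard once this Hessian stability is in place, and the extension to arbitrary $\bold{K}\subset V$ then follows by the zero-padding embedding already described in the paper.
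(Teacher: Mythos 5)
Your proposal follows essentially the same route as the paper's own proof: the convex ``basin'' argument on $G(\bold{u})=l(\boldsymbol{\theta}_s+\bold{u},\cdot)-l(\boldsymbol{\theta}_s,\cdot)$ with $G(0)=0$ and $G(\hat{\bold{u}})\le 0$, a second-order Taylor expansion whose gradient term is controlled by Proposition \ref{md1} and whose Hessian term is controlled by Lemma \ref{Fisher}, with the intermediate-point issue resolved exactly as in the paper via smoothness of $D''$ (the bound $|D'''|\le\kappa_2$ of Appendix \ref{suppB}) and boundedness of the covariates by $R$, yielding a perturbation of order $\kappa_2\sqrt{p}R\epsilon\lambda_{\max}$. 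The only cosmetic difference is that you keep the $\sqrt{p-1}$ factor in the gradient bound explicit and fold it into the choice of threshold in Proposition \ref{md1}, whereas the paper takes $\delta=\lambda_{\min}\epsilon/4$ and states the resulting admissible range for $p$ (and $\epsilon$) directly.
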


	\begin{proof}
		For a fixed design $\mathbb{X}$,  define $G: \mathbb{R}^{p-1}\longrightarrow \mathbb{R}$ as
		$$G({\bold{u}}, \mathbb{X}_s;\mathbb{X}_{V\backslash\{s\}})=l(\boldsymbol{\theta}_s+{\bold{u}},\mathbb{X}_s;\mathbb{X}_{V\backslash \{s\}})- l(\boldsymbol{\theta}_s,\mathbb{X}_s;\mathbb{X}_{V\backslash \{s\}}).$$
		Then, $G(0, \mathbb{X}_s;\mathbb{X}_{V\backslash\{s\}})=0$. Moreover, let $\hat{\bold{u}}= \hat{\boldsymbol{\theta}}_s-\boldsymbol{\theta}_s$, we have $G(\hat{\bold{u}}, \mathbb{X}_s;\mathbb{X}_{V\backslash\{s\}})\le 0$.
		
		Given a value $\epsilon>0$, if $G({\bold{u}}, \mathbb{X}_s;\mathbb{X}_{V\backslash\{s\}})>0,~\forall \bold{u}\in \mathbb{R}^{p-1}$ such that $\|\bold{u}\|_2=\epsilon$, then $\|\hat{\bold{u}}\|_2\le \epsilon$, since $G(., \mathbb{X}_s;\mathbb{X}_{V\backslash\{s\}})$ is a convex function. Therefore,
		$$\mathbb{P}_{\boldsymbol{\theta}}\left(\|\hat{\boldsymbol{\theta}}_s-\boldsymbol{\theta}_s\|_2\le \epsilon\right)\ge\mathbb{P}_{\boldsymbol{\theta}}\left(G({\bold{u}}, \bold{X}_s;\bold{X}_{V\backslash\{s\}})>0),~\forall \bold{u}\in \mathbb{R}^{p-1} \text{ such that } \|\bold{u}\|_2=\epsilon\right).$$
		A Taylor expansion of the rescaled negative node conditional log-likelihood at $\boldsymbol{\theta}_s$ yields
		\begin{eqnarray*}\label{funcg}
			G(\bold{u},\bold{X}_s;\bold{X}_{V\backslash\{s\}}) &=& l(\boldsymbol{\theta}_s+\bold{u},\bold{X}_s;\bold{X}_{V\backslash\{s\}})- l(\boldsymbol{\theta}_s,\bold{X}_s;\bold{X}_{V\backslash\{s\}})\\
			&=& \nabla l(\boldsymbol{\theta}_s,\bold{X}_s;\bold{X}_{V\backslash\{s\}})) \bold{u}^T+\frac{1}{2}\bold{u}[\nabla^2(l(\boldsymbol{\theta}_s+v\bold{u},\bold{X}_s;\bold{X}_{V\backslash\{s\}})]\bold{u}^T\nonumber,
		\end{eqnarray*}
		for some $v\in [0,1]$. Let 
		\begin{eqnarray*}
			q&=& \Lambda_{\min}(\nabla^2(l(\boldsymbol{\theta}_s+v\bold{u},\bold{X}_s;\bold{X}_{V\backslash\{s\}})))\\
			&\ge& \min_{v\in[0,1]}\Lambda_{\min}(\nabla^2(l(\boldsymbol{\theta}_s+v\bold{u},\bold{X}_s;\bold{X}_{V\backslash\{s\}})))\\
			&=& \min_{v\in[0,1]}\Lambda_{\min} \left[\frac{1}{n}\sum_{i=1}^{n} D''(\langle\boldsymbol{\theta}_s+v\bold{u},\bold{X}^{(i)}_{V\backslash \{s\}}\rangle )(\bold{X}_{V\backslash \{s\}}^{(i)})^T\bold{X}_{V\backslash \{s\}}^{(i)}\right].
		\end{eqnarray*}
		By using Taylor expansion for $D''(\langle\boldsymbol{\theta}_s+v\bold{u},\bold{X}^{(i)}_{V\backslash \{s\}}\rangle )$ at $\langle\boldsymbol{\theta}_s,\bold{X}^{(i)}_{V\backslash \{s\}}\rangle$, we have
		\begin{eqnarray*}
			&&\frac{1}{n}\sum_{i=1}^{n} D''(\langle\boldsymbol{\theta}_s+v\bold{u},\bold{X}^{(i)}_{V\backslash \{s\}}\rangle )(\bold{X}_{V\backslash \{s\}}^{(i)})^T\bold{X}_{V\backslash \{s\}}^{(i)})\\ 
			&&=\quad\frac{1}{n}\sum_{i=1}^{n} D''(\langle\boldsymbol{\theta}_s,\bold{X}^{(i)}_{V\backslash \{s\}}\rangle )(\bold{X}_{V\backslash \{s\}}^{(i)})^T\bold{X}_{V\backslash \{s\}}^{(i)}+\\
			&&\qquad
			\frac{1}{n}\sum_{i=1}^{n} D'''(\langle\boldsymbol{\theta}_s+v'\bold{u},\bold{X}^{(i)}_{V\backslash \{s\}}\rangle )[v\bold{u}(\bold{X}_{V\backslash \{s\}}^{(i)})^T][(\bold{X}_{V\backslash \{s\}}^{(i)})^T\bold{X}_{V\backslash \{s\}}^{(i)}],
		\end{eqnarray*}
		for some $v'\in [0,1]$. Fixed $\delta=\dfrac{\lambda_{\min}}{8}$  in  Lemma \ref{Fisher}. We have 
		\begin{eqnarray*}
			q&\ge& \Lambda_{\min}\left[\frac{1}{n}\sum_{i=1}^{n} D''(\langle\boldsymbol{\theta}_s,\bold{X}^{(i)}_{V\backslash \{s\}}\rangle )(\bold{X}_{V\backslash \{s\}}^{(i)})^T\bold{X}_{V\backslash \{s\}}^{(i)}\right]\\
			&&-\max_{v'\in[0,1]}\Lambda_{\max}\bigg[\frac{1}{n}\sum_{i=1}^{n} \big|D'''(\langle\boldsymbol{\theta}_s+v'\bold{u},\bold{X}^{(i)}_{V\backslash \{s\}}\rangle )\big|[\bold{u}(\bold{X}_{V\backslash \{s\}}^{(i)})^T]\big[(\bold{X}_{V\backslash \{s\}}^{(i)})^T\bold{X}_{V\backslash \{s\}}^{(i)}\big]\bigg]\\
			&\ge& \lambda_{\min}-\delta-\max_{v'\in[0,1]}\Lambda_{\max}\bigg[\frac{1}{n}\sum_{i=1}^{n} \big|D'''(\langle\boldsymbol{\theta}_s+v'\bold{u},\bold{X}^{(i)}_{V\backslash \{s\}}\rangle )\big|[\bold{u}(\bold{X}_{V\backslash \{s\}}^{(i)})^T]\big[(\bold{X}_{V\backslash \{s\}}^{(i)})^T\bold{X}_{V\backslash \{s\}}^{(i)}\big]\bigg]\\
			&\ge& \lambda_{\min}-\delta-\max_{v'\in[0,1]}\big|D'''(\langle\boldsymbol{\theta}_{V\backslash \{s\}}+v'\bold{u},\bold{X}^{(i)}_{V\backslash \{s\}}\rangle )\big|\big|\bold{u}(\bold{X}_{V\backslash \{s\}}^{(i)})^T\big|\Lambda_{\max}\bigg[\frac{1}{n}\sum_{i=1}^{n} (\bold{X}_{V\backslash \{s\}}^{(i)})^T\bold{X}_{V\backslash \{s\}}^{(i)}\bigg]\\
			&\ge& \lambda_{\min}-2\delta- \kappa_2 R\sqrt{p}\|\bold{u}\|_2\lambda_{\max}\\
			&=& \lambda_{\min}-2\delta- \kappa_2\sqrt{p}R\epsilon\lambda_{\max} \\
			&>& \dfrac{\lambda_{\min}}{2},\quad \text{provided that } \epsilon< \dfrac{\lambda_{\min}}{4\sqrt{p}\lambda_{\max}\kappa_2R},
		\end{eqnarray*}
	with probability at least $1-\exp\left\{ -c_2 \dfrac{n}{p^2}+c_3\log p\right\}$, where $\big|D'''(\langle\boldsymbol{\theta}_s+v'\bold{u},\bold{X}^{(i)}_{V\backslash \{s\}}\rangle )\big|<\kappa_2,~\forall ~\boldsymbol{\theta}_s\in \boldsymbol{\Theta}$ (\text{since }$ D'''(.)$ \text{ is a continuous function, and }$\boldsymbol{\Theta}$ \text{ is bounded},  see Appendix \ref{suppB} for details).
		
		Let  $\delta=\dfrac{\lambda_{\min}}{4}\epsilon$ in  Proposition \ref{md1}. Then, from Proposition \ref{md1}, we have
		$$\nabla_t l(\boldsymbol{\theta}_s,\bold{X}_s;\bold{X}_{V\backslash \{s\}}))\ge -\dfrac{\lambda_{\min}}{4}\epsilon,$$
		with probability at least $1-\exp\big\{ -c_1 n\epsilon^2\big\}$. 
		Combining with the inequality of $q$, we have
		\begin{eqnarray*}
			G(\bold{u},\bold{X}_s;\bold{X}_{V\backslash\{s\}})&=& \nabla l(\boldsymbol{\theta}_s,\bold{X}_s;\bold{X}_{V\backslash\{s\}})) \bold{u}^T+\frac{1}{2}\bold{u}[\nabla^2(l(\boldsymbol{\theta}_s+v\bold{u},_s,\bold{X}_s;\bold{X}_{V\backslash\{s\}}))]\bold{u}^T\nonumber\\
			&>& -\dfrac{\lambda_{\min}}{4}\epsilon^2+\dfrac{\lambda_{\min}}{4}\epsilon^2=0,
		\end{eqnarray*}
		with probability at least $1-\exp\big\{ -c_1 n\epsilon^2+c_0 \log p\big\}-\exp\left\{ -c_2 \dfrac{n}{p^2}+c_3\log p\right\}$, $\text{provided that } \epsilon< \dfrac{\lambda_{\min}}{4\sqrt{p}\lambda_{\max}\kappa_2R}$. It means that $\|\hat{\bold{u}}\|_2<\epsilon$.\\
		When $n\rightarrow\infty$ we can choose a non-negative decreasing sequence $\epsilon_n$ such that $\epsilon_n<\dfrac{\lambda_{\min}}{4\sqrt{p}\lambda_{\max}\kappa_2R}$, then
		$$\mathbb{P}_{\boldsymbol{\theta}}(\|\hat{\boldsymbol{\theta}}_{V\backslash \{s\}}-\boldsymbol{\theta}_{V\backslash \{s\}}\|_2\le \epsilon_n)\ge 1-\exp\big\{ -c_1 n\epsilon_n^2+c_0 \log p\big\}-\exp\left\{ -c_2 \dfrac{n}{p^2}+c_3\log p\right\},$$
		when $n\rightarrow \infty$.
	\end{proof}
	
	Results for $\bold{K}\subset V$ are derived as following.\\\\
	\noindent
	{\bf Proposition 4.3}
	Assume \ref{assum1}- \ref{assum2} and let $\bold{K}\subset V$. Then, for all $s\in \bold{K}$ and any $\delta>0$
	$$\mathbb{P}_{\boldsymbol{\theta}}(\|\nabla l(\boldsymbol{\theta}_{s|\bold{K}},\bold{X}_{\{s\}}; {\bold{X}_{\bold{K}\backslash\{s\}}})\|_{\infty}\ge \delta)\le \exp\big\{ -c_1 n\delta^2+c_0 \log d\big\},$$
	$~\forall~ \boldsymbol{\theta_{s|\bold{K}}}\in \boldsymbol{\Theta},$ when  $n\rightarrow\infty$.\\

	\begin{proof}
			The proof of Proposition \ref{pro11} follows the lines of Proposition \ref{md1}. We note that the set of explanatory variables $\bold{X}_{\bold{K}\backslash\{s\}}$ in the generalized linear model $X_s$ given $\bold{X}_{\bold{K}\backslash\{s\}}$ does not include variables $X_t$, with $t\in V\backslash\bold{K}$. Suppose we zero-pad the true parameter $\boldsymbol{\theta}_{s|\bold{K}}\in\mathbb{R}^{|\bold{K}|-1}$ to include zero weights over $V\backslash\bold{K}$, then the resulting parameter would lie in $\mathbb{R}^{|p-1|}$. 

		
			Moreover, when the maximum number of neighbours that one node is allowed to have is fixed, a control is operated on the cardinality of the set $\bold{K}$,  $|\bold{K}|\le m+2\le d+1$. In this case, parameters $\theta_{st|\bold{K}}$ are estimated from models that are restricted on subsets of variables with their cardinalities less than or equal to $d$. Therefore, $p$ in Proposition \ref{md1} is replaced by $d$. In detail, for all $s\in \bold{K}$ and any $\delta>0$
			$$\mathbb{P}_{\boldsymbol{\theta}}(\|\nabla l(\boldsymbol{\theta}_{s|\bold{K}},\bold{X}_{\{s\}}; {\bold{X}_{\bold{K}\backslash\{s\}}})\|_{\infty}\ge \delta)\le \exp\{-c_1n\delta^2+c_0\log d\},$$
			$~\forall~ \boldsymbol{\theta_{s|\bold{K}}}\in \boldsymbol{\Theta}.$ 
	\end{proof}
	
	We take the same way as in the proof of Theorem \ref{dl1} to prove Theorem \ref{dl2}.
	
	\noindent
	{\bf Theorem 4.4}
  Assume \ref{assum1}- \ref{assum2} and let $\bold{K}\subset V$. Then, there exists a non-negative decreasing sequence $\epsilon_n\rightarrow 0$, such that
		$$\mathbb{P}_{\boldsymbol{\theta}}(\|\hat{\boldsymbol{\theta}}_{s|\bold{K}}-\boldsymbol{\theta}_{s|\bold{K}}\|_2\le \epsilon_n)\ge 1-\exp\{-c_1n\epsilon_n^2+c_0\log d\}-\exp\bigg\{-c_2\dfrac{n}{d^2}+c_3\log d\bigg\},$$
		$~\forall~ s\in \bold{K}, \boldsymbol{\theta}\in \boldsymbol{\Theta},$ when $n\rightarrow \infty$.		
	
	\begin{proof}
		Let $\hat{\bold{u}}= \hat{\boldsymbol{\theta}}_{s|\bold{K}}-\boldsymbol{\theta}_{s|\bold{K}}$, and define $G: \mathbb{R}^{|\bold{K}|-1}\longrightarrow \mathbb{R}$ as
		$$G(\hat{\bold{u}},\mathbb{X}_{\{s\}}; {\mathbb{X}_{\bold{K}\backslash\{s\}}})=l(\boldsymbol{\theta}_{s|\bold{K}}+\hat{\bold{u}},\mathbb{X}_{\{s\}}; {\mathbb{X}_{\bold{K}\backslash\{s\}}})- l(\boldsymbol{\theta}_{s|\bold{K}},\mathbb{X}_{\{s\}}; {\mathbb{X}_{\bold{K}\backslash\{s\}}}).$$
		Similar to Theorem \ref{dl1}, we have
		$$\mathbb{P}_{\boldsymbol{\theta}}(\|\hat{\boldsymbol{\theta}}_{s|\bold{K}}-\boldsymbol{\theta}_{s|\bold{K}}\|_2\le \epsilon)\ge\mathbb{P}_{\boldsymbol{\theta}}\left(G({\bold{u}}, \bold{X}_s; {\bold{X}_{\bold{K}\backslash\{s\}}})>0),~\forall \bold{u}\in \mathbb{R}^{|\bold{K}|} \text{ such that } \|\bold{u}\|_2=\epsilon\right).$$
		Recall the conditional rescaled negative log-likelihood function:
		\begin{equation*}
		l(\boldsymbol{\theta}_{s|\bold{K}},\bold{X}_{\{s\}}; {\bold{X}_{\bold{K}\backslash\{s\}}}) =\frac{1}{n}\sum_{i=1}^{n}\left[-X_{is}\langle\boldsymbol{\theta}_{s|\bold{K}}, {\bold{X}^{(i)}_{\bold{K}\backslash\{s\}}}\rangle +D(\langle\boldsymbol{\theta}_{s|\bold{K}}, {\bold{X}^{(i)}_{\bold{K}\backslash\{s\}}}\rangle)\right].\nonumber
		\end{equation*}
		
		By its Taylor expansion  at $\boldsymbol{\theta}_{s|\bold{K}}$, we have
		\begin{eqnarray*}\label{funcg}
			G(\bold{u}) &=& l(\boldsymbol{\theta}_{s|\bold{K}}+\bold{u},\bold{X}_{\{s\}}; {\bold{X}_{\bold{K}\backslash\{s\}}})- l(\boldsymbol{\theta}_{s|\bold{K}},\bold{X}_{\{s\}}; {\bold{X}_{\bold{K}\backslash\{s\}}})\\
			&=& \nabla l(\boldsymbol{\theta}_{s|\bold{K}},\bold{X}_{\{s\}}; {\bold{X}_{\bold{K}\backslash\{s\}}}) \bold{u}^T+\frac{1}{2}\bold{u}[\nabla^2(l(\boldsymbol{\theta}_{s|\bold{K}}+v\bold{u},\bold{X}_{\{s\}}; {\bold{X}_{\bold{K}\backslash\{s\}}}))]\bold{u}^T.\nonumber
		\end{eqnarray*}
		Let 
		\begin{eqnarray*}
			q&=& \Lambda_{\min}(\nabla^2(l(\boldsymbol{\theta}_{s|\bold{K}}+v\bold{u},\bold{X}_{\{s\}}; {\bold{X}_{\bold{K}\backslash\{s\}}})))\\
			&\ge& \min_{v\in[0,1]}\Lambda_{\min}(\nabla^2(l(\boldsymbol{\theta}_{s|\bold{K}}+v\bold{u},\bold{X}_{\{s\}}; {\bold{X}_{\bold{K}\backslash\{s\}}})))\\
			&=& \min_{v\in[0,1]}\Lambda_{\min} \left[\frac{1}{n}\sum_{i=1}^{n} D''(\langle\boldsymbol{\theta}_{s|\bold{K}}+v\bold{u}, {\bold{X}^{(i)}_{\bold{K}\backslash\{s\}}}\rangle )\big( {\bold{X}^{(i)}_{\bold{K}\backslash\{s\}}}\big)^T {\bold{X}^{(i)}_{\bold{K}\backslash\{s\}}}\right].\\
		\end{eqnarray*}
		By using Taylor expansion of $D''(\langle\boldsymbol{\theta}_{s|\bold{K}}+v\bold{u}, {\bold{X}^{(i)}_{\bold{K}\backslash\{s\}}}\rangle )$ at $\langle\boldsymbol{\theta}_{s|\bold{K}}, {\bold{X}^{(i)}_{\bold{K}\backslash\{s\}}}\rangle$, we have
		\begin{eqnarray*}
			&\dfrac{1}{n}&\sum_{i=1}^{n} D''\big(\langle\boldsymbol{\theta}_{s|\bold{K}}+v\bold{u}, {\bold{X}^{(i)}_{\bold{K}\backslash\{s\}}}\rangle \big)\big( {\bold{X}^{(i)}_{\bold{K}\backslash\{s\}}}\big)^T {\bold{X}^{(i)}_{\bold{K}\backslash\{s\}}}\\
			&=&\frac{1}{n}\sum_{i=1}^{n} D''\big(\langle\boldsymbol{\theta}_{s|\bold{K}}, {\bold{X}^{(i)}_{\bold{K}\backslash\{s\}}}\rangle \big)\big( {\bold{X}^{(i)}_{\bold{K}\backslash\{s\}}}\big)^T {\bold{X}^{(i)}_{\bold{K}\backslash\{s\}}}\\
			&&+\frac{1}{n}\sum_{i=1}^{n} D'''\big(\langle\boldsymbol{\theta}_{s|\bold{K}}+v'\bold{u}, {\bold{X}^{(i)}_{\bold{K}\backslash\{s\}}}\rangle \big)\big[v\bold{u}\big( {\bold{X}^{(i)}_{\bold{K}\backslash\{s\}}}\big)^T\big] \big[\big( {\bold{X}^{(i)}_{\bold{K}\backslash\{s\}}}\big)^T {\bold{X}^{(i)}_{\bold{K}\backslash\{s\}}}\big]	.	
		\end{eqnarray*}
		Hence, 
		\begin{eqnarray*}
			q&\ge& \Lambda_{\min}\left[\frac{1}{n}\sum_{i=1}^{n} D''\big(\langle\boldsymbol{\theta}_{s|\bold{K}}, {\bold{X}^{(i)}_{\bold{K}\backslash\{s\}}}\rangle \big)\big( {\bold{X}^{(i)}_{\bold{K}\backslash\{s\}}}\big)^T {\bold{X}^{(i)}_{\bold{K}\backslash\{s\}}}\right]\\
			&&-\max_{v'\in[0,1]}\Lambda_{\max}\left[\frac{1}{n}\sum_{i=1}^{n} \bigg|\frac{1}{n}\sum_{i=1}^{n} D'''\big(\langle\boldsymbol{\theta}_{s|\bold{K}}+v'\bold{u}, {\bold{X}^{(i)}_{\bold{K}\backslash\{s\}}}\rangle \big)\bigg|\big[v\bold{u}\big( {\bold{X}^{(i)}_{\bold{K}\backslash\{s\}}}\big)^T\big] \big( {\bold{X}^{(i)}_{\bold{K}\backslash\{s\}}}\big)^T {\bold{X}^{(i)}_{\bold{K}\backslash\{s\}}}\right]\\
			&\ge& \lambda_{\min}-\delta-\max_{v'\in[0,1]}\Lambda_{\max}\left[\frac{1}{n}\sum_{i=1}^{n} \big|D'''\big(\langle\boldsymbol{\theta}_{s|\bold{K}}+v'\bold{u}, {\bold{X}^{(i)}_{\bold{K}\backslash\{s\}}}\rangle \big)\big|\big[v\bold{u}\big( {\bold{X}^{(i)}_{\bold{K}\backslash\{s\}}}\big)^T\big] \big( {\bold{X}^{(i)}_{\bold{K}\backslash\{s\}}}\big)^T {\bold{X}^{(i)}_{\bold{K}\backslash\{s\}}}\right]\\
			&\ge&\lambda_{\min}-2\delta- \kappa_2\sqrt{p}R\epsilon\lambda_{\max}
		\end{eqnarray*}
		The second and  third inequality are due to  well-known results on eigenvalue inequalities for a matrix and its submatrix \citep[see, for example, ][]{johnson1981eigenvalue}. 
		Here, $$Q_{s|\bold{K}}(\boldsymbol{\theta}_{s|\bold{K}})=\frac{1}{n}\sum_{i=1}^{n} D''\left(\langle\boldsymbol{\theta}_{s|\bold{K}}, {\bold{X}^{(i)}_{\bold{K}\backslash\{s\}}}\rangle \right)\left( {\bold{X}^{(i)}_{\bold{K}\backslash\{s\}}}\right)^T {\bold{X}^{(i)}_{\bold{K}\backslash\{s\}}}$$
		is a sub-matrix of the Hessian matrix $Q_s(\boldsymbol{\theta}_s)$. Hence, 
		$$\Lambda_{\min}\left[\frac{1}{n}\sum_{i=1}^{n} D''\left(\langle\boldsymbol{\theta}_{s|\bold{K}}, {\bold{X}^{(i)}_{\bold{K}\backslash\{s\}}}\rangle \right)\left( {\bold{X}^{(i)}_{\bold{K}\backslash\{s\}}}\right)^T {\bold{X}^{(i)}_{\bold{K}\backslash\{s\}}}\right]\ge \Lambda_{\min}(Q_s(\boldsymbol{\theta}_s))\ge \lambda_{\min}-\delta.$$
		Similarly, for the matrix $\left( {\bold{X}^{(i)}_{\bold{K}\backslash\{s\}}}\right)^T {\bold{X}^{(i)}_{\bold{K}\backslash\{s\}}}$, we have
		\begin{eqnarray*}
			&&\max_{v'\in[0,1]}\Lambda_{\max}\bigg[\frac{1}{n}\sum_{i=1}^{n} \bigg|D'''\left(\langle\boldsymbol{\theta}_{s|\bold{K}}+v'\bold{u}, {\bold{X}^{(i)}_{\bold{K}\backslash\{s\}}}\rangle \right)\bigg|\left[v\bold{u}\left( {\bold{X}^{(i)}_{\bold{K}\backslash\{s\}}}\right)^T\right] \left( {\bold{X}^{(i)}_{\bold{K}\backslash\{s\}}}\right)^T {\bold{X}^{(i)}_{\bold{K}\backslash\{s\}}}\\
			&&\le \kappa_2\sqrt{p}R\epsilon\lambda_{\max}+\delta.
		\end{eqnarray*}
		Then, by performing the same analysis as in the proof of Theorem \ref{dl1} and Proposition \ref{md1}, we get the result.
	\end{proof}
	{
		\begin{cy}\label{cy2}
			In the proof of Theorem \ref{dl2}, we only require  the uniform convergence of a submatrix (restricted on $K$), $Q_{s|\bold{K}}(\boldsymbol{\theta}_{s|\bold{K}})$, of the sample Fisher information matrix $Q_{s}(\boldsymbol{\theta}_{s})$. Therefore, when the maximum neighbourhood size  is known,   $|\bold{K}|\le m+2\le d+1$, we have  convergence provided that $n>O_p\left(\kappa_1R^4d^2\log d\right)$. In detail, let $I_{s|\bold{K}}(\boldsymbol{\theta}_{s|\bold{K}})$ be the submatrix of $I_s(\boldsymbol{\theta}_s)$ indexed in $\bold{K}$, Equation \eqref{Fisherdistance} becomes

			\begin{eqnarray*}
	\mathbb{P}_{\boldsymbol{\theta}}\left(|||I_{s|\bold{K}}(\boldsymbol{\theta}_{s|\bold{K}})-Q_{s|\bold{K}}(\boldsymbol{\theta}_{s|\bold{K}})|||_2\ge \delta\right)&\le&\mathbb{P}_{\boldsymbol{\theta}}\left(\bigg(\sum_{j,k\in \bold{K}\backslash\{s\}}(Z_{jk}^n)^2\bigg)^{1/2}\ge \delta \right)\nonumber\\
	&\le& 2m^2\exp\left\{-\frac{\delta^2n}{2m^2\kappa_1^2R^4}\right\}\nonumber\\
	&\le& \exp\bigg\{-c_2\frac{n\delta^2}{d^2}+c_3\log d\bigg\}.
\end{eqnarray*}
	\end{cy}

{ 
\noindent
{\bf Theorem 5.1 }	
Assume that the log-likelihood function of models \eqref{Poison model} and \eqref{TPoisson model} have unique optimal solution on $\boldsymbol{\Theta}$. Then, $\hat{\boldsymbol{\theta}}_{ns}^{TP}$ converges to the true parameter $\boldsymbol{\theta}_s^*$ when $n$ tends to infinity provided that  $|\Lambda_{min}[Q_s^P({\boldsymbol{\theta}}_{s})]|=|\Lambda_{min}[\nabla^2\ell_{n}^{P}({\boldsymbol{\theta}}_{s},\mathbb{X}_s;\mathbb{X}_{V\backslash\{s\}})]|>\lambda_{min}>0,$ for all ${\boldsymbol{\theta}}_{s}\in \boldsymbol{\Theta}.$ \\

\begin{proof}
	We prove Theorem \ref{robust} by contradiction. Indeed, assume that $\hat{\boldsymbol{\theta}}_{ns}^{TP}$ does not converge to $\boldsymbol{\theta}_s^*$. Then, $\exists ~ \epsilon_0$ and $\forall n$ there exists $n_0>n$ such that $\|\hat{\boldsymbol{\theta}}_{n_0s}^{TP}-\boldsymbol{\theta}_s^*\|_2\ge \epsilon_0.$ 
	Moreover, $ \hat{\boldsymbol{\theta}}^P_{ns}$ converges to $\boldsymbol{\theta}_{s}^*$, then, for $\epsilon_0>0$, there exist $n_1>0$ such that $\forall n\ge n_1$,
	$$\|\hat{\boldsymbol{\theta}}^P_{ns}-\boldsymbol{\theta}_{s}^*\|_2\le \frac{1}{2}\epsilon_0.$$ 	Fix $n_0>n_1$ then,
	using Taylor expansion for $\ell_{n_0}^{P}(\hat{\boldsymbol{\theta}}_{n_0s}^{TP},\mathbb{X}_s;\mathbb{X}_{V\backslash\{s\}})$ at ${\boldsymbol{\theta}}^P_{n_0s}$, we have
	\begin{eqnarray*}
		\ell^{P}_{n_0}(\hat{\boldsymbol{\theta}}_{n_0s}^{TP},\mathbb{X}_s;\mathbb{X}_{V\backslash\{s\}})\!\!\!&=&\!\! \ell_{n_0}^{P}(\hat{\boldsymbol{\theta}}^P_{n_0s},\mathbb{X}_s;\mathbb{X}_{V\backslash\{s\}}) + \nabla\ell_{n_0}^{P}(\hat{\boldsymbol{\theta}}^P_{n_0s},\mathbb{X}_s;\mathbb{X}_{V\backslash\{s\}})(\hat{\boldsymbol{\theta}}_{n_0s}^{TP}-\hat{\boldsymbol{\theta}}^{P}_{n_0s})^T\\ &&~ +\frac{1}{2}(\hat{\boldsymbol{\theta}}_{n_0s}^{TP}-\hat{\boldsymbol{\theta}}^{P}_{n_0s})\nabla^2\ell_{n_0}^{P}(\boldsymbol{\theta}'_{s},\mathbb{X}_s;\mathbb{X}_{V\backslash\{s\}})(\hat{\boldsymbol{\theta}}_{n_0s}^{TP}-\hat{\boldsymbol{\theta}}^{P}_{n_0s})^T\\
		\!\!\!&=&\!\!\! \ell_{n_0}^{P}(\hat{\boldsymbol{\theta}}^P_{n_0s},\mathbb{X}_s;\mathbb{X}_{V\backslash\{s\}}) +\frac{1}{2}(\hat{\boldsymbol{\theta}}_{n_0s}^{TP}-\hat{\boldsymbol{\theta}}^{P}_{n_0s})\nabla^2\ell_{n_0}^{P}({\boldsymbol{\theta}}'_{s},\mathbb{X}_s;\mathbb{X}_{V\backslash\{s\}})(\hat{\boldsymbol{\theta}}_{n_0s}^{TP}-\hat{\boldsymbol{\theta}}^{P}_{n_0s})^T
	\end{eqnarray*}
	where $\boldsymbol{\theta}'_{s}= \hat{\boldsymbol{\theta}}_{n_0s}^{TP}+v(\hat{\boldsymbol{\theta}}^{P}_{n_0s}-\hat{\boldsymbol{\theta}}_{n_0s}^{TP})$, for some $v\in[0,1].$ Hence,
	\begin{eqnarray}\label{varianceell}
	&&\big\|\ell^{P}_{n_0}(\hat{\boldsymbol{\theta}}_{n_0s}^{TP},\mathbb{X}_s;\mathbb{X}_{V\backslash\{s\}})-\ell_{n_0}^{P}(\hat{\boldsymbol{\theta}}^P_{n_0s},\mathbb{X}_s;\mathbb{X}_{V\backslash\{s\}})\big\|_2\\
	&&~= \frac{1}{2}\big\|(\hat{\boldsymbol{\theta}}_{n_0s}^{TP}-\hat{\boldsymbol{\theta}}^{P}_{n_0s})\nabla^2\ell_{n_0}^{P}({\boldsymbol{\theta}}'_{s},\mathbb{X}_s;\mathbb{X}_{V\backslash\{s\}})(\hat{\boldsymbol{\theta}}_{n_0s}^{TP}-\hat{\boldsymbol{\theta}}^{P}_{n_0s})^T\big\|_2\nonumber\\
	&&~\ge\frac{1}{2}\big|\Lambda_{min}\left[\nabla^2\ell_{n_0}^{P}({\boldsymbol{\theta}}'_{s},\mathbb{X}_s;\mathbb{X}_{V\backslash\{s\}})\right]\big|\big\|(\hat{\boldsymbol{\theta}}_{n_0s}^{TP}-\hat{\boldsymbol{\theta}}^{P}_{n_0s})\big\|_2^2\nonumber\\
	&&~= \frac{1}{8}\lambda_{min}(\|\hat{\boldsymbol{\theta}}_{n_0s}^{TP}-\boldsymbol{\theta}_0\|_2-\|\hat{\boldsymbol{\theta}}^P_{n_0s}-\boldsymbol{\theta}_{s}^*\|_2)^2\nonumber\\
	&&~\ge\frac{1}{8}\lambda_{min}\epsilon^2_0.\nonumber
	\end{eqnarray}
	Choose $\epsilon_1=\dfrac{1}{16}\lambda_{min}\epsilon^2_0$ in Equation \eqref{Rconvergence}, then
	\begin{equation}\label{Rconvergence1}
	\|\ell^{TP}_{n_0}(\hat{\boldsymbol{\theta}}_{n_0s}^{TP},\mathbb{X}_s;\mathbb{X}_{V\backslash\{s\}})-\ell_{n_0}^{P}(\hat{\boldsymbol{\theta}}_{n_0s}^{TP},\mathbb{X}_s;\mathbb{X}_{V\backslash\{s\}})\|_2< \frac{1}{16}\lambda_{min}\epsilon^2_0.
	\end{equation}
	From Equation \ref{varianceell} and \ref{Rconvergence1}, we have
	\begin{eqnarray*}
		&&\|\ell^{TP}_{n_0}(\hat{\boldsymbol{\theta}}_{n_0s}^{TP},\mathbb{X}_s;\mathbb{X}_{V\backslash\{s\}})-\ell_{n_0}^{P}(\hat{\boldsymbol{\theta}}^P_{n_0s},\mathbb{X}_s;\mathbb{X}_{V\backslash\{s\}})\|_2\\
		&&~\ge \|\ell^{P}_{n_0}(\hat{\boldsymbol{\theta}}_{n_0s}^{TP},\mathbb{X}_s;\mathbb{X}_{V\backslash\{s\}})-\ell_{n_0}^{P}(\hat{\boldsymbol{\theta}}^P_{n_0s},\mathbb{X}_s;\mathbb{X}_{V\backslash\{s\}})\|_2\\
		&&\quad -\|\ell^{TP}_{n_0}(\hat{\boldsymbol{\theta}}_{n_0s}^{TP},\mathbb{X}_s;\mathbb{X}_{V\backslash\{s\}})-\ell^{P}_{n_0}(\hat{\boldsymbol{\theta}}_{n_0s}^{TP},\mathbb{X}_s;\mathbb{X}_{V\backslash\{s\}})\|_2\\
		&&~> \frac{1}{8}\lambda_{min}\epsilon^2_0-\frac{1}{16}\lambda_{min}\epsilon^2_0\\
		&&~= \frac{1}{16}\lambda_{min}\epsilon^2_0\\
		&&~>\|\ell^{TP}_{n_0}(\hat{\boldsymbol{\theta}}_{n_0s}^{TP},\mathbb{X}_s;\mathbb{X}_{V\backslash\{s\}})-\ell_{n_0}^{P}(\hat{\boldsymbol{\theta}}_{n_0s}^{TP},\mathbb{X}_s;\mathbb{X}_{V\backslash\{s\}})\|_2.
	\end{eqnarray*}
	It is easy to see that $\ell^{TP}_{n_0}({\boldsymbol{\theta}}_{s},\mathbb{X}_s;\mathbb{X}_{V\backslash\{s\}})>\ell^{P}_{n_0}({\boldsymbol{\theta}}_{s},\mathbb{X}_s;\mathbb{X}_{V\backslash\{s\}})$ for all ${\boldsymbol{\theta}}_{s}\in\boldsymbol{\Theta}$. Therefore, 
	$$\ell_{n_0}^{P}(\hat{\boldsymbol{\theta}}_{n_0s}^{TP},\mathbb{X}_s;\mathbb{X}_{V\backslash\{s\}})>\ell_{n_0}^{P}(\hat{\boldsymbol{\theta}}^P_{n_0s},\mathbb{X}_s;\mathbb{X}_{V\backslash\{s\}}),$$ contradict to $\hat{\boldsymbol{\theta}}^P_{n_0s}$ is the maximum likelihood estimate of $\ell_{n_0}^{P}({\boldsymbol{\theta}}_{s},\mathbb{X}_s;\mathbb{X}_{V\backslash\{s\}})$.
	
\end{proof}}

	{
		\section{A bound on the second and third derivative of the log normalizing term $D(.)$}\label{suppB}
		Here, we derive  bounds $\kappa_1$ and $ \kappa_2$ for the second and third derivative of the log normalizing term $D(.)$, that is, $D''(vhX_{it}+\langle\boldsymbol{\theta}_s,\bold{X}_{V\backslash \{s\}}^{(i)}\rangle),$ and $ D'''(vhX_{it}+\langle\boldsymbol{\theta}_s,\bold{X}_{V\backslash \{s\}}^{(i)}\rangle)$. For the sake of simplicity, we write 
		$$D(x)= \log\big(\sum_{k=0}^R\exp\big\{kx-\log k!\big\}\big),$$
		which we consider on a compact set $U\subset\mathbb{R}$. The first and second derivative of $D(.)$ is
		\begin{eqnarray*}
			D'(x)&=&\dfrac{\sum_{k=0}^R\exp\big\{kx-\log k!\big\}k}{\sum_{k=0}^R\exp\big\{kx-\log k!\big\}}\\
			D''(x)&=&\dfrac{\sum_{k=0}^R\exp\big\{kx-\log k!\big\}k^2\sum_{k=0}^R\exp\big\{kx-\log k!\big\}-\big(\sum_{k=0}^R\exp\big\{kx-\log k!\big\}k\big)^2}{\big(\sum_{k=0}^R\exp\big\{kx-\log k!\big\}\big)^2}\\
			&=&\dfrac{\sum_{k,h=0}^R\exp\big\{kx-\log k!\big\}\exp\big\{hx-\log h!\big\}(k^2-kh)}{\sum_{k,h=0}^R\exp\big\{kx-\log k!\big\}\exp\big\{hx-\log h!\big\}}.
		\end{eqnarray*}
		Hence, 
		\begin{eqnarray*}
			|D''(x)|&=&\left|\dfrac{\sum_{k,h=0}^R\exp\big\{kx-\log k!\big\}\exp\big\{hx-\log h!\big\}(k^2-kh)}{\sum_{k,h=0}^R\exp\big\{kx-\log k!\big\}\exp\big\{hx-\log h!\big\}}\right|\\
			&\le&\dfrac{\sum_{k,h=0}^R\exp\big\{kx-\log k!\big\}\exp\big\{hx-\log h!\big\}|k^2-kh|}{\sum_{k,h=0}^R\exp\big\{kx-\log k!\big\}\exp\big\{hx-\log h!\big\}}\\
			&\le&2R^2\dfrac{\sum_{k,h=0}^R\exp\big\{kx-\log k!\big\}\exp\big\{hx-\log h!\big\}}{\sum_{k,h=0}^R\exp\big\{kx-\log k!\big\}\exp\big\{hx-\log h!\big\}}\\
			&=&2R^2.
		\end{eqnarray*}
		Therefore, $\kappa_1\le O_p(R^2)$. Similarly, 
		\begin{eqnarray*}
			D'''(x)&=&\dfrac{N(x)}{\big(\sum_{k=0}^R\exp\big\{kx-\log k!\big\}\big)^4}
		\end{eqnarray*}
		where
		\begin{eqnarray*}
			N(x)&=&\bigg(\sum_{k=0}^R\exp\big\{kx-\log k!\big\}k^3\sum_{k=0}^R\exp\big\{kx-\log k!\big\}+\sum_{k=0}^R\exp\big\{kx-\log k!\big\}k^2\\
			&&~\sum_{k=0}^R\exp\big\{kx-\log k!\big\}k
			-2\sum_{k=0}^R\exp\big\{kx-\log k!\big\}k\sum_{k=0}^R\exp\big\{kx-\log k!\big\}k^2\bigg)\\
			&&~\big(\sum_{k=0}^R\exp\big\{kx-\log k!\big\}\big)^2
			-\bigg(\sum_{k=0}^R\exp\big\{kx-\log k!\big\}k^2\sum_{k=0}^R\exp\big\{kx-\log k!\big\}\\
			&&-\big(\sum_{k=0}^R\exp\big\{kx-\log k!\big\}k\big)^2\bigg)
			2\sum_{k=0}^R\exp\big\{kx-\log k!\big\}\sum_{k=0}^R\exp\big\{kx-\log k!\big\}k\\
			&=&\sum_{k,h,r,t=0}^R\exp\big\{kx-\log k!\big\}\exp\big\{hx-\log h!\big\}\exp\big\{rx-\log r!\big\}\exp\big\{tx-\log t!\big\}\\
			&&~\big(k^3-kh^2-2k^2t+2kht\big).
		\end{eqnarray*}
		Hence,
		\begin{eqnarray*}
			|D'''(x)|&\le&6R^3\dfrac{\sum_{k,h,r,t=0}^R\exp\big\{kx-\log k!\big\}\exp\big\{hx-\log h!\big\}\exp\big\{rx-\log r!\big\}\exp\big\{tx-\log t!\big\}}{\sum_{k,h,r,t=0}^R\exp\big\{kx-\log k!\big\}\exp\big\{hx-\log h!\big\}\exp\big\{rx-\log r!\big\}\exp\big\{tx-\log t!\big\}}\\
			&=&6R^3.
		\end{eqnarray*}
		Therefore, $\kappa_2\le O_p(R^3)$.
	}

	\section{The StARS algorithm}\label{suppC}
	The StARS algorithm  introduced in \cite{liu2010stability}, aims to seek the value of $\lambda$ leading to the most stable set of edges.  More precisely, it considers a range  $\Lambda=\{\lambda_1,\ldots,\lambda_k\}$ of values for $\lambda$, and fixes a number $n_B$, $1<n_B<n$ of observations in one sample. Then, $B$ samples of size $n_B$, $S_1,\ldots,S_B$, are generated from $\bold{x}_1,\ldots,\bold{x}_n$. For each $\lambda\in \Lambda$,  the graph is estimated by solving a lasso problem. Let $A_\lambda^{n_p}(S_1),\ldots,A_\lambda^{n_p}(S_B)$ be estimated adjacency matrices of the graph in the subsamples. The stability of one edge can be estimated by
	$$\epsilon_{s,t}^{n_B}(\lambda)=2\psi_{s,t}^{n_B}(\lambda)\big(1-\psi_{s,t}^{n_B}(\lambda)\big),$$
	where $\psi_{s,t}^{n_B}(\lambda)=\frac{1}{B}\sum_{i=1}^{B}A_\lambda^{n_B}(S_i)_{st}$ is the estimated probability of one edge between nodes $s$ and $t$. The optimal value $\lambda_{opt}$ is defined as the largest value that maximizes the total stability
	$$\bar{D}_{n_B}(\lambda)=sup_{0\le \rho\le \lambda}\sum_{s<t}\epsilon_{s,t}^{n_B}(\sigma)/\binom{p}{2},$$
	smaller than an upper bound $\beta$,  
	$\lambda_{opt}=\text{sup}\{\lambda: \bar{D}_B(\lambda)\le \beta\}.$
	
	\section{Simulation study results}\label{suppD}

	{ {\bf About the choice of the truncated Poisson distribution.} Table \ref{table10-TPandP} reports  TP, FP, FN, PPV, and Se for PC-LPGM obtained by simulating 500 datasets of size $n=1000$ from unrestricted Poisson conditional models. Data were generated as in Section~\ref{empirical} of the main paper, at both high ($\lambda_{noise}=0.5$) and low ($\lambda_{noise}=5$) SNR level.   Results refer to random graphs of $p = 10$ variables with varying probability of edge inclusion $\pi$. Here,  PC-LPGM is run with the proper test statistic, i.e., $Z_{st|\bold{K}}^P$ and with the misspecified one, i.e., $Z_{st|\bold{K}}^{TP}$. When $Z_{st|\bold{K}}^{TP}$ is used, the truncation point $R$ is fixed to be equal to the largest observation. 
	}
	
		\begin{table}[htbp!]
			\centering
			{\scriptsize
				\caption{\label{table10-TPandP}\scriptsize{ Monte Carlo  means of TP, FP, FN, PPV, and Se obtained by simulating 500 samples of size $n=1000$ from random graphs on $p = 10$ variables with Poisson node conditional distribution and level of noise $\lambda_{noise} = 0.5,\,5$. The probability  of edge inclusion $\pi$ runs from 0.1 to 0.4.}}
				\begin{tabular}{c|r|rrrrr|rrrrr}
					\hline
					&&\multicolumn{5}{c|}{$Z_{st|\bold{K}}^{TP}$}&\multicolumn{5}{c}{$Z_{st|\bold{K}}^{P}$}\\
					$\lambda_{noise}$&$\pi$& TP & FP & FN & PPV & Se  & TP & FP & FN & PPV & Se  \\ 
					\hline
					&0.1 & 8.390& 0.240& 0.610& 0.975 &0.932& 8.380 &0.233 &0.620& 0.975 &0.931  \\ 
					0.5&0.2 &9.970& 0.287& 0.030 &0.975& 0.997& 9.970& 0.287 &0.030 &0.975& 0.997  \\ 
					&0.3 & 12.837 & 0.153 & 0.163&  0.989&  0.987& 12.837 & 0.153 & 0.163 & 0.989&  0.987 \\ 
					&0.4 & 17.387&  0.117  &4.613 & 0.994&  0.790 &17.460 & 0.113 & 4.540 & 0.994 & 0.794  \\ 
					&&&&&&&&&&&\\
					&0.1 & 6.747 &0.367& 2.253& 0.955 &0.750 &6.747& 0.370& 2.253& 0.955& 0.750  \\ 
					5&0.2 & 8.470& 0.293 &1.530 &0.970 &0.847 &8.450 &0.293 &1.550 &0.970& 0.845  \\ 
					&0.3 & 11.283 & 0.080 & 1.717 & 0.993 & 0.868 & 11.757 & 0.097 & 1.243 & 0.992 & 0.904 \\ 
					&0.4 & 13.143 & 0.025 & 8.857 & 0.998 & 0.597 & 16.029 & 0.029 & 5.971 & 0.998 & 0.729 \\ 
					\hline
				\end{tabular}
			}
		\end{table}

\noindent	
	{ {\bf Unrestricted Poisson conditional models.}}  Table \ref{table1-chap1} to Table \ref{table4-chap1} report  TP, FP, FN, PPV and Se for each of  methods considered in Section \ref{empirical} of the main paper. Two different graph dimensions,  $p=10, 100$, and three graph structures (see Figure \ref{graphtypes} and Figure \ref{graphtypes100} of the main paper) are considered at one low ($\lambda_{noise}=5$) and one high ($\lambda_{noise}=0.5$) SNR levels. 
	
		\begin{landscape}
	\begin{table}[ht]
\scriptsize
\centering
\caption{\label{table10-poisnew}{ Monte Carlo  means of TP, PPV and Se obtained by simulating 500 samples from graphs in Figure \ref{graphtypes} with $p = 10$ variables with Poisson node conditional distribution with mean $\lambda= 1$ and levels of noise $ \lambda_{noise} = 0.5,5$. }}
\vspace{0.2cm}

\begin{tabular}{r|r|r|rrr|rrr|HHHrrr|HHHrrrHHH}
  \hline
&&&\multicolumn{3}{c|} {\bf PC-LPGM}&\multicolumn{3}{c|} {\bf LPGM}&\multicolumn{3}{H} {\bf PDN}&\multicolumn{3}{c|}{\bf VSL }&\multicolumn{3}{H}{\bf GLASSO }&\multicolumn{3}{c}{\bf NPN-Copula}&\multicolumn{3}{H}{\bf NPN-skeptic}\\
 $\lambda_{noise}$&type&$n$&TP& PPV & Se &TP & PPV & Se & TP & PPV & Se & TP & PPV & Se & TP & PPV & Se & TP&PPV&Se \\ 
    \hline
    &&50&2.440& 0.932 &0.271& 1.949& 0.962 &0.217&&& &2.437 &0.865& 0.271& 2.420 &0.862 &0.269 &2.577 &0.874 &0.286 &2.473  & NaN &0.275\\
    &&100&5.000 &0.956 &0.556& 2.354 &0.988& 0.262&&& &2.927& 0.976 &0.325 &2.927 &0.977 &0.325 &2.880& 0.983 &0.320 &2.840 &  NaN& 0.316\\
    0.5	&scalefree&200& 7.953& 0.975 &0.884 &4.493 &  0.986& 0.499 &5.872&    0.972 &  0.652 & 4.625 &  0.996 &  0.514 & 4.502 &  0.997 &  0.500 & 5.073 &  0.996 &  0.564 & 5.030 &  0.994 &  0.559 \\
	&&1000 & 9.000& 0.982 &1.000& 7.873 &0.890 &0.875 &5.780 &  1.000 &  0.642 & 4.954 & 1.000 &  0.550& 4.889 &  1.000 &  0.543 & 5.377 &  1.000 &  0.597 & 5.232 &  1.000 &  0.581  \\
	&&2000& 9.000 &0.982 &1.000& 8.417& 0.759 &0.935 &5.658 &  1.000 &  0.629& 5.566 &  1.000 &  0.618& 5.573 &  1.000 & 0.619 & 6.055 &  1.000 &  0.673 & 5.945 &  1.000  & 0.661 \\
&&&&&&&&&&&&&&&&&&&&&&&\\
&&50&2.103 &0.902& 0.263& 1.933& 0.926 &0.242&&&&2.237 &0.809& 0.280 &2.227 &0.807 &0.278 &2.287 &0.839& 0.286 &2.177 &  NaN& 0.272\\
&&100&4.237 &0.947& 0.530 &2.828 &0.957 &0.353 &&&&2.713 &0.950& 0.339 &2.683 &0.947 &0.335 &2.817 &0.960 &0.352 &2.793  & NaN &0.349\\
&Hub&200&7.160 &0.971 &0.895 &4.497 &  0.914& 0.562&6.680 &   0.926 &  0.835 & 4.316 & 0.995 & 0.540& 4.212 &  0.995& 0.527 & 4.636 & 0.996 &  0.580 &  4.506 &  0.995 &  0.563  \\
&&1000& 8.000 &0.976 &1.000 &7.863& 0.717& 0.983& 7.128 &  1.000 &  0.891& 5.908 &  1.000 &  0.739 & 5.842 &  1.000 & 0.730 & 6.000 & 1.000 &  0.750 &  5.818 & 1.000  &  0.727  \\
&&2000& 8.000& 0.980& 1.000 &7.990 &0.729& 0.999 &7.216 &  1.000 &  0.902 & 7.110 &  1.000 &  0.889 & 7.068 &  1.000 &  0.884 & 7.006 & 1.000 & 0.876  &  6.794  &   1.000 &0.849 \\
&&&&&&&&&&&&&&&&&&&&&&&\\
&&50&1.824 &0.875 &0.203 &1.805 &0.945& 0.201 &&&&2.343 &0.808 &0.260& 2.347& 0.808 &0.261 &2.310& 0.822 &0.257 &2.270 &  NaN &0.252\\
&&100&3.826 &0.939 &0.425 &2.483 &0.974& 0.276 &&&&2.680 &0.952 &0.298 &2.673 &0.952 &0.297 &2.863 &0.958 &0.318& 2.720  & NaN &0.302\\
&Random&200	& 6.930 &0.976& 0.770 &3.920&   0.981 &0.436& 4.800 &  0.675 & 0.600 & 3.510 &  0.993 &  0.439 & 3.464 &  0.995 & 0.433 & 3.934 &  0.995 &  0.492 &  3.826 &    0.995  &  0.478 \\
&&1000& 8.963& 0.980& 0.996 &7.320& 0.875& 0.813 & 5.066 &  0.703 &  0.633 & 3.190 &  1.000 & 0.399 & 3.110 &  1.000& 0.389 & 3.434 &  1.000 &  0.429 & 3.358 &    1.000 &  0.420\\
&&2000& 9.000& 0.981 &1.000& 8.320 &0.767& 0.924& 5.068 & 0.713 &  0.634 & 2.952 & 1.000 &  0.369 & 2.828 &  1.000 & 0.353 & 3.356 &  1.000 &  0.420 & 3.384 & 1.000 &  0.423 \\
\hline
&&&&&&&&&&&&&&&&&&&&&&&\\
&&50&0.711 &0.530 &0.079 &0.723 &0.615 &0.080&&&&1.053 &0.418 &0.117 &1.060 &0.423 &0.118 &1.177 &0.404 &0.131 &1.173  & NaN &0.130\\
&&100&1.110 &0.733 &0.123 &1.057 &0.849 &0.117 &&&&1.467& 0.594& 0.163 &1.463 &0.597& 0.163 &1.527 &0.608& 0.170 &1.500 &  NaN &0.167\\
5&scalefree&200&  1.875& 0.848& 0.208 &1.315 &0.979 &0.146& 3.824 &  0.486 &  0.425 & 1.934 &  0.797 & 0.215 & 1.914 & 0.796 & 0.213 & 2.012 &  0.840 &  0.224 & 1.832 &   0.821&  0.204 \\
&& 1000& 8.000& 0.964& 0.889 &4.609 &0.996 &0.512 & 6.148 & 0.948 &  0.683 & 3.212 &    0.999 &  0.357  & 3.194 &  0.997 &  0.355  & 3.302 &  0.999 &  0.367 & 3.058 &    0.999 &  0.340  \\
&&2000 & 8.977& 0.974 &0.997& 8.133 &1.000& 0.904& 6.258 &  0.997 &  0.695 & 4.238 &  1.000 &  0.471 & 4.222 & 1.000 &  0.469  & 4.408 & 1.000 &  0.490 & 4.198 &  1.000 &  0.466\\
				&&&&&&&&&&&&&&&&&&&&&&&\\
				&&50&0.733 &0.555 &0.092 &0.816 &0.717 &0.102 &&&&1.090 &0.368 &0.136 &1.093& 0.369 &0.137& 1.013 &0.357 &0.127& 0.943  & NaN& 0.118\\
				&&100&1.074& 0.710 &0.134 &1.157 &0.828 &0.145 &&&&1.303 &0.533 &0.163 &1.307& 0.537 &0.163 &1.343 &0.554 &0.168 &1.200 &  NaN &0.150\\
				&Hub&200 & 1.770 &0.860 &0.221& 1.258 &0.979 &0.157& 3.366 & 0.416 & 0.421 & 1.784 &  0.744 & 0.223 & 1.766 &  0.744 &  0.221 & 1.880 &  0.765 & 0.235 &  1.694 & 0.738  & 0.212 \\
&&1000& 7.143 &0.959 &0.893 &5.863 &  0.996& 0.733 &  6.594 &  0.897 &  0.824 & 3.152 & 1.000 & 0.394 & 3.142 & 1.000 &  0.393 & 3.168 &  1.000 &  0.396 &  2.990 & 0.998 &  0.374\\
&&2000 & 7.980& 0.970& 0.998 &7.030 &0.990& 0.879 & 7.158 &  0.994 &  0.895 & 3.900 &  1.000& 0.488 & 3.874 &  1.000 &  0.484 & 4.026 &  1.000 & 0.503 & 3.730 & 1.000  & 0.466  \\
		&&&&&&&&&&&&&&&&&&&&&&&\\		
		&&50&0.633 &0.517& 0.070& 0.842 &0.682 &0.094 &&&&1.043 &0.410 &0.116 &1.047 &0.412& 0.116 &1.073 &0.392 &0.119 &0.977 &  NaN& 0.109\\
		&&100&1.137 &0.708 &0.126& 1.177 &0.849& 0.131 &&&&1.547& 0.606 &0.172 &1.553 &0.608& 0.173 &1.553 &0.597& 0.173 &1.393 &  NaN& 0.155	\\
				&Random&200 & 1.776& 0.840 &0.197 &1.291& 0.970 &0.143 & 3.204 & 0.402 & 0.400 & 1.800 &  0.757 & 0.225 & 1.805 &  0.758 &  0.226 & 1.980 &  0.801 & 0.248 & 1.795  & 0.752 & 0.224 \\
&&1000& 7.517& 0.969& 0.835 &5.123 &  0.996 &0.569 & 4.816 & 0.653 & 0.602& 3.042 &  0.997 & 0.380 & 3.018  & 0.997 &  0.377 & 3.164 &  0.998 &  0.396 &  2.972 & 0.998 &  0.372  \\
&&2000 & 8.903 &0.970 &0.989 &7.890& 0.998 &0.877& 5.044 &  0.685 &  0.630 & 3.665  &  1.000& 0.458 & 3.640 &  1.000 & 0.455 & 3.785 & 1.000 &  0.473 & 3.610 & 1.000 &  0.451 \\
	\hline
    \end{tabular}
    \end{table}

    	\begin{table}[ht]
\scriptsize
\centering
\caption{\label{table100-poisnew}{ Monte Carlo  means of TP, PPV and Se obtained by simulating 500 samples from graphs in Figure \ref{graphtypes} with $p = 100$ variables with Poisson node conditional distribution with mean $\lambda= 1$ and levels of noise $ \lambda_{noise} = 0.5,5$. }}
\vspace{0.2cm}

\begin{tabular}{r|r|r|rrr|rrr|HHHrrr|HHHrrrHHH}
  \hline
&&&\multicolumn{3}{c|} {\bf PC-LPGM}&\multicolumn{3}{c|} {\bf LPGM}&\multicolumn{3}{H} {\bf PDN}&\multicolumn{3}{c|}{\bf VSL }&\multicolumn{3}{H}{\bf GLASSO }&\multicolumn{3}{c}{\bf NPN-Copula}&\multicolumn{3}{H}{\bf NPN-skeptic}\\
 $\lambda_{noise}$&type&$n$&TP& PPV & Se &TP & PPV & Se & TP & PPV & Se & TP & PPV & Se & TP & PPV & Se & TP&PPV&Se \\ 
    \hline 
    &&100& 18.880&  0.918 & 0.191 & 7.920  &0.831 & 0.080 &&&& 9.580 & 0.938 & 0.097 & 9.580&
  0.938  &0.097 &10.600  &0.952  &0.107 &10.640 & 0.935  &0.107\\
    0.5&Scalefree&200 & 54.236 & 0.958 & 0.548& 40.446 & 0.863 & 0.409  &53.080 & 0.673 &   0.536 & 63.915 &  0.760&   0.646 & 62.755 &   0.754 &  0.634 & 65.647 & 0.797 & 0.663 & 64.343 (& 0.759 & 0.650  \\
				
				&&1000 & 98.082&  0.977&  0.991& 88.196&  0.882&  0.891&65.357 &   0.999 &   0.660 & 94.438 &  0.999 & 0.954 & 93.830 &  0.998 & 0.948 &94.571& 1.000 &  0.955 & 94.277 & 0.999 & 0.952 \\
				
				&&2000& 98.994 & 0.977 & 1.000& 89.862 & 0.981 & 0.908  & 64.370 &   1.000 &   0.650& 96.821 &  1.000 & 0.978 & 96.518 & 1.000&  0.975 & 97.375 & 1.000 &  0.984 & 97.214 & 1.000 &  0.982\\
				
&&&&&&&&&&&&&&&&&&&&&&&\\			
&&100& 1.918 & 0.483 & 0.020&20.020 & 0.562 & 0.211 &&&& 3.850 & 0.306 & 0.041 & 3.880&
 0.307 & 0.041 & 4.200 & 0.337 & 0.044 & 3.860  &  NaN & 0.041	\\	
				&Hub&200& 7.340 & 0.729  &0.077 & 46.835 & 0.627 & 0.493 & 19.340 &   0.186&   0.204& 16.643 &   0.427 &   0.175 & 15.991 &   0.434 &  0.168 & 18.491 & 0.451 &  0.195 & 17.473 & 0.406 & 0.184  \\
				
				&&1000&81.360  &0.952 & 0.856& 83.350 & 0.803  &0.877 & 78.487 &    0.800 &   0.826& 29.651 & 0.998 &   0.312 & 29.341 &  0.998 &  0.309 & 37.746 & 0.999 &  0.397 & 35.476 & 0.998 & 0.373  \\

				&&2000& 94.788 & 0.959 & 0.998& 94.975 & 0.548 & 1.000  & 93.073 &    0.988 &   0.980 & 69.263 &   1.000 &   0.729 & 68.647 &   1.000 &  0.723 & 77.833 &  1.000 & 0.819 & 74.987 & 1.000 & 0.789  \\
&&&&&&&&&&&&&&&&&&&&&&&\\	
&&100&17.150 & 0.901&  0.157& 13.130  &0.815 & 0.120  &&&&9.890 & 0.918 & 0.091 & 9.880
& 0.917 & 0.091 &10.030 & 0.941 & 0.092 & 9.980 & 0.930 & 0.092		\\		
				&Random&200& 52.640 & 0.957 & 0.483 &48.970 & 0.774 & 0.449 & 52.007 &   0.619 &   0.477& 67.032 &  0.735 &  0.615 & 64.736 & 0.742 & 0.594 & 70.520 &  0.769 &  0.647 & 68.956 & 0.722 & 0.633 \\
				
				&&1000& 107.237  & 0.983  & 0.984  &96.360  & 0.788  & 0.884 & 63.020 &  0.870&   0.578& 102.676 &  0.999 & 0.942 & 101.904 & 0.999 & 0.935 & 104.820 & 0.999 &  0.962 & 104.392 &  0.998 &  0.958  \\

				&&2000&  107.237  & 0.983  & 0.984 & 96.360  & 0.788  & 0.884& 62.850 &   0.872 &   0.577 & 106.836 &  1.000&  0.980& 106.884 & 1.000 &  0.981 & 107.376 & 1.000 &  0.985 & 107.124 &  1.000 & 0.983 \\
\hline
&&&&&&&&&&&&&&&&&&&&&&&\\	
&&100& 1.063 &0.323 &0.011 & 2.860& 0.299 &0.029 &&&&2.630& 0.198 &0.027 &2.630 &0.198 &0.027&
2.730& 0.214& 0.028& 2.450  & NaN& 0.025\\
				5&Scalefree&200& 3.688& 0.560 &0.037 &1.106 &0.821 &0.011& 13.457 &   0.125 &   0.136 & 9.316 &   0.332 &  0.094 & 9.052 &   0.336&  0.091 & 10.012 & 0.359 &  0.101 & 9.868 &  0.320 & 0.100\\
				
				&&1000& 60.578  &0.939 & 0.612& 52.072 & 0.973 & 0.526& 52.827 &  0.630 &   0.534& 14.844 &  0.998 &   0.150 & 14.936 & 0.998&  0.151 & 17.124 &  0.998 &  0.173 & 16.708 &  0.996 &  0.169  \\
				
				&&2000& 92.632 & 0.959 & 0.936 &55.340 & 0.998  &0.559  &  67.917 &  0.939 &   0.686 & 24.579 & 1.000 & 0.255 & 25.733 & 1.000 & 0.264 & 33.672 &  1.000 & 0.335 & 32.267 & 1.000 & 0.321 \\
&&&&&&&&&&&&&&&&&&&&&&&\\					
				&&100&0.326& 0.139& 0.003& 3.210 &0.363& 0.034 &&&&1.080& 0.066 &0.011 &1.080 &0.066& 0.011& 1.130 &0.073& 0.012 &1.080 &  NaN& 0.011\\
				&Hub&200& 0.941&  0.246  &0.010  &15.092  &0.451 & 0.159 &6.630 &    0.060 &   0.070& 3.392 & 0.143 &  0.036 & 3.304 &  0.145 &   0.035 & 3.392 & 0.150 &  0.036 & 3.108 &  0.134 & 0.033 \\
				
				&&1000& 16.580 & 0.807 & 0.175 &39.208 & 0.876 & 0.413 & 23.427 & 0.217 &   0.247 & 7.424 &  0.884 &  0.078 & 7.364 &  0.883 &  0.078 & 8.440 &  0.895 &  0.089 & 8.208 & 0.860 & 0.086  \\

				&&2000& 49.210 & 0.920 & 0.518 &67.042 & 0.759 & 0.706 & 49.100 &   0.472 &   0.517 & 8.983 &  0.996& 0.095 & 8.924 &  0.996 & 0.094 & 9.797& 0.998 &  0.103 & 9.305& 0.995 &  0.098  \\
	&&&&&&&&&&&&&&&&&&&&&&&\\		
	&&100&1.065 &0.278 &0.010 &3.940 &0.309 &0.036 &&&&2.840 &0.208 &0.026& 2.860 &0.208 &0.026& 3.310 &0.195 &0.030 &3.030 &0.184& 0.028	\\	
				&Random&200&  3.739 &0.564& 0.034 &1.290 &0.775 &0.012& 13.573 & 0.126 &   0.125 & 10.548 &  0.353 & 0.097 & 10.160 &  0.358 & 0.093 & 11.064 &  0.382& 0.102 &10.648 & 0.341 &0.098 \\
				
				&&1000&64.270 & 0.941 & 0.590 &61.990&  0.961 & 0.569 & 53.207 &   0.616 &   0.488 &14.741 & 0.999 &  0.135& 14.741  & 0.999 & 0.135 & 16.333 & 0.999& 0.150 &15.178 & 0.998 &  0.139 \\
&&2000&101.457 &  0.962 &  0.931 & 67.477 &  1.000  & 0.619 & 65.093 &   0.841&   0.597 & 26.038 & 1.000 &  0.239 & 26.327 &  1.000 & 0.242 & 30.340 &  1.000 &  0.278 & 28.474 &  1.000 & 0.261 \\
    	\hline
    \end{tabular}
    \end{table}
\end{landscape}
	\setlength\belowcaptionskip{-3ex}
	\begin{center}
		\begin{scriptsize}
			
			\begin{longtable}{l| l | l r r r r r r}
				\caption{\label{table1-chap1} Simulation results from 500 replicates of the undirected graphs shown in Figure \ref{graphtypes} of the main paper for $p= 10$ variables with Poisson node conditional distribution and level of noise $\lambda_{noise} = 0.5$. Monte Carlo means (standard deviations) are shown for TP, FP, FN, PPV and Se.}  \\
				
				\toprule
				Graph&$n$	& Algorithm & TP & FP & FN & PPV & Se  \\
				\midrule
				\endfirsthead
				\multicolumn{8}{c}%
				{{\bfseries \tablename\ \thetable{} -- continued from previous page}} \\
				\toprule
				Graph&$n$	& Algorithm & TP & FP & FN & PPV & Se \\
				\midrule	
				
				\endhead
				&200 	&PC-LPGM& 6.838 (1.152)& 0.048 (0.230)& 2.163 (1.152)& 0.994 (0.208)& 0.760 (0.169) \\
				&	&LPGM & 4.732 (1.407)& 0.384 (0.644)& 4.268 (1.407)& 0.941 (0.097)& 0.526 (0.156)\\
				&	&PDN &5.872 (0.741)&  0.182 (0.430)& 3.128 (0.741)&  0.972 (0.065)&  0.652 (0.082)\\
				& 	& VSL& 4.625 (2.056)& 0.034 (0.181)& 4.375 (2.056)&  0.996 (0.021)&  0.514 (0.228)\\
				&	& GLASSO& 4.502 (1.961)& 0.023 (0.151)& 4.498 (1.961)&  0.997 (0.018)&  0.500 (0.218)\\
				&	&NPN-Copula& 5.073 (2.169)& 0.034 (0.191)& 3.927 (2.169)&  0.996 (0.023)&  0.564 (0.241)\\
				&	&NPN-Skeptic& 5.030 (2.177)& 0.039 (0.230)& 3.970 (2.177)&  0.994 (0.023)&  0.559 (0.242)\\
				&	& & & & & & \\
				
				&1000 	&PC-LPGM& 9.000 (0.000)& 0.071 (0.258)& 0.000 (0.000)&  0.993 (0.026)& 1.000 (0.000) \\
				&	&LPGM & 5.780 (1.253)& 0.692 (2.730)& 3.220 (1.253)&  0.964 (0.135)&  0.642 (0.139)\\
				&	&PDN &5.780 (0.661)& 0.000 (0.000)& 3.220 (0.661)&  1.000 (0.000)&  0.642 (0.073)\\
				Scale-free	&	& VSL& 4.954 (2.246)& 0.000 (0.000)& 4.046 (2.246)& 1.000 (0.000)&  0.550 (0.250)\\
				&	& GLASSO& 4.889 (2.234)& 0.000 (0.000)& 4.111 (2.234)&  1.000 (0.000)&  0.543 (0.248)\\
				&	&NPN-Copula& 5.377 (2.451)& 0.000 (0.000)& 3.623 (2.451)&  1.000 (0.000)&  0.597 (0.272)\\
				&	&NPN-Skeptic& 5.232 (2.609)& 0.000 (0.000)& 3.768 (2.069)&  1.000 (0.000)&  0.581 (0.290)\\
				&	& & & & & & \\
				
				&2000	&PC-LPGMC& 9.000 (0.000)& 0.071 (0.278)& 0.000 (0.000)&  0.993 (0.027)& 1.000 (0.000) \\
				&	&LPGM & 7.660 (1.611)& 5.180 (4.482)& 1.340 (1.611)&  0.703 (0.238)&  0.851 (0.179)	\\
				&	&PDN &5.658 (0.581)& 0.000 (0.000)& 3.342 (0.581)&  1.000 (0.000)&  0.629 (0.065)\\
				&	& VSL& 5.566 (2.381)& 0.000 (0.000)& 3.434 (2.381)&  1.000 (0.000)&  0.618 (0.265)\\
				&	& GLASSO& 5.573 (2.381)& 0.000 (0.000)& 3.427 (2.381)&  1.000 (0.000)& 0.619 (0.265)\\
				&	&NPN-Copula& 6.055 (2.509)& 0.000 (0.000)& 2.945 (2.509)&  1.000 (0.000)&  0.673 (0.279)\\
				&	&NPN-Skeptic& 5.945 (2.710)& 0.000 (0.000)& 3.055 (2.710)&  1.000 (0.000) & 0.661 (0.301)\\
				&	& & & & & & \\
				\hline		
				&	& & & & & & \\

				&200	&PC-LPGM& 6.618 (1.042)& 0.104 (0.132)& 1.382 (1.042)&  0.986 (0.042)& 0.827 (0.130) \\
				&	&LPGM &  3.072 (1.124)& 0.136 (0.505)& 4.928 (1.124)& 0.975 (0.077)& 0.384 (0.144)\\
				&	&PDN &6.680 (0.700)& 0.560 (0.769)& 1.320 (0.700)&   0.926 (0.099)&  0.835 (0.088)\\
				&	& VSL& 4.316 (1.933)& 0.030 (0.171)& 3.684 (1.933)& 0.995 (0.033)& 0.540 (0.242) \\
				&	& GLASSO& 4.212 (1.903)& 0.028 (0.177)& 3.788 (1.903)&  0.995 (0.033)& 0.527 (0.238)\\
				&	&NPN-Copula& 4.636 (1.936)& 0.024 (0.166)& 3.364 (1.936)& 0.996 (0.026)&  0.580 (0.242)\\
				&	&NPN-Skeptic&  4.506 (2.009)& 0.032 (0.187)& 3.494 (2.009)&  0.995 (0.028)&  0.563 (0.251)\\
				&	& & & & & & \\
				
				&1000	&PC-LPGM& 8.000 (0.000)& 0.122 (0.345)& 0.000 (0.000)& 0.987 (0.038)&  1.000 (0.000) \\
				&	&LPGM & 4.392 (2.669)& 1.452 (2.201)& 3.608 (2.669)& 0.885 (0.169)& 0.549 (0.334)\\
				&	&PDN & 7.128 (0.395)& 0.000 (0.000)& 0.872 (0.395)&  1.000 (0.000)&  0.891 (0.049)\\
				Hub		&	& VSL& 5.908 (1.920)& 0.000 (0.000)& 2.092 (1.920)&  1.000 (0.000)&  0.739 (0.240)\\
				&	& GLASSO& 5.842 (1.907)& 0.000 (0.000)& 2.158 (1.907)&  1.000 (0.000)& 0.730 (0.238) \\
				&	&NPN-Copula& 6.000 (2.094)& 0.000 (0.000)& 2.000 (2.094)& 1.000 (0.000)&  0.750 (0.262)\\
				&	&NPN-Skeptic&  5.818 (2.337)& 0.000 (0.000)& 2.182 (2.337)& 1.000 (0.000) &  0.727 (0.292)\\
				&	& & & & & & \\

				&2000	&PC-LPGM& 8.000 (0.000)& 0.132 (0.373)& 0.000 (0.000)&  0.986 (0.040)& 1.000 (0.000)\\
				&	&LPGM & 6.252 (2.688)& 2.480 (1.904)& 1.748 (2.688)& 0.790 (0.151)&  0.782 (0.336)\\
				&	&PDN &7.216 (0.488)& 0.000 (0.000)& 0.784 (0.488)&  1.000 (0.000)&  0.902 (0.061)\\
				&	& VSL& 7.110 (1.680)& 0.000 (0.000)& 0.890 (1.680)&  1.000 (0.000)&  0.889 (0.210)\\
				&	& GLASSO& 7.068 (1.681)& 0.000 (0.000)& 0.932 (1.681)&  1.000 (0.000)&  0.884 (0.210)\\
				&	&NPN-Copula& 7.006 (2.030)& 0.000 (0.000)& 0.994 (2.030)& 1.000 (0.000)& 0.876  (0.254)\\
				&	&NPN-Skeptic&  6.794 (2.272)& 0.000 (0.000)& 1.206 (2.272) &   1.000 (0.000) &  0.849 (0.284)\\
				
				&	& & & & & & \\
				\hline		
				&	& & & & & & \\
				
				&200	&PC-LPGM& 5.492 (1.581)& 0.052 (0.231)& 2.508 (1.581)& 0.991 (0.039) &  0.687 (0.198) \\
				&	&LPGM & 3.500 (1.120)& 0.244 (0.531)& 4.500 (1.120)&  0.950 (0.107)& 0.438 (0.140)\\
				&	&PDN & 4.800 (0.752)& 2.362 (0.817)& 3.200 (0.752)&  0.675 (0.085)& 0.600 (0.094)\\
				&	& VSL& 3.510 (1.655)& 0.034 (0.202)& 4.490 (1.655)&  0.993 (0.040)&  0.439 (0.207)\\
				&	& GLASSO& 3.464 (1.601)& 0.026 (0.171)& 4.536 (1.601)&  0.995 (0.036)& 0.433 (0.200)\\
				&	&NPN-Copula& 3.934 (1.823)& 0.028 (0.165)& 4.066 (1.823)&  0.995 (0.030)&  0.492 (0.228)\\
				&	&NPN-Skeptic&  3.826 (1.859)& 0.030 (0.182)& 4.174 (1.859)&    0.995 (0.031) &  0.478 (0.232)\\
				&	& & & & & & \\
				
				&1000	&PC-LPGM& 8.000  (0.000)& 0.078 (0.283)& 0.000 (0.000) & 0.991 (0.031)& 1.000 (0.000)\\
				&	&LPGM & 5.748 (1.989)& 3.584 (3.752)& 2.252 (1.989)&  0.758 (0.244)&  0.718 (0.249)\\
				&	&PDN & 5.066 (0.753)& 2.164 (0.634)& 2.934 (0.753)&  0.703 (0.068)&  0.633 (0.094)\\
				Random		&	& VSL& 3.190 (1.963)& 0.000 (0.000)& 4.810 (1.963)&  1.000 (0.000)& 0.399 (0.245)\\
				&	& GLASSO& 3.110 (1.897)& 0.000 (0.000)& 4.890 (1.897)&  1.000 (0.000)& 0.389 (0.237)\\
				&	&NPN-Copula& 3.434 (2.257)& 0.000 (0.000)& 4.566 (2.257)&  1.000 (0.000)&  0.429 (0.282)\\
				&	&NPN-Skeptic& 3.358 (2.351)& 0.000 (0.000)& 4.642 (2.351)&    1.000 (0.000)&  0.420 (0.294)\\
				&	& & & & & & \\

				&2000	&PC-LPGM& 8.000 (0.000)& 0.048 (0.214)& 0.000 (0.000)& 0.995 (0.024)&  1.000 (0.000)\\
				&	&LPGM & 7.484 (1.073)& 6.256 (2.369)& 0.516 (1.073)& 0.576 (0.140)&  0.936 (0.134)\\
				&	&PDN & 5.068 (0.730)& 2.082 (0.716)& 2.932 (0.730)& 0.713 (0.080)&  0.634 (0.091)\\
				&	& VSL& 2.952 (2.011)& 0.000 (0.000)& 5.048 (2.011)& 1.000 (0.000)&  0.369 (0.251) \\
				&	& GLASSO& 2.828 (1.886)& 0.000 (0.000)& 5.172 (1.886)&  1.000 (0.000)& 0.353 (0.236)\\
				&	&NPN-Copula& 3.356 (2,261)& 0.000 (0.000)& 4.644 (2.261)&  1.000 (0.000)&  0.420 (0.283)\\
				&	&NPN-Skeptic& 3.384 (2.321)& 0.000 (0.000)& 4.616 (2.321)& 1.000 (0.000)&  0.423 (0.290)\\
				
				\bottomrule
			\end{longtable}
		\end{scriptsize}
		
	\end{center}
	
	\setlength\belowcaptionskip{-3ex}
	\begin{center}
		\begin{scriptsize}
			
			\begin{longtable}{l| l | l r r r r r r}
				\caption{Simulation results from 500 replicates of the undirected graphs shown in Figure \ref{graphtypes} of the main paper for $p= 10$ variables with Poisson node conditional distribution and level of noise $\lambda_{noise} = 5$. Monte Carlo means (standard deviations) are shown for TP, FP, FN, PPV and Se. }  \\
				\toprule
				Graph&$n$	& Algorithm & TP & FP & FN & PPV & Se  \\
				\midrule
				\endfirsthead
				\multicolumn{8}{c}%
				{{\bfseries \tablename\ \thetable{} -- continued from previous page}} \\
				\toprule
				Graph&$n$	& Algorithm & TP & FP & FN & PPV & Se  \\
				\midrule	
				
				\endhead
				&200&PC-LPGM& 2.136 (1.617)& 0.744 (0.927)& 6.864 (1.617)& 0.756 (0.267) & 0.237 (0.180) \\
				&	&LPGM & 1.628 (1.249)& 1.920 (1.885)& 7.372 (1.249)&  0.524 (0.336)&  0.181 (0.139)\\
				&	&PDN & 3.824 (1.221)& 4.200 (1.655)& 5.176 (1.221)&  0.486 (0.164)&  0.425 (0.136)\\
				&	& VSL& 1.934 (1.142)& 0.658 (0.927)& 7.066 (1.142)&  0.797 (0.277)& 0.215 (0.127) \\
				&   & GLASSO& 1.914 (1.119)& 0.660 (0.937)& 7.086 (1.119)& 0.796 (0.278)& 0.213 (0.124)\\
				&   &NPN-Copula& 2.012 (1.214)& 0.550 (0.924)& 6.988 (1.214)&  0.840 (0.260)&  0.224 (0.135)\\
				&   &NPN-Skeptic& 1.832 (1.302)& 0.568 (0.927)& 7.168  (1.302)&   0.821 (0.237)&  0.204 (0.145)\\
				&	& & & & & & \\
				
				& 1000&PC-LPGM& 8.590 (0.764)& 1.060 (0.926)& 0.410 (0.764)& 0.898 (0.084)&  0.954 (0.085) \\
				&	&LPGM & 4.352 (1.818)& 2.020 (1.699)& 4.648 (1.818)&  0.719 (0.198)& 0.484 (0.202)\\
				&	&PDN & 6.148 (0.865)& 0.366 (0.604)& 2.852 (0.865)& 0.948 (0.082)&  0.683 (0.096)\\
				Scale-free&  & VSL& 3.212 (1.742)& 0.008 (0.089)& 5.788 (1.742)&    0.999 (0.015)&  0.357  (0.194)\\
				&   & GLASSO& 3.194 (1.734)& 0.008 (0.089)& 5.806 (1.734)&  0.997 (0.015)&  0.355  (0.193)\\
				&   &NPN-Copula& 3.302 (1.722)& 0.004 (0.063)& 5.698 (1.722)&  0.999 (0.017)&  0.367 (0.191)\\
				&   &NPN-Skeptic& 3.058 (1.867)& 0.004 (0.063)& 5.942 (1.867)&    0.999 (0.017)&  0.340 (0.207)\\
				&	& & & & & & \\
				
				&2000 &PC-LPGM& 8.996 (0.063)& 1.118 (1.017)& 0.004 (0.063)&  0.898 (0.085)& 1.000 (0.007) \\
				&	&LPGM &   4.828 (1.812)& 2.320 (2.006)& 4.172 (1.812)& 0.720 (0.178)& 0.536 (0.201)\\
				&	&PDN & 6.258 (0.803)& 0.020 (0.140)& 2.742 (0.803)&  0.997 (0.020)&  0.695 (0.089)\\
				&	& VSL& 4.238 (1.984)& 0.000 (0.000)& 4.762 (1.984)&  1.000 (0.000)&  0.471 (0.220)\\
				&   & GLASSO& 4.222 (1.975)& 0.000 (0.000)& 4.778 (1.975)& 1.000 (0.000)&  0.469  (0.219) \\
				&   &NPN-Copula& 4.408 (1.931)& 0.000 (0.000)& 4.592 (1.931)& 1.000 (0.000)&  0.490 (0.215)\\
				&   &NPN-Skeptic& 4.198 (2.102)& 0.000 (0.000)& 4.802 (2.102)&  1.000 (0.000)&  0.466 (0.234)\\
				
				&	& & & & & & \\
				\hline
				&	& & & & & & \\
				
				&200 &PC-LPGM& 2.132 (1.535)& 0.650 (0.830)& 5.868 (1.535)& 0.768 (0.278) & 0.267 (0.192) \\
				&	&LPGM &  1.588 (1.363)& 2.188 (2.212)& 6.412 (1.363)& 0.224 (0.334)& 0.099 (0.170)\\
				&	&PDN & 3.366 (1.265)& 4.876 (1.726)& 4.634 (1.265)& 0.416 (0.164)& 0.421 (0.158)\\
				&	& VSL& 1.784 (1.002)& 0.896 (1.236)& 6.216 (1.002)&  0.744 (0.300)& 0.223 (0.125)\\
				&   & GLASSO& 1.766 (1.003)& 0.890 (1.225)& 6.234 (1.003)&  0.744 (0.301)&  0.221 (0.125)\\
				&   &NPN-Copula& 1.880 (1.073)& 0.806 (1.109)& 6.120 (1.073)&  0.765 (0.297)& 0.235 (0.134)\\
				&   &NPN-Skeptic&  1.694 (1.157)& 0.842 (1.176)& 6.306 (1.157)& 0.738 (0.294) & 0.212 (0.145)\\
				&	& & & & & & \\
				
				&1000&PC-LPGM& 7.608 (0.586)& 1.150 (0.985)& 0.392 (0.586)&  0.879 (0.095)&  0.951 (0.073) \\
				&	&LPGM & 4.268 (1.175)& 2.636 (1.733)& 3.732 (1.751)& 0.636 (0.188)& 0.534 (0.219)\\
				&	&PDN &  6.594 (0.864)& 0.782 (0.914)& 1.406 (0.864)&  0.897 (0.116)&  0.824 (0.108)\\
				Hub     &   & VSL& 3.152 (1.628)& 0.012 (0.109)& 4.848 (1.628)& 1.000 (0.019)& 0.394 (0.203)\\
				&   & GLASSO& 3.142 (1.620)& 0.012 (0.109)& 4.858 (1.620)& 1.000 (0.019)&  0.393 (0.202) \\
				&   &NPN-Copula& 3.168 (1.647)& 0.006 (0.077)& 4.832 (1.647)&  1.000 (0.016)&  0.396 (0.206)\\
				&   &NPN-Skeptic&  2.990 (1.737)& 0.010 (0.100)& 5.010 (1.737)& 0.998 (0.021)&  0.374 (0.217)\\
				&	& & & & & & \\
				
				&2000 &PC-LPGM& 7.998 (0.045)& 1.160 (0.998)& 0.002 (0.045)&  0.883 (0.092)& 1.000 (0.006)\\
				&	&LPGM & 4.612 (2.231)& 2.708 (1.901)& 3.388 (2.231)& 0.632 (0.234)& 0.576 (0.279)\\
				&	&PDN & 7.158 (0.421)& 0.046 (0.210)& 0.842 (0.421)&  0.994 (0.027)&  0.895 (0.053)\\
				&	& VSL& 3.900 (1.823)& 0.000 (0.000)& 4.100 (1.823)&  1.000 (0.000)& 0.488 (0.228)\\
				&   & GLASSO& 3.874 (1.815)& 0.000 (0.000)& 4.126 (1.815)&  1.000 (0.000)&  0.484 (0.227)\\
				&   &NPN-Copula& 4.026 (1.881)& 0.000 (0.000)& 3.974 (1.881)&  1.000 (0.000)& 0.503 (0.235)\\
				&   &NPN-Skeptic& 3.730 (2.044)& 0.000 (0.000)& 4.270 (2.044)& 1.000 (0.000) & 0.466 (0.255)\\

				&	& & & & & & \\
				\hline
				&	& & & & & & \\
				
				&200 &PC-LPGM& 1.685 (1.437)& 0.740 (0.973)& 6.315 (1.437)& 0.716 (0.305)&  0.211 (0.180) \\
				&	&LPGM & 1.552 (1.189)& 2.264 (2.553)& 6.448 (1.189)& 0.513 (0.349)&  0.194 (0.149)\\
				&	&PDN & 3.204 (1.038)& 4.904 (1.507)& 4.796 (1.038)& 0.402 (0.137)& 0.400 (0.130)\\
				&	& VSL& 1.800 (1.103)& 0.850 (1.295)& 6.200 (1.103)&  0.757 (0.310)& 0.225 (0.138)\\
				&   & GLASSO& 1.805 (1.115)& 0.845 (1.300)& 6.195 (1.113)&  0.758 (0.309)&  0.226 (0.139) \\
				&   &NPN-Copula& 1.980 (1.194)& 0.735 (1.184)& 6.020 (1.194)&  0.801 (0.281)& 0.248 (0.149)\\
				&   &NPN-Skeptic& 1.795 (1.213)& 0.830 (1.265)& 6.205 (1.213) & 0.752 (0.291)& 0.224 (0.152)\\
				&	& & & & & & \\
				
				&1000&LRTPC& 7.470 (0.779)& 0.980 (1.044)& 0.530 (0.779)&  0.895 (0.101)&  0.934 (0.097)\\
				&	&LPGM &  3.724 (1.660)& 1.872  (1.850)& 4.276 (1.660)& 0.704 (0.250)& 0.466 (0.207)\\
				&	&PDN & 4.816 (0.709)& 2.600 (0.823)& 3.184 (0.709)& 0.653 (0.081)& 0.602 (0.089)\\
				Random  &   & VSL& 3.042 (1.588)& 0.016 (0.126)& 4.958 (1.588)&  0.997 (0.027)& 0.380 (0.198)\\
				&   & GLASSO& 3.018  (1.563)& 0.016 (0.126)& 4.982 (1.563)& 0.997 (0.027)&  0.377 (0.195)\\
				&   &NPN-Copula& 3.164 (1.588)& 0.008 (0.089)& 4.836 (1.588)&  0.998 (0.017)&  0.396 (0.199)\\
				&   &NPN-Skeptic&  2.972 (1.699)& 0.010 (0.100)& 5.028 (1.699)& 0.998 (0.022)&  0.372 (0.212)\\
				&	& & & & & & \\
				
				&2000 &LRTPC& 8.000 (0.000)& 0.848 (0.944)& 0.000 (0.000)& 0.914 (0.089)& 1.000 (0.000)\\
				&	&LPGM & 3.572 (1.533)& 1.424 (1.304)& 4.428 (1.533)& 0.758 (0.191)& 0.446 (0.192)\\
				&	&PDN & 5.044 (0.732)& 2.348 (0.645)& 2.956 (0.732)&  0.685 (0.065)&  0.630 (0.091)\\
				&	& VSL& 3.665 (1.803)& 0.000 (0.000)& 4.335 (1.803) &  1.000 (0.000)& 0.458 (0.225) \\
				&   & GLASSO& 3.640 (1.791)& 0.000 (0.000)& 4.360 (1.791)&  1.000 (0.000)& 0.455 (0.224)\\
				&   &NPN-Copula& 3.785 (1.823)& 0.000 (0.000)& 4.215 (1.823)& 1.000 (0.000)&  0.473 (0.228)\\
				&   &NPN-Skeptic& 3.610 (2.044)& 0.000 (0.000)& 4.390 (2.044)& 1.000 (0.000)&  0.451 (0.256)\\

				\bottomrule
			\end{longtable}
		\end{scriptsize}
		
	\end{center}
	
	
	\begin{center}
		\begin{scriptsize}
			
			\begin{longtable}{l| l | l r r r r r r }
				\caption{Simulation results from 500 replicates of the undirected graphs shown in Figure \ref{graphtypes100} of the main paper for $p= 100$ variables with Poisson node conditional distribution and level of noise $\lambda_{noise} = 0.5$. Monte Carlo means (standard deviations) are shown for TP, FP, FN, PPV and Se. }  \\
				\toprule
				Graph&$n$	& Algorithm &{TP} & {FP} & {FN} & {PPV} & {Se}    \\
				\midrule
				\endfirsthead
				\multicolumn{8}{c}%
				{{\bfseries \tablename\ \thetable{} -- continued from previous page}} \\
				\toprule
				Graph&$n$	& Algorithm &{TP} & {FP} & {FN} & {PPV} & {Se}  \\
				\midrule	
				
				\endhead
				&200 &PC-LPGM& 61.585 (4.316)&  8.490  (2.887) &37.415 (4.216)&  0.880 (0.038)&  0.622 (0.044)\\
				&	&LPGM & 5.564 (8.084)&  0.824 (5.594)& 93.436 (8.084)&  0.985 (0.067)&   0.056 (0.082)\\
				&	&PDN &53.080 (3.283)& 26.007 (4.942)& 45.920 (3.283)& 0.673 (0.052)&   0.536 (0.033)\\
				& 	& VSL& 63.915 (6.489)& 22.308 (13.433)& 35.085 (6.489)&  0.760 (0.095)&   0.646 (0.066)\\
				&	& GLASSO& 62.755 (6.306)& 22.642 (13.114)& 36.245 (6.306)&   0.754 (0.097)&  0.634 (0.064)\\
				&	&NPN-Copula& 65.647 (5.734)& 18.345 (11.701)& 33.352 (5.734)& 0.797 (0.088)& 0.663 (0.058)\\
				&	&NPN-Skeptic& 64.343 (6.316)& 22.918 (15.323)& 34.657 (6.316)& 0.759 (0.102)& 0.650 (0.064)\\
				&	& & & & & & \\
				
				&1000 	&PC-LPGM& 98.580 (0.610)&  9.589 (2.982)&0.420 (0.610)& 0.912 (0.025)&   0.996 (0.006)\\
				&	&LPGM & 51.520 (11.263)&  0.012 (0.109)& 47.480 (11.263)&  1.000 (0.002) &  0.520 (0.114)\\
				&	&PDN &65.357 (1.871)&  0.050 (0.218)& 33.643 (1.871)&   0.999 (0.003)&   0.660 (0.019)\\
				Scale-free	&	& VSL& 94.438 (2.316)&  0.089 (0.286)&  4.562 (2.316)&  0.999 (0.003)& 0.954 (0.023)\\
				&	& GLASSO& 93.830 (2.507)&  0.161 (0.393)&  5.170 (2.507)&  0.998 (0.004)& 0.948 (0.025)\\
				&	&NPN-Copula&94.571 (2.159)&  0.054 (0.226)&  4.429 (2.159)& 1.000 (0.002)&  0.955 (0.022)\\
				&	&NPN-Skeptic& 94.277 (2.089)&  0.134 (0.342)&  4.723 (2.089)& 0.999 (0.004)& 0.952 (0.021)\\
				&	& & & & & & \\
				
				&2000	&PC-LPGMC& 99.000 (0.000)&  9.759 (3.134)&  0.000 (0.000)& 0.911 (0.026)& 1.000 (0.000)\\
				&	&LPGM & 54.185 (2.379)&  0.010 (0.100)& 44.815 (2.379)&   1.000 (0.002)&   0.547 (0.024)\\
				&	&PDN & 64.370 (1.560)&  0.000 (0.000)& 34.630 (1.560)&   1.000 (0.000)&   0.650 (0.016)\\
				&	& VSL& 96.821 (1.422)&  0.000 (0.000)&  2.179 (1.422)&  1.000 (0.000)& 0.978 (0.014)\\
				&	& GLASSO& 96.518 (1.577)&  0.000 (0.000)&  2.482 (1.577)& 1.000 (0.000)&  0.975 (0.016)\\
				&	&NPN-Copula& 97.375 (1.402)&  0.000 (0.000)&  1.625 (1.402)& 1.000 (0.000)&  0.984 (0.014)\\
				&	&NPN-Skeptic& 97.214 (1.423)&  0.009 (0.000)&  1.786 (1.423)& 1.000 (0.000)&  0.982 (0.014)\\
				&	& & & & & & \\
				\hline		
				&	& & & & & & \\

				&200&PC-LPGM& 13.393 (2.484)& 14.518 (4.082)& 81.607 (2.484)& 0.486 (0.084)&   0.141 (0.026)\\
				&	&LPGM & 4.344 (4.368)&  5.840 (9.239)& 90.656 (4.368)&  0.426 (0.330)&   0.046 (0.046)\\
				&	&PDN & 19.340 (3.834)& 84.747 (5.935)& 75.660 (3.834)&   0.186 (0.038)&   0.204 (0.040)\\
				&	& VSL& 16.643 (6.546)& 26.982 (17.330)& 78.357 (6.546)&   0.427 (0.128)&   0.175 (0.069)\\
				&	& GLASSO& 15.991 (6.361)& 25.518 (16.665)& 79.009 (6.361)&   0.434 (0.135)&  0.168 (0.067)\\
				&	&NPN-Copula& 18.491 (6.864)& 26.625 (16.889)& 76.509 (6.864)& 0.451 (0.121)&  0.195 (0.072)\\
				&	&NPN-Skeptic& 17.473 (7.408)& 31.348 (22.170)& 77.527 (7.408)& 0.406 (0.123)& 0.184 (0.078)\\
				&	& & & & & & \\
				
				&1000&PC-LPGM& 84.794 (3.416)& 25.238 (5.079)& 10.206 (3.416)& 0.772 (0.036)&  0.893 (0.036)\\
				&	&LPGM & 4.555 (6.512)&  0.910 (1.349)& 90.445 (6.512)&  0.792 (0.324)& 0.048 (0.069)\\
				&	&PDN & 78.487 (3.585)& 19.650 (4.209)& 16.513 (3.585)&    0.800 (0.041)&   0.826 (0.038)\\
				Hub		&	& VSL& 29.651 (12.504)&  0.063 (0.303)& 65.349 (12.504)& 0.998 (0.010)&   0.312 (0.132)\\
				&	& GLASSO& 29.341 (12.233)&  0.056 (0.262)& 65.659 (12.233) &  0.998 (0.009)&  0.309 (0.129)\\
				&	&NPN-Copula& 37.746 (15.112)&  0.048 (0.248)& 57.254 (15.112)& 0.999 (0.004)&  0.397 (0.159)\\
				&	&NPN-Skeptic& 35.476 (16.277)&  0.119 (0.412)& 59.524 (16.277)& 0.998 (0.007)& 0.373 (0.171)\\
				&	& & & & & & \\

				&2000&PC-LPGM& 94.949 (0.221)& 26.942 (5.566)&  0.051 (0.221)& 0.781 (0.036)& 0.999 (0.002)\\
				&	&LPGM & 7.145 (9.369)&  0.625 (0.805)& 87.855 (9.369)&0.620  (0.478)&   0.075 (0.099)\\
				&	&PDN & 93.073 (1.205)& 1.113 (1.094)&  1.927 (1.205)&    0.988 (0.012)&   0.980 (0.013)\\
				&	& VSL& 69.263 (15.639)&  0.013 (0.113)& 25.737 (15.639)&   1.000 (0.001)&   0.729 (0.165)\\
				&	& GLASSO& 68.647 (14.931)&  0.013 (0.113)& 26.353 (14.931)&   1.000 (0.001)&  0.723 (0.157)\\
				&	&NPN-Copula& 77.833 (8.985)&  0.000 (0.000)& 17.167 (8.895)&  1.000 (0.000)& 0.819 (0.095)\\
				&	&NPN-Skeptic& 74.987 (9.809)&  0.013 (0.000)& 20.013 (8.985)& 1.000 (0.001)& 0.789 (0.103)\\
				
				&	& & & & & & \\
				\hline		
				&	& & & & & & \\
				
				&200&PC-LPGM& 62.432 (5.030)&   8.656 (2.998)&  46.568 (5.030)& 0.879 (0.039)&  0.573  (0.046)\\
				&	&LPGM & 8.190 (2.370)&  0.120 (0.326)& 100.810 (2.370) & 0.987 (0.036)&  0.075 (0.025)\\
				&	&PDN & 52.007 (3.302)& 32.167 (5.283)& 56.993 (3.302)&   0.619 (0.049)&   0.477 (0.030)\\
				&	& VSL& 67.032 (8.241)&  26.932 (15.060)&  41.968 (8.241)&  0.735 (0.100)&  0.615 (0.076)\\
				&	& GLASSO& 64.736 (8.543)&  25.440 (15.001)&  44.264 (8.543)& 0.742 (0.106)& 0.594 (0.078)\\
				&	&NPN-Copula& 70.520 (7.514)&  23.344 (13.387)& 38.480 (7.514)&  0.769 (0.091)&  0.647 (0.069)\\
				&	&NPN-Skeptic& 68.956 (8.123)&  29.956 (18.522)&  40.044 (8.123)& 0.722 (0.105)& 0.633 (0.075)\\
				&	& & & & & & \\
				
				&1000	&PC-LPGM& 105.748 (1.504)&   8.752 (2.939)& 3.252 (1.504)&  0.924 (0.024)& 0.970 (0.014)\\
				&	&LPGM & 43.800 (31.795)& 0.300 (0.593)& 65.200 (31.795)& 0.996 (0.009)&  0.402 (0.292)\\
				&	&PDN & 63.020 (2.491)& 9.470 (1.332)& 45.980 (2.491)&  0.870 (0.016)&   0.578 (0.023)\\
				Random		&	& VSL& 102.676 (3.506)&   0.136 (4.123)&   6.324 (3.506)&  0.999 (0.003)& 0.942 (0.032)\\
				&	& GLASSO& 101.904 (4.123)&   0.152 (0.142)&   7.096  (4.123)&  0.999 (0.004)& 0.935 (0.038)\\
				&	&NPN-Copula& 104.820 (2.159)&   0.104 (0.319)&  4.180 (2.159)& 0.999 (0.003)&  0.962 (0.020)\\
				&	&NPN-Skeptic& 104.392 (2.237)&  0.192 (0.424)& 4.608 (2.237)&  0.998 (0.004)&  0.958 (0.021)\\
				&	& & & & & & \\

				&2000	&PC-LPGM& 106.724 (1.212)& 8.664 (2.855)&  2.276 (1.212)& 0.925 (0.023)&  0.979 (0.011)\\
				&	&LPGM & 69.900 (7.493)&  0.280 (0.577)& 39.100 (7.493)& 0.996 (0.008)&  0.641 (0.069)\\
				&	&PDN & 62.850 (2.243)&  9.230 (1.439)& 46.150 (2.243)&   0.872 (0.018)&   0.577 (0.021)\\
				&	& VSL& 106.836 (1.365)&   0.000 (0.000)&   2.164 (1.365)&  1.000 (0.000)&  0.980 (0.013)\\
				&	& GLASSO& 106.884 (1.350)&  0.000 (0.000)&   2.116 (1.350)& 1.000 (0.000)&  0.981 (0.012)\\
				&	&NPN-Copula& 107.376 (1.253)&   0.000 (0.000)&  1.624 (1.253)& 1.000 (0.000)&  0.985 (0.011)\\
				&	&NPN-Skeptic& 107.124 (1.322)&  0.000 (0.000)&  1.876 (1.322)&  1.000 (0.000)& 0.983 (0.012)\\
				
				\bottomrule
			\end{longtable}
		\end{scriptsize}
		
	\end{center}

	\begin{center}
		\begin{scriptsize}
			
			\begin{longtable}{l| l | l r r r r r r}
				\caption{Simulation results from 500 replicates of the undirected graphs shown in Figure \ref{graphtypes100} of the main paper for $p= 100$ variables with Poisson node conditional distribution and level of noise $\lambda_{noise} = 5$. Monte Carlo means (standard deviations) are shown for TP, FP, FN, PPV and Se.}  \\
				\label{table4-chap1}\\
				\toprule
				Graph&$n$	& Algorithm & TP & FP & FN & PPV & Se  \\
				\midrule
				\endfirsthead
				\multicolumn{8}{c}%
				{{\bfseries \tablename\ \thetable{} -- continued from previous page}} \\
				\toprule
				Graph&$n$	& Algorithm & TP & FP & FN & PPV & Se  \\
				\midrule	
				\endhead
				&200 &PC-LPGM& 7.780 (2.843)& 14.470 (3.705)& 91.220 (2.843)& 0.348 (0.100)&   0.079 (0.029)\\
				&	&LPGM & 10.188 (4.126)& 65.352 (20.496)& 88.812 (4.126)&   0.152 (0.127)&  0.103 (0.042)\\
				&	&PDN & 13.457 (3.164)& 94.817 (6.073)& 85.543 (3.164)&   0.125 (0.030)&   0.136 (0.032)\\
				& 	& VSL& 9.316 (4.895)&  22.496 (16.852)&  89.684 (4.895)&   0.332 (0.119)&  0.094 (0.049)\\
				&	& GLASSO& 9.052 (4.775)&  21.372 (16.016)&  89.948 (4.775)&   0.336 (0.120)&  0.091 (0.048)\\
				&	&NPN-Copula& 10.012 (5.255)&  21.924 (16.439)&  88.988 (5.255)& 0.359 (0.135)&  0.101 (0.053)\\
				&	&NPN-Skeptic& 9.868 (5.979)&  27.424 (24.698)&  89.132 (5.979)&  0.320 (0.132)& 0.100 (0.060)\\
				&	& & & & & & \\
				
				&1000&PC-LPGM& 75.130 (4.420)& 24.805 (4.647)& 23.870 (4.420)&  0.753 (0.038)& 0.759 (0.045)\\
				&	&LPGM &1.480 (1.696)& 1.892 (3.146)& 97.520 (1.696)&  0.574 (0.412)&  0.015 (0.017)\\
				&	&PDN & 52.827 (3.386)& 31.153 (5.108)& 46.173 (3.386)&  0.630 (0.049)&   0.534 (0.034)\\
				Scale-free	&	& VSL& 14.844 (6.389)&   0.044 (0.224)&  84.156 (6.389)&  0.998 (0.013)&   0.150 (0.065)\\
				&	& GLASSO& 14.936 (6.455)&   0.044 (0.224)&  84.064 (6.455)& 0.998 (0.013)&  0.151 (0.065)\\
				&	&NPN-Copula& 17.124 (7.494)&   0.040 (0.196)& 81.876 (7.494)&  0.998 (0.009)&  0.173 (0.076)\\
				&	&NPN-Skeptic& 16.708 (8.088)&   0.116 (0.419)&  82.292 (8.088)&  0.996 (0.014)&  0.169 (0.082)\\
				&	& & & & & & \\
				
				&2000&PC-LPGMC& 96.400 (1.515)& 26.500 (5.147)&  2.600 (1.514)&  0.786 (0.033)& 0.974 (0.015)\\
				&	&LPGM &2.800 (2.138)& 1.004 (1.455)& 96.200 (2.138)&  0.785 (0.266)&  0.028 (0.022)\\
				&	&PDN &  67.917 (2.591)&  4.413 (2.140)& 31.083 (2.591)&  0.939 (0.029)&   0.686 (0.026)\\
				&	& VSL& 24.579 (11.580)&  0.000 (0.000)& 74.421 (11.580)& 1.000 (0.000)& 0.255 (0.117)\\
				&	& GLASSO& 25.733 (12.171)&  0.000 (0.000)& 73.267 (12.171)& 1.000 (0.000)& 0.264 (0.123) \\
				&	&NPN-Copula& 33.672 (14.879)&  0.000 (0.000)& 65.328 (14.879)&  1.000 (0.000)& 0.335 (0.150)\\
				&	&NPN-Skeptic& 32.267 (15.750)&  0.000 (0.000)& 66.733 (15.750)& 1.000 (0.000)& 0.321 (0.159)\\
				&	& & & & & & \\
				\hline		
				&	& & & & & & \\

				&200&PC-LPGM& 2.690 (1.705)& 13.600 (4.476)& 92.310 (1.705)&   0.166 (0.101)& 0.028 (0.018)\\
				&	&LPGM & 0.444 (1.175)& 34.632 (33.612)& 94.556 (1.175)&  0.046 (0.152) & 0.005 (0.012)\\
				&	&PDN &6.630 (2.373)& 103.063 (4.902)&  88.370 (2.373)&    0.060 (0.021)&   0.070 (0.025)\\
				&	& VSL& 3.392 (2.233)&  23.688 (15.017)&  91.608 (2.233)& 0.143 (0.097)&  0.036 (0.024)\\
				&	& GLASSO& 3.304 (2.139)&  22.964 (14.511)&  91.696  (2.139)&  0.145 (0.099)&   0.035 (0.023)\\
				&	&NPN-Copula& 3.392 (2.189)&  21.852 (13.797)&  91.608 (2.189)& 0.150 (0.097)&  0.036 (0.023)\\
				&	&NPN-Skeptic& 3.108 (2.297)&  23.476 (19.474)& 91.892 (2.297)&  0.134 (0.091)& 0.033 (0.024)\\
				&	& & & & & & \\
				
				&1000	&PC-LPGM& 29.525 (3.837)& 24.635 (5.206)& 65.475 (3.837)& 0.548 (0.029)& 0.311 (0.020)\\
				&	&LPGM & 0.892 (2.246)& 1.076 (2.639)& 94.108 (2.246)&  0.439 (0.389)&   0.009 (0.012)\\
				&	&PDN & 23.427 (3.516)& 84.433 (5.305)& 71.573 (3.316)& 0.217 (0.033)&   0.247 (0.037)\\
				Hub		&	& VSL& 7.424 (4.075)&   1.428 (2.091)&  87.576 (4.075)&  0.884 (0.137)&  0.078 (0.043)\\
				&	& GLASSO& 7.364 (4.053)&   1.424 (2.095)&  87.636 (4.053)&  0.883 (0.138)&  0.078  (0.043)\\
				&	&NPN-Copula& 8.440 (4.399)&   1.392 (2.018)&  86.560 (4.399)&  0.895 (0.126)&  0.089 (0.046)\\
				&	&NPN-Skeptic& 8.208 (4.629)&   1.804 (2.291)&  86.792 (4.629)& 0.860 (0.134)& 0.086 (0.049)\\
				&	& & & & & & \\

				&2000&PC-LPGM& 65.025 (4.253)& 29.855 (5.473)& 29.975 (4.253)& 0.687 (0.041)& 0.684 (0.045)\\
				&	&LPGM & 0.392 (0.796)&  1.712 (1.971)& 94.608 (0.796)&  0.187 (0.339)&  0.004 (0.008)\\
				&	&PDN & 49.100 (4.566)& 54.997 (5.883)& 45.900 (4.566)&   0.472 (0.047)&   0.517 (0.048)\\
				&	& VSL& 8.983 (6.782)&   0.068 (0.284)&  86.017 (6.782)&  0.996 (0.018)& 0.095 (0.071)\\
				&	& GLASSO& 8.924 (6.748)&   0.068 (0.284)&  86.076 (6.748)&  0.996 (0.018)& 0.094 (0.071)\\
				&	&NPN-Copula& 9.797 (7.547)&   0.042 (0.241)&  85.203 (7.547)& 0.998 (0.012)&  0.103 (0.079)\\
				&	&NPN-Skeptic& 9.305 (7.052)&   0.068 (0.284)&  85.695 (7.052)& 0.995 (0.020)&  0.098 (0.074)\\
				
				&	& & & & & & \\
				\hline		
				&	& & & & & & \\
				
				&200&PC-LPGM& 8.040 (2.884)& 14.805 (3.878)& 100.960 (2.884)& 0.350 (0.093)&  0.074 (0.026)\\
				&	&LPGM & 10.592 (4.318)& 69.316 (25.767)& 98.408 (4.318)&  0.175 (0.189)&  0.097 (0.040)\\
				&	&PDN & 13.573 (2.989)& 94.750 (5.987)& 95.427 (2.989)& 0.126 (0.029)&   0.125 (0.027)\\
				&	& VSL& 10.548 (5.213)&  22.848 (15.701)&  98.452 (5.213)&  0.353 (0.119)& 0.097 (0.048)\\
				&	& GLASSO& 10.160 (5.075)&  21.460 (14.871)&  98.840 (5.075)&  0.358 (0.119)& 0.093 (0.047)\\
				&	&NPN-Copula& 11.064 (5.327)&  22.136 (16.599)&  97.936 (5.327)&  0.382 (0.131)& 0.102 (0.049)\\
				&	&NPN-Skeptic&10.648 (6.242)&  26.632 (23.376)&  98.352 (6.642)& 0.341 (0.134)&0.098 (0.057)\\
				&	& & & & & & \\
				
				&1000&PC-LPGM& 81.055 (4.632)& 23.665 (4.941)& 27.945 (4.632)& 0.775 (0.038)& 0.744 (0.042)\\
				&	&LPGM &1.776 (2.675)&  3.196 (5.107)& 107.224 (2.675)& 0.397 (0.401) & 0.016 (0.025)\\
				&	&PDN & 53.207 (3.471)& 33.383 (10.084)& 55.793 (3.471)&   0.616 (0.046)&   0.488 (0.032)\\
				Random		&	& VSL&14.741 (6.294)&   0.022 (0.148)&  94.259 (6.294)& 0.999 (0.006)&  0.135 (0.058)\\
				&	& GLASSO& 14.741 (6.291)&   0.022 (0.148)&  94.259 (6.291) & 0.999 (0.006)& 0.135 (0.058)\\
				&	&NPN-Copula& 16.333 (7.249)&   0.022 (0.148)&  92.667 (7.249)& 0.999 (0.005)& 0.150 (0.067)\\
				&	&NPN-Skeptic&15.178 (7.307)&  0.044 (0.296)& 93.822 (7.307)& 0.998 (0.011)&  0.139 (0.067)\\
				&	& & & & & & \\

				&2000&PC-LPGM& 104.010 (1.992)&  24.370 (4.706)& 4.990 (1.992)& 0.811 (0.029)& 0.954 (0.018)\\
				&	&LPGM & 1.995 (1.800)&  1.260 (1.825)& 107.005 (1.880) &   0.671 (0.360)&  0.018 (0.017)\\
				&	&PDN & 65.093 (2.892)& 12.297 (1.837)& 43.907 (2.892)&   0.841 (0.021)&   0.597 (0.027)\\
				&	& VSL& 26.038 (12.457)&   0.000 (0.000)& 82.962 (12.457)& 1.000 (0.000)&  0.239 (0.114)\\
				&	& GLASSO& 26.327 (12.487)&  0.000 (0.000)& 82.673 (12.487)&  1.000 (0.000)& 0.242 (0.115)\\
				&	&NPN-Copula& 30.340 (14.496)&   0.000 (0.000)& 78.660 (14.496)&  1.000 (0.000)&  0.278 (0.133)\\
				&	&NPN-Skeptic& 28.474 (14.777)&  0.000 (0.000)&  80.526 (14.777)&  1.000 (0.000)& 0.261 (0.136)\\
				
				\bottomrule
			\end{longtable}
		\end{scriptsize}
		
	\end{center}

	\vskip 0.2in
	\bibliographystyle{natbib}
	\bibliography{NguyenChiogna_arxiv}
	
\end{document}